\newtheorem{theorem}{Theorem}
\newtheorem{definition}{Definition}
\newtheorem{lemma}{Lemma}
\newtheorem{proposition}{Proposition}
\newtheorem{assumption}{Assumption}
\newcommand*{\ov}[1]{%
	$\m@th\overline{\raisebox{0pt}[1.3\height]{#1}}$%
}
\def\@copyright{\relax}
\def\BibTeX{{\rm B\kern-.05em{\sc i\kern-.025em b}\kern-.08emT\kern-.1667em\lower.7ex\hbox{E}\kern-.125emX}}
\begin{document}

%
\title{Learning-NUM: Network Utility Maximization with Unknown Utility Functions and Queueing Delay}

%
 

%

\author{Xinzhe Fu}
\affiliation{%
	\institution{LIDS, Massachusetts Institute of Technology}
	\country{USA}}

\author{Eytan Modiano}
\affiliation{%
	\institution{LIDS, Massachusetts Institute of Technology}
	\country{USA}
}
%
\begin{abstract}
	Network Utility Maximization (NUM) studies the problems of allocating traffic rates to  network users in order to maximize the users' total utility subject to network resource constraints. In this paper, we propose a new NUM framework, Learning-NUM, where the users' utility functions are unknown apriori and the utility function values of the traffic rates can be observed only after the corresponding traffic is delivered to the destination, which means that the utility feedback experiences \textit{queueing delay}. 
	The goal is to design a policy that gradually learns the utility functions and makes rate allocation and network scheduling/routing decisions so as to maximize the total utility obtained over a finite time horizon $T$. In addition to unknown utility functions and stochastic constraints, a central challenge of our problem lies in the queueing delay of the observations, which may be unbounded and depends on the decisions of the policy.
	We first show that the expected total utility obtained by the best dynamic policy is upper bounded by the solution to a static optimization problem. Without the presence of feedback delay, we design an algorithm based on the ideas of gradient estimation and Max-Weight scheduling. To handle the feedback delay, we embed the algorithm in a parallel-instance paradigm to form a policy that achieves $\tilde{O}(T^{3/4})$-regret, i.e., the difference between the expected utility obtained by the best dynamic policy and our policy is in $\tilde{O}(T^{3/4})$. Finally, to demonstrate the practical applicability of the Learning-NUM framework, we apply it to three application scenarios including database query, job scheduling and video streaming. We further conduct simulations on the job scheduling application to evaluate the empirical performance of our policy.
\end{abstract}

\maketitle
\section{Introduction}
Network Utility Maximization (NUM) has been a central problem in networking research for decades and has become a standard framework for making intelligent network resource allocation decisions. It has found a wide range of applications such as congestion control in the Internet \cite{cite:NUM-Kelly, cite:NUM-Low, cite:NUM-Chiang}, power allocation in wireless networks \cite{cite:NUM-NeelyPower} and job scheduling in cloud computing \cite{cite:NUM-Job1,cite:NUM-Job2}. 

As a network optimization paradigm, NUM studies the problems of user traffic admission control to maximize the users' total utility subject to network resource constraints. Previous works in NUM can be classified into two categories: static and stochastic.  In the static approach \cite{cite:NUM-Kelly, cite:NUM-Low, cite:NUM-Chiang, cite:NUM-Multipath, cite:NUM-Asu}, the traffic rates are modeled as flow variables, the bandwidth constraints are modeled as network flow constraints, and the analysis focuses on the convergence rates of the optimization algorithms. In the stochastic approach \cite{cite:NUM-NeelyFairness,cite:NUM-NeelyPower,cite:NUM-Longbo, cite:NUM-Job1,cite:NUM-Job2}, the traffic rates are determined by the time-average admitted traffic, the resource constraints are captured by the long-term stability of the stochastic queueing networks and the analysis focuses on the tradeoff between the long-term average utility and queue length.

Regardless of the differences in modeling and analysis, previous NUM results rest on a key assumption that the utility functions of network users are known. This is justified when the utility functions are simply optimization proxies for network performance criteria such as fairness \cite{cite:NUM-Kelly, cite:NUM-Low, cite:NUM-NeelyFairness}. However, when the utility represents more concrete quantities such as power and energy consumption \cite{cite:NUM-NeelyPower,cite:NUM-Longbo}, user satisfaction \cite{cite:NUM-Video1,cite:NUM-Video2} and job quality \cite{cite:NUM-Job1, cite:NUM-Job2, cite:NUM-Job3}, often we do not have prior knowledge of the utility functions, i.e., the functional relationship between the traffic rate of a user and its corresponding utility value is unknown in advance. 


In this paper, we propose a new NUM framework, Learning-NUM (L-NUM), where the utility functions are unknown but their values can be learned over the process of decision making. Specifically,  we consider a time-varying stochastic queueing network in discrete time, which captures both wireline and wireless networks. There are $K$ users, where user $k$ has a concave utility function $f_k$ that is initially unknown to the network operator. Each user has a corresponding source-destination pair in the network. At every time $t$, for each user $k$, the network operator injects a ``job'' of size $r_k(t)$ from the user's source to be delivered to the user's destination. The job size in our framework resembles the admitted traffic rate in the traditional NUM formulation. We will explain the connection between the two notions in Section \ref{sec:model}. Next, the operator chooses a network action that controls the routing and scheduling, which further determines the queue dynamics of the network. Finally, the utility value ($f_k(r_k)$) of a job (of size $r_k$) can only be observed after the job gets delivered to the destination as feedback from the user.

We study the problem of designing a policy that jointly determines the job sizes and network actions based on the utility function values learned from observations.
We define the utility achieved by a policy as the total utility of the jobs delivered by a finite time horizon $T$. This definition naturally enforces network resource constraints as the undelivered jobs in the queues at time $T$ are not counted towards the utility. We seek to design a policy with regret sublinear to $T$, where
regret \cite{cite:NUM-Qingkai} is defined as the gap between the expected utility of the policy and that of the optimal policy that has full knowledge of the utility functions in advance.
As a first step, we establish that the expected utility achieved by the optimal (dynamic) policy is upper bounded by $T$ times the optimal value of a static optimization problem, whose objective is the sum of the (unknown) utility functions and the constraints are implicitly given by the capacity region of the network.
This result provides an important insight that a policy achieves low regret if it can closely track the solution to the static optimization problem. 

While solving an optimization problem with unknown objective function is a common challenge faced in the online convex optimization literature \cite{cite:oco,cite:convexbandit}, our problem is further complicated by the facts that
the constraints, which essentially enforce network stability, are stochastic and unknown in advance (See {Section \ref{sec:upperbound}} for details). Thus, they cannot be handled by techniques in online optimization that require the feasibility region to be known in advance \cite{cite:convexbandit}. Moreover, the utility value can be observed only after the delivery of the job, which, in a First-In-First-Out network, happens after the delivery of the jobs injected before it. This means that the feedback in our problem experiences \textbf{queueing-style delay} that may be unbounded and depends on the decisions of the policy. Such delay evades existing techniques in the literature as they typically assume bounded or decision-independent delay \cite{cite:delayfeedback1,cite:delayfeedback2}.

 To deal with unknown utility functions and stochastic constraints,  we combine the ideas of gradient sampling \cite{cite:flaxman} and max-weight scheduling (back-pressure routing) \cite{cite:backpressure} to propose an online scheduling algorithm that works for the L-NUM problem without feedback delay. We next
  embed the algorithm into a parallel-instance paradigm to obtain a scheduling policy that can handle the queueing-style feedback delay and 
  achieve $\tilde{O}(T^{3/4})$-regret\footnote{$\tilde{O}(\cdot)$ hides logarithmic factors of $T$.}.
Finally, we show how to apply our framework to applications including database query \cite{cite:Database}, job scheduling \cite{cite:MLtask} and video streaming \cite{cite:NUM-Video1,cite:NUM-Video2}. We further empirically evaluate the performance of the proposed policy through simulations on job scheduling scenarios.


The rest of the paper is organized as follows. The model and formal definitions of the L-NUM framework are presented in Section \ref{sec:model}. In Section \ref{sec:upperbound}, we prove the upper bound on the optimal expected utility. In Section \ref{sec:policy}, we propose the online scheduling algorithm and the parallel-instance paradigm for the L-NUM framework. We further illustrate several applications of L-NUM in Section \ref{sec:application}. The empirical performance of the online scheduling policy is evaluated in Section \ref{sec:simulations}.  Finally, we conclude the paper with some future directions in Section \ref{sec:conclusion}.




\section{Model and Problem Formulation} \label{sec:model}
In this section, we specify the general network model and set up the framework of network utility maximization with unknown utility functions.
We consider a network $\mathcal{G}(\mathcal{V},\mathcal{E})$ with $\mathcal{V}$ being the set of nodes and $\mathcal{E}$ being the set of directed links. For each node $i\in\mathcal{V}$, we will denote its set of outgoing neighbors by $\mathcal{N}_i$.  There are $K$ classes of users in the network. Each user $k$ corresponds to a job class (also denoted by $k$), and is mapped to one source-destination pair $(s_{k},d_{k})$. Multiple job classes can be mapped to the same source-destination pair. Source $s_{k}$ sends class-$k$ jobs that get delivered to $d_{k}$ through the network. We will refer to the jobs sent from $s_k$ destined to $d_k$ as class-$k$ traffic. Each node $i\in\mathcal{V}$ has a queue $Q_i^k$ that buffers the incoming class-$k$ traffic of node $i$. The network operates in discrete time $t=1,\ldots,T$, where $T$ is the specified time horizon. At each time $t$, the network is in state $\omega(t) \in\mathcal{W}$, with $\mathcal{W}$ denoting the set of possible network states. In concrete applications, the network states may correspond to channel states of links, service states of servers, or simply a placeholder when the network is static with only one state. 
We assume $\omega(t)$'s is a sequence of i.i.d. random element with $\mathbb{P}(\omega(t)=\omega)=p(\omega)$. However, the distribution of $\omega(t)$ is unknown to the network operator.

\subsection{Traffic Model and Network Dynamics}
At each time $t$, the network operator first observes the current network state $\omega(t)$. It next chooses job size $r_k(t)$ for each class and sends a job of size $r_k(t)$ to the buffer $Q^{k}_{s_{k}}$, where $r_k(t)$ is a real value that satisfies $0\le r_k(t)\le B$. The job size corresponds to the amount of admitted traffic at a time slot. For example, as we will demonstrate in Section \ref{sec:application}, in video streaming, the job size represents the resolution of a video chunk sent to the user. We adopt this discrete notion of job size rather than the continuous notion of traffic rate in the traditional NUM framework because job size is more suitable for our finite-horizon discrete-time framework.
Finally, the network operator chooses a network action $\bm{x}(t)\in\mathcal{X}$ that incorporates the routing and scheduling decisions of the network. The feasible set of actions $\mathcal{X}$ can be discrete or continuous. For each $\bm{x}\in\mathcal{X}$, under network state $\omega$, we use $A_{ij}^k(\omega,\bm{x})$ to denote the offered transmission rate on link $(i,j)$ for class $k$, i.e., the amount of class-$k$ traffic that can be sent from node $i$ to node $j$. Each link transmits traffic in a First-In-First-Out (FIFO) basis. $A_{ij}^k(\omega,\bm{x})$'s are assumed to be non-negative and upper bounded by $A$ for all $\omega$ and $\bm{x}$. 
Based on the definitions above,
the dynamics of the queues can be written following the Lindley recursion:
\begin{align}
Q^{k}_{s_{k}}(t+1)&=[Q^{k}_{s_{k}}(t)+r_k(t)-\sum_{j\in\mathcal{N}_{s_k}}A_{s_kj}^k(\omega(t),\bm{x}(t))]^+, \forall k,\label{eq:dynamics1}\\
Q^k_{d_k}(t)&=0,\quad\forall k,\label{eq:dynamics2}\\
Q_i^k(t+1)&=[Q_i^k(t)+\sum_{j:i\in \mathcal{N}_j}A_{ji}^k(\omega(t),\bm{x}(t))-\sum_{j\in \mathcal{N}_i}A_{ij}^k(\omega(t),\bm{x}(t))]^+,\forall i\neq s_k,d_k. \label{eq:dynamics3}
\end{align}
We define $\Lambda(\omega):=\{(A(\omega,\bm{x}))_{ij}^k, \bm{x}\in \mathcal{X} \}$ as the set of feasible transmission rate vectors under network state $\omega$. Note that the network operator can observe $\Lambda(\omega(t))$ at $t$ but does not know the distribution of $\omega(t)$.\footnote{We assume that $\Lambda(\omega)$ is downward closing in the sense that if $\bm{\lambda} \in\Lambda(\omega)$, then any vector $\bm{\lambda}'$ that equals zero in one coordinate and  equals $\bm{\lambda}$ in all other coordinates is also in $\Lambda$.} Finally, we define $Cap(\mathcal{G})$ as the set of feasible rate vectors $(r_1,\dots,r_K)$, i.e., there exists $\{\lambda(\omega)\}_{ij}^k\in Conv(\Lambda(\omega))$ with
\begin{align*}
&\forall k,\ r_k\le \sum_{\omega\in\mathcal{W}}\sum_{j\in\mathcal{N}_{s_k}}p(\omega)\lambda(\omega)_{ij}^k,\\ &\forall i\in\mathcal{V}, \sum_{\omega\in\mathcal{W}}\sum_{j:i\in \mathcal{N}_j}p(\omega)\lambda(\omega)_{ji}^k \le \sum_{\omega\in\mathcal{W}}\sum_{j\in \mathcal{N}_i}p(\omega)\lambda(\omega)_{ij}^k,
\end{align*}
where $conv(\Lambda(\omega))$ is the convex hull of $\Lambda(\omega)$.
$Cap(\mathcal{G})$ resembles the network capacity region in the traditional infinite-horizon network utility maximization problem \cite{cite:NUM-NeelyFairness}, i.e., the set of traffic rate vectors that can be supported by the network. However, as we consider a finite-horizon setting here, the capacity region here does not exactly characterize the set of job-size vectors that the network can support. Nevertheless, the close connection between the two concepts will be revealed in \textbf{Section \ref{sec:upperbound}}. Furthermore, to prevent trivializing the problem,  we enforce the condition on $Cap(\mathcal{G})$ that it has non-empty interior, i.e., there exists $\eta>0$ such that $(\eta,\ldots,\eta)\in Cap(\mathcal{G})$.

\subsection{Utility Model}
Each job class (user) $k$ is associated with some underlying utility function $f_k$. The utility functions are initially unknown. When a class-$k$ job of size $r_k$ gets delivered to $d_{k}$, we observe and obtain utility of value $f_k(r_k)$. Note that this implies that the utility feedback of each job experiences queueing-style delay, i.e., the time from injecting a job into the network to observing its utility value is equal to the time that the job spends in the network (queues). See Figure \ref{fig:queueingdelay} for further illustration of the feedback delay.
 
For each traffic class $k$, we assume its underlying utility function has the following properties:
\begin{enumerate}
	\item $f_k$ is monotonically non-decreasing and concave.
	\item $f_k$ is bounded on $[0,B]$, i.e., $\forall r\in[0,B], f_k(r)\le D$ for some constant $D$.
	\item $f_k$ is $L$-Lipschitz continuous, i.e., $\forall r_1,r_2\in [0,B]$, $|f_k(r_2)-f_k(r_1)|\le L\cdot|r_2-r_1|$.
\end{enumerate}

\subsection{Problem Formulation}
Given the network $\mathcal{G}$ and time-horizon $T$, we seek to find a scheduling policy that determines the sizes of the jobs sent by the sources and the network actions that maximizes the total utility obtained at the end of the horizon $T$. Formally, let $\Pi$ be the collection of \textit{admissible policies} that make scheduling decisions at time $t$ based on observations obtained before time $t$. Policies in $\Pi$ do not have access to the underlying utility functions or the distribution of network state, but can learn them through observations of utility values and instantiated network state. We further let $\bar{\Pi}$ be the collection of all policies, including non-admissible policies that know the underlying utility functions and the network state distribution. For a policy $\pi$, we define $U(\pi,T)$ to be the total utility obtained from jobs that are delivered by time $T$ under $\pi$. Note that $U(\pi,T)$ is a random variables, the randomness of which comes from the time-varying network state and the (possible) inherent randomness of the scheduling policy. We adopt the notion of regret from the online learning literature as the measure of quality of scheduling policies.
\begin{definition}[Regret]
	The regret of scheduling policy $\pi$ is defined as 
	\[
	R(\pi,T) = \sup_{\pi^*\in \bar{\Pi}}\mathbb{E}[U(\pi^*,T)]-\mathbb{E}[U(\pi,T)],
	\]
\end{definition}
The regret $R(\pi,T)$ measures the gap between the expected utility obtained under $\pi$ and the maximum utility achieved by any (even non-admissible) policy for the given instance.

Based on the above preliminaries, we formally pose the problem of network utility maximization with unknown utility functions, which we will refer to as the L-NUM (Learning-NUM) problem, as one that asks for an admissible scheduling policy with low regret.
\begin{definition}[The L-NUM Problem]
	The L-NUM problem seeks an admissible policy $\pi$ with sublinear regret, i.e., $\lim\limits_{T\rightarrow\infty}\frac{R(\pi,T)}{T}=0$.
\end{definition} 
\textbf{Remark:} (i). A policy that has sublinear regret is \textit{asymptotic optimal}, since the gap between time-average utility achieved by the policy and that of the optimal goes to zero. (ii). Although the regret does not explicitly depend on the queue backlogs at the end of the horizon $T$, the queue backlogs are implicitly accounted for, since the utility $U(\pi,T)$ does not include the jobs that are still in the queue at time $T$.


\begin{figure}
	\centering
	\includegraphics[width=0.8\linewidth]{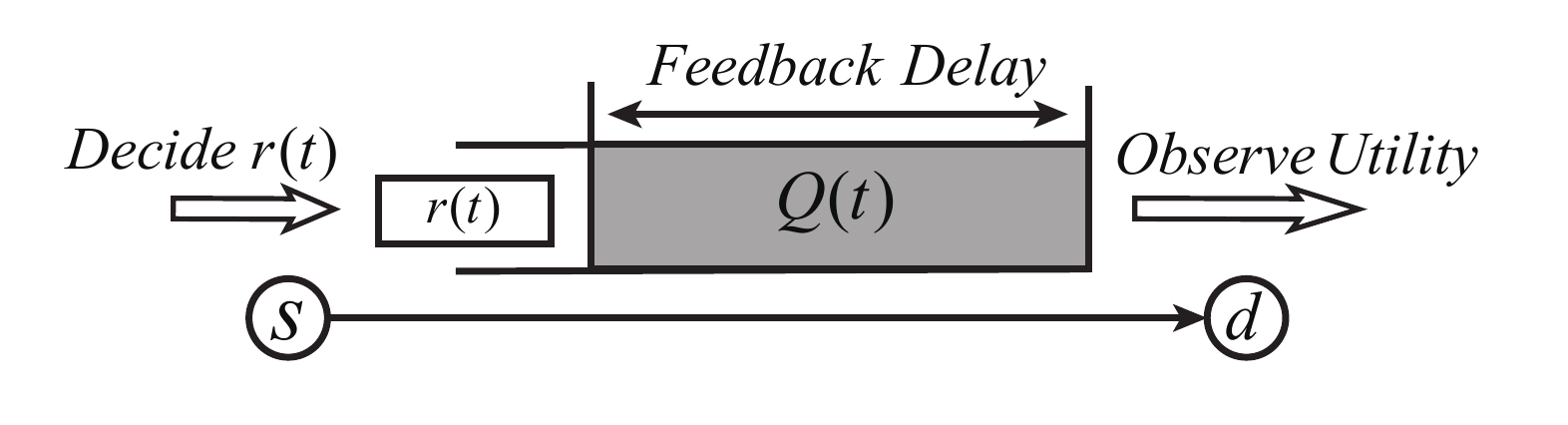}
	\vspace{-4mm}
	\caption{A single-queue example illustrating the queueing-style feedback delay in the L-NUM framework.}
	\vspace{-5mm}
	\label{fig:queueingdelay}
\end{figure}

\section{Upper Bound on the Optimal Utility} \label{sec:upperbound}
If the utility functions are known in advance, L-NUM becomes a finite-horizon stochastic optimization problem.
Typically, the optimal policy for the problem is a dynamic programming-based policy that is intractable and difficult to compare to. Therefore, in this section, we relate the expected utility obtained by the best policy in $\bar{\Pi}$
to the optimal value of a static optimization problem, which motivates the design and analysis of the admissible scheduling policy we propose.  
The optimization problem $\mathcal{P}$ is defined as follows:
\begin{align}
\mathcal{P}:\ &\max_{\{r\}_k}  \sum_{k=1}^Kf_k({r}_k)\\
\textbf{s.t.}\quad & (r_1,\ldots,r_K)\in Cap(\mathcal{G})\label{ieq:capacity}\\
& {r}_k\in [0,B],\quad \forall k. \label{ieq:bound}
\end{align}
Intuitively, the optimization problem characterizes a static version of the L-NUM problem over the job-size variables.
The decision variables $\{r_k\}$'s can be interpreted as average size of jobs of class $k$. $\mathcal{P}$ seeks to maximize the total utility obtained by $\{r_k\}$ such that the vector lies inside the network capacity region. Note that $Cap(\mathcal{G})$ is a convex set over $\{r_k\}$. Hence, $\mathcal{P}$ is a convex optimization problem.

Based on the optimization problem $\mathcal{P}$, we are ready to state the main result of this section, i.e., the optimal value of $\mathcal{P}$ multiplied by the time horizon upper-bounds the maximum expected utility over all policies in $\bar{\Pi}$.
\begin{theorem}\label{thm:upperbound}
	$\sup_{\pi^*\in\bar{\Pi}}\mathbb{E}[U(\pi^*,T)]\le T\cdot OPT(\mathcal{P})$.
\end{theorem}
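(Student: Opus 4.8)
The plan is to relate the (random) utility of an arbitrary policy $\pi^*\in\bar{\Pi}$ to the static program $\mathcal{P}$ through two time-averaged surrogate quantities: a per-class \emph{average delivered rate}, and a per-state \emph{average transmission-rate vector}. Writing $V_k$ for the total volume of class-$k$ traffic that $\pi^*$ delivers to $d_k$ by time $T$, and setting $\bar{r}_k:=\mathbb{E}[V_k]/T$, I would establish two pillars: (i) $\mathbb{E}[U(\pi^*,T)]\le T\sum_k f_k(\bar{r}_k)$, and (ii) $(\bar{r}_1,\dots,\bar{r}_K)\in Cap(\mathcal{G})$ with each $\bar{r}_k\in[0,B]$. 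Combining (i)--(ii) with the definition of $OPT(\mathcal{P})$ yields $\mathbb{E}[U(\pi^*,T)]\le T\sum_k f_k(\bar{r}_k)\le T\cdot OPT(\mathcal{P})$, and taking the supremum over $\pi^*\in\bar{\Pi}$ finishes the theorem.

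For pillar (i) I would argue class by class. Only delivered jobs contribute to $U(\pi^*,T)$, so if $\mathcal{D}_k$ denotes the (random) set of slots whose class-$k$ job is delivered by $T$, the class-$k$ contribution is $\sum_{t\in\mathcal{D}_k}f_k(r_k(t))$. Assigning the fictitious size $0$ to every undelivered slot and using $f_k(0)\ge 0$ together with concavity (Jensen across the $T$ slots) gives $\sum_{t\in\mathcal{D}_k}f_k(r_k(t))\le T\,f_k(V_k/T)$. Summing over $k$, taking expectations, and applying Jensen once more to the concave $f_k$ (so that $\mathbb{E}[f_k(V_k/T)]\le f_k(\mathbb{E}[V_k]/T)=f_k(\bar{r}_k)$) delivers pillar (i). The bound $\bar{r}_k\le B$ is immediate from $r_k(t)\le B$ and $V_k\le\sum_t r_k(t)$.

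Pillar (ii) is where the finite-horizon subtlety lives, and I expect it to be the main obstacle. For each state $\omega$ with $p(\omega)>0$, define the conditional average transmission vector $\lambda(\omega)_{ij}^k:=\frac{1}{Tp(\omega)}\mathbb{E}\big[\sum_{t=1}^T A_{ij}^k(\omega(t),\bm{x}(t))\,\mathbbm{1}(\omega(t)=\omega)\big]$; since each summand lies in $\Lambda(\omega)$ and $\lambda(\omega)$ is a normalized average of such vectors, $\lambda(\omega)\in Conv(\Lambda(\omega))$. The tempting move is to plug the raw averages $\sum_\omega p(\omega)\lambda(\omega)$ directly into the flow-conservation constraints of $Cap(\mathcal{G})$, but the Lindley recursion \eqref{eq:dynamics3} only yields $\text{(avg inflow)}_i\le\text{(avg outflow)}_i+\frac{1}{T}\mathbb{E}[Q_i^k(T+1)]$, and the leftover-queue term need not be negligible. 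I would circumvent this by accounting for \emph{delivered} traffic only: decompose $V_k$ into a per-link, time-averaged, in-expectation delivered-flow $\mu_{ij}^k$. Delivered traffic traverses complete source-to-destination paths within $[1,T]$, so it conserves flow \emph{exactly} at every intermediate node and satisfies $\sum_{j\in\mathcal{N}_{s_k}}\mu_{s_kj}^k=\bar{r}_k$ at the source; moreover $\mu_{ij}^k\le\sum_\omega p(\omega)\lambda(\omega)_{ij}^k$, since delivered traffic is a sub-collection of all transmitted traffic. Invoking the downward-closedness of $\Lambda(\omega)$ (hence of $Conv(\Lambda(\omega))$) from the footnote, I reduce each $\lambda(\omega)$ coordinatewise to some $\lambda'(\omega)\in Conv(\Lambda(\omega))$ whose state-averaged value on each link equals $\mu$; this $\lambda'$ witnesses $(\bar{r}_1,\dots,\bar{r}_K)\in Cap(\mathcal{G})$, establishing pillar (ii).

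The chief difficulty, then, is isolating the delivered traffic so that flow conservation becomes exact, rather than working with all transmissions where the residual queues $Q_i^k(T+1)$ obstruct the conservation inequalities; the downward-closedness assumption is precisely what lets the delivered-flow $\mu$ be fit underneath the realized transmission capacities. A secondary point to flag is the use of $f_k(0)\ge 0$ in pillar (i), which holds under the natural normalization that zero service yields non-negative utility.
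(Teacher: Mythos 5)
Your proposal is correct and follows essentially the same route as the paper's proof: average the delivered job sizes over time, apply Jensen's inequality twice (across slots and across the randomness), express the average transmissions as state-conditioned convex combinations in $Conv(\Lambda(\omega))$, and use downward-closedness to certify membership in $Cap(\mathcal{G})$. Your delivered-flow decomposition $\mu_{ij}^k$ with exact conservation at intermediate nodes is in fact a more careful rendering of the paper's step involving the realized transmissions $\tilde{A}_{ij}^k$ (whose stated inflow-outflow inequality glosses over the leftover-queue traffic you explicitly isolate), so this is a presentational refinement rather than a different argument.
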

\begin{proof}
		The main idea of the proof is that, for any given policy, we first take certain averages of the job sizes of each traffic class and then show that the averages satisfy the constraints of $\mathcal{P}$. Next, by the concavity of the underlying utility functions, their corresponding value of the objective function is no less than the expected utility of the policy.
	
	For ease of notations, we prove the theorem for deterministic policies in $\bar{\Pi}$. The case of randomized policies follows similarly.
	Consider an arbitrary policy $\pi^*\in \bar{\Pi}$ and a sample path $\theta$ of its execution on the problem instance. For each traffic class $k$, let $r_k(1,\theta),\ldots,r_k(T,\theta)$ be the size of the jobs specified by $\pi^*$ over the time horizon $T$. Define $\tilde{r}_k(t,\theta)={r}_k(t,\theta)$ if the $t$-th job is delivered by time $T$ and $\tilde{r}_k(t,\theta)=0$ otherwise. Let $\bm{x}(t,\theta)$ be the network action chosen by $\pi^*$ at $t$ and let $\omega(t,\theta)$ be the network state at $t$ under $\theta$. Based on the utility model we have that the utility achieved by $\pi^*$ on sample path $\theta$ is equal to $\sum_{k=1}^K\sum_{t=1}^Tf_k(\tilde{r}_k(t,\theta))$.
	Let $\bar{r}_k(\theta)=\frac{1}{T}\sum_{t=1}^T\tilde{r}_k(t,\theta)$. Since the underlying utility functions are concave, we have
	\begin{align} 
	\sum_{k=1}^K\sum_{t=1}^Tf_k(\tilde{r}_k(t,\theta)) \le T\sum_{k=1}^Kf_k(\bar{r}(\theta)). \label{ieq:jensen1}
	\end{align}
Furthermore, let $\tilde{A}_{ij}^k(\omega(t,\theta),\bm{x}(t,\theta))$ be the realized transmission rate on link $(i,j)$ for class-$k$ at $t$. The realized transmission $\tilde{A}_{ij}^k$ is equal to the offered transmission ${A}_{ij}^k$ when the queue length is greater than the offered transmission, and the realized transmission is smaller otherwise.
From the queue dynamics (Equations \ref{eq:dynamics1}, \ref{eq:dynamics2} and \ref{eq:dynamics3}), we obtain that
	\begin{align}
&\forall k,\quad T\bar{r}_k(\theta)\le\sum_{t=1}^T\sum_{j\in\mathcal{N}_{s_k}}\tilde{A}_{s_kj}^k(\omega(t,\theta),\bm{x}(t,\theta)),\label{ieq:dynamics1}\\
&\forall i\neq s_k,d_k, 	\sum_{t=1}^T\sum_{j:i\in \mathcal{N}_j}\tilde{A}_{ji}^k(\omega(t,\theta),\bm{x}(t,\theta))\le\sum_{t=1}^T\sum_{j\in \mathcal{N}_i}\tilde{A}_{ij}^k(\omega(t,\theta),\bm{x}(t,\theta)).\label{ieq:dynamics2}
	\end{align}
	Define $\hat{p}_{\theta}(\omega),\omega\in\mathcal{W}$ as the empirical distribution of $\omega$,
	\begin{align*}
	\hat{p}_{\theta}(\omega):= \frac{\sum_{t=1}^T\mathbbm{1}\{\omega(t,\theta)=\omega \}  }{T}.
	\end{align*}
	It follows from (\ref{ieq:dynamics1}), (\ref{ieq:dynamics2}) that for each $\omega\in\mathcal{W}$, there exists $(\tilde{\lambda}(\omega,\theta))_{ij}^k \in Conv(\Lambda(\omega))$ such that
		\begin{align*}
&\forall k,\quad	\bar{r}_k(\theta)\le\sum_{\omega\in\mathcal{W}} \sum_{j\in\mathcal{N}_{s_n}} \hat{p}_{\theta}(\omega)\tilde{\lambda}_{s_kj}^k(\omega,\theta),\\
&\forall i\neq s_k,d_k,	\sum_{\omega\in\mathcal{W}} \sum_{j:i\in\mathcal{N}_j} \hat{p}_{\theta}(\omega)\tilde{\lambda}_{ji}^k(\omega,\theta) \le \sum_{\omega\in\mathcal{W}} \sum_{j\in\mathcal{N}_i} \hat{p}_{\theta}(\omega)\tilde{\lambda}_{ij}^k(\omega,\theta).
	\end{align*}
	Moreover, as $\Lambda(\omega)$ is downward-closing, we further have that there exists $({\lambda}(\omega,\theta))_{ij}^k \in Conv(\Lambda(\omega))$ such that
	\begin{align*}
	&\forall k,\quad	\bar{r}_k(\theta)=\sum_{\omega\in\mathcal{W}} \sum_{j\in\mathcal{N}_{s_k}} \hat{p}_{\theta}(\omega){\lambda}_{s_kj}^k(\omega,\theta),\\
	&\forall i\neq s_k,d_k,\quad	\sum_{\omega\in\mathcal{W}} \sum_{j:i\in\mathcal{N}_j} \hat{p}_{\theta}(\omega){\lambda}_{ji}^k(\omega,\theta) = \sum_{\omega\in\mathcal{W}} \sum_{j\in\mathcal{N}_i} \hat{p}_{\theta}(\omega){\lambda}_{ij}^k(\omega,\theta).
	\end{align*}

	Taking expectation over $\theta$, we have $(\mathbb{E}_\theta[\bar{r}_1(\theta)],\ldots,\mathbb{E}_\theta[\bar{r}_1(\theta)])\in Cap(\mathcal{G})$.
	Moreover, it is easy to see that $0\le \mathbb{E}_\theta[\bar{r}_k(\theta)] \le B$ for all $k$. Therefore, the vector $(\mathbb{E}_\theta[\bar{r}_1(\theta)],\ldots,\mathbb{E}_\theta[\bar{r}_1(\theta)])$ is feasible to $\mathcal{P}$.
	 Hence, $OPT(\mathcal{P})\ge \sum_{k=1}^Kf_k(\mathbb{E}_\theta[\bar{r}_k(\theta)])$.
	Invoking the concavity of $f_k$'s again, by Jensen's inequality, we have
	for all $k$, $f_k(\mathbb{E}_\theta[\bar{r}_k(\theta)]) \ge \mathbb{E}_{\theta}[f_k(\bar{r}_k(\theta))]$. Combining this with (\ref{ieq:jensen1}), we obtain
	\begin{align}
&	OPT(\mathcal{P})\ge \sum_{k=1}^Kf_k(\mathbb{E}_\theta[\bar{r}_k(\theta)]) \ge \mathbb{E}\left[\sum_{k=1}^K f_k(\bar{r}_k(\theta))\right] \nonumber\\\ge& \frac{1}{T}\mathbb{E}\left[ \sum_{k=1}^K\sum_{t=1}^Tf_k(\tilde{r}_k(t,\theta)) \right],
	\end{align}
	which concludes the proof.
\end{proof}	
It is worth pointing out that Theorem \ref{thm:upperbound} does not imply that the optimal policy is a static one that assigns the job sizes according to the solution to the optimization problem $\mathcal{P}$. Such a policy would not achieve an expected utility of $T\cdot OPT$ since the expected number of jobs delivered is typically less than $T$. Indeed, due to the stochastic network dynamics, a portion of the jobs will still remain in the queues by the end of the time horizon.
Despite that, the theorem does provide the insight that a policy achieves low regret if it can closely approximate the solution to $\mathcal{P}$ at each time slot. As the objective function of $\mathcal{P}$ is unknown, the problem has similar flavor to online/zeroth-order optimization \cite{cite:oco,cite:banditiot,cite:banditcomplexity}. However, in the L-NUM problem we are facing two additional challenges. First, the feasibility region in the L-NUM is stochastic and not explicitly given as the distribution of network states is unknown. Thus, we cannot rely on method that requires the feasibility region to be known in advance \cite{cite:convexbandit}.
Second, the queueing-style delay of the feedback compromises the policy's ability to adjust based on utility observations. As the delay is action-dependent and may be unbounded, it also poses more stringent requirement on controlling the network queue lengths.

 \section{Online Scheduling Policy}\label{sec:policy}
 In this section, we introduce the scheduling policy we propose for the L-NUM framework -- the Parallel Gradient Sampling Max-Weight (P-GSMW) policy. The P-GSMW policy is composed of embedding an algorithm (called Gradient Sampling Max-Weight, GSMW) that makes job-size and scheduling decisions based on immediate feedback (no delay) into a parallel-instance paradigm that handles the feedback delay.
The GSMW algorithm essentially combines the ideas of drift-plus-penalty optimization \cite{cite:neely1}, gradient sampling \cite{cite:flaxman}, and Max-Weight scheduling. The parallel-instance paradigm invokes multiple parallel instances of the GSMW algorithm such that each instance essentially runs in a no-delay setting.
In the following, we first introduce the GSMW algorithm, and then combine it with the parallel-instance paradigm. Finally, we provide discussion on the challenges posed by the feedback delay.
   
   \subsection{The GSMW Algorithm}   
   In the presentation of the GSMW algorithm, we assume a no-delay setting, i.e., the utility values of the jobs can be observed immediately after job-size decision. We will handle the feedback delay with the parallel-instance paradigm in subsequent sections.
   
  The GSMW algorithm (\textbf{Algoritm \ref{alg:policy}}) maintains a virtual job size variable $\hat{r}_k$ for each class $k$ and utilizes queue lengths to update the virtual job size variables and network actions. The $\hat{r}_k$'s are updated once every two slots,
  which essentially divides the time horizon into epochs of size two (without loss of generality, we assume the horizon $T$ to be even). 
  For simplicity of notations, we will assume that the network state remains unchanged for each epoch and refer to an epoch as a time slot indexed by $t \in \{1,\ldots, T\}$ for the rest of the paper, i.e., at each slot, we need to make scheduling decision and job-size decision for two incoming jobs of each class.\footnote{This assumption is purely made for notational convenience. Our results can be straightforwardly adapted to the original setting without the assumption.}

  At each slot $t\in \{1,\ldots,T\}$, the network action is chosen according to a Max-Weight-like rule (Line \ref{alg:maxweight}).
   The decisions on job size are made based on the virtual job size variables at the corresponding epoch.  
  The updates of virtual job size variables are determined by gradient estimates of the utility functions and queue lengths. Since the utility functions are unknown, GSMW constructs the gradient estimates using observations of function values. Specifically, at slot $t$, each source $s_k$ injects a first job of size $\hat{r}_k(\tau)+\delta$ and a second job of size  $\hat{r}_k(\tau)-\delta$  for each $k\in n$ and obtains the feedback of value $f_k(\hat{r}_k(\tau)+\delta)$ and $f_k(\hat{r}_k(\tau)-\delta)$ (Lines \ref{alg:sample1}, \ref{alg:sample2}).  The two feedback values obtained are combined to form the gradient estimate of $f_k$ at $\hat{r}_k(\tau)$ (Line \ref{alg:gradient}). The gradient estimate is then fed into the update of the virtual variable $\hat{r}_k$ (Line \ref{alg:update}). The projection step  $\mathcal{P}_{[\delta,B-\delta]}$ of Line \ref{alg:update}, defined as the projection on to interval $[\delta,B-\delta]$ by the Euclidean norm, is to ensure that $\hat{r}_k(\tau)+\delta$ and $\hat{r}_k(\tau)-\delta$ always lie in the domain $[0,B]$.  Here, parameter $V$ controls the relative weights of gradient and queue length while parameter $\alpha$ determines the step size.


   \begin{algorithm}
   	\caption{The Gradient Sampling Max-Weight Algorithm}
   	\begin{algorithmic}[1] \label{alg:policy}		
   		\REQUIRE{Network $\mathcal{G(V,E)}$, parameters $V,\delta,\alpha$}
   		
   		\STATE \textbf{Initialize: } $\bm{x}(0)\in\mathcal{X}, \hat{r}_k(0)=\delta$.\\
   		\FOR{$t=1,2,\ldots,T$}
   			\STATE
   		${\bm{x}}(t):=\arg\max_{\bm{x}\in\mathcal{X}}\sum_{i\in V}\sum_{k=1}^K{A}_{ij}^k(\omega(t),\bm{x})[Q_i^k(t)-Q_j^k(t)] $ \label{alg:maxweight}
   		\FOR{$k=1,\ldots,K$}
   		\STATE $s_{k}$ injects job of size $\hat{r}_k(t)+\delta$ and observes ${f}_k(\hat{r}_k(t)+\delta)$. \label{alg:sample1}\\
   		\STATE $s_{k}$ injects job of size $\hat{r}_k(t)-\delta$ and observes ${f}_k(\hat{r}_k(t)-\delta)$. \label{alg:sample2}\\
   		\STATE $\hat{\nabla}f_k(\hat{r}_k(t)):=\frac{{f}_k(\hat{r}_k(t)+\delta)-{f}_k(\hat{r}_k(t)-\delta)}{2\delta}$\label{alg:gradient}\\
   		\ENDFOR
   		\STATE
   		Update queue lengths according to $r_k(t),\bm{x}(t)$.
   		\FOR{$k=1,\ldots,K$}
   			\STATE
   		$\hat{r}_k(t+1)$\\$:=\mathcal{P}_{[\delta,B-\delta]}\left[ \hat{r}_k(t)+\frac{1}{\alpha} (V\cdot \hat{\nabla}f_k(\hat{r}_k(t))-Q^{k}_{s_{k}}(t))  \right]$\label{alg:update}\\
   		\ENDFOR
 
   		\ENDFOR
   	\end{algorithmic}
   \end{algorithm}

 \subsection{The Parallel-Instance Paradigm}
  
  In order to handle the feedback delay, we design a parallel-instance paradigm that encapsulates the GSMW algorithm, and forms the Parallel-instance GSMW (P-GSMW) policy. The details of the P-GSMW policy are shown in \textbf{Algorithm \ref{alg:p-gsmw}}. Similar to the GSMW algorithm, the network action $\bm{x}(t)$ at each time slot is still determined by the Max-Weight rule. The key difference is that the paradigm maintains a set  of parallel instances of the GSMW algorithm, which we will refer to as the instance reservoir $\mathcal{I}$. Each instance can be in one of the two possible status: FRESH and STALE. FRESH status means that the instance has obtained the corresponding utility feedback and can perform updates on the virtual job-size variables (Line \ref{alg:update_p} of \textbf{Algorithm \ref{alg:p-gsmw}}); STALE status means that the instance is still waiting for utility feedback. We use $\{r^I_k(t)\}$ to denote the virtual job size variables maintained by instance $I$.
  When we need to make job size decisions, if there is a FRESH instance available in the reservoir, we ``invoke'' the instance by performing updates and deciding on the job sizes based on the updated virtual job-size variables of the instance (Line \ref{alg:update_p} of \textbf{Algorithm \ref{alg:p-gsmw}}). If there are multiple FRESH instances, we select one arbitrarily. We then change the instance's status to STALE (Line \ref{alg:status}). If there is no FRESH instance available, we initialize a new instance, add it to $\mathcal{I}$ (Lines 9 and 10). The virtual job-size variables of instances that are not invoked remains unchanged (Line \ref{alg:same}).
   Upon delivery of jobs, we observe utility values and feed them to the corresponding instances. If an instance has all the utility observations available for the jobs injected when it was last invoked, we change its status from ``STALE'' to ''FRESH'' (Line \ref{alg:collect}).  We further illustrate the parallel-instance paradigm with a single-queue example in Figure \ref{fig:pgsmw}.
   \begin{figure}
   	\centering
   	\includegraphics[width=0.8\linewidth]{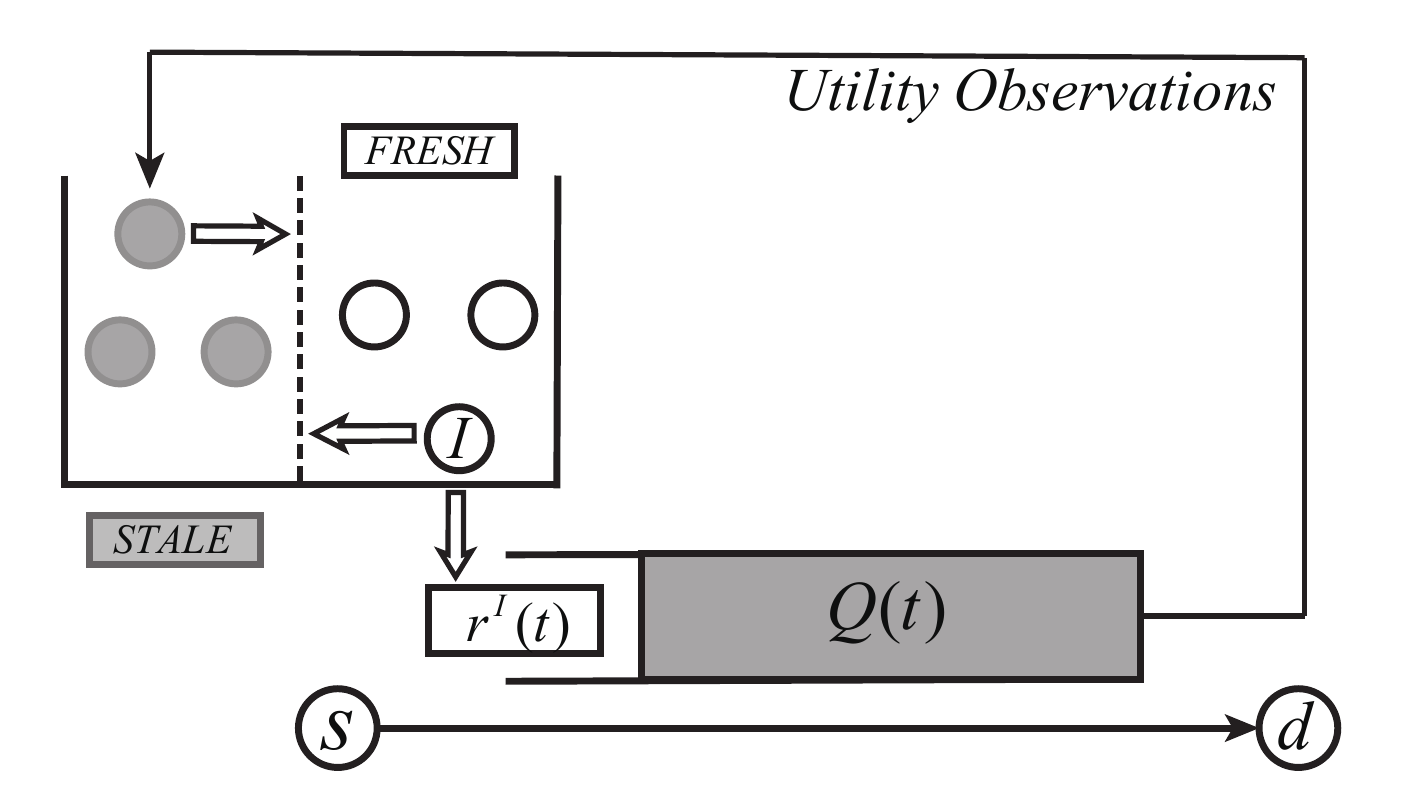}
   	\vspace{-4mm}
   	\caption{A single-queue example illustrating the parallel-instance paradigm.}
   	\vspace{-3mm}
   	\label{fig:pgsmw}
   \end{figure}
   
 \textbf{Remark:} Our parallel-instance paradigm has a similar flavor to the technique in \cite{cite:delayfeedback1} for online learning with delayed feedback. However, the observation delay in the L-NUM framework may be unbounded and is action-dependent, which is more general than the bounded, decision-independent delay considered in \cite{cite:delayfeedback1}.

    \begin{algorithm}
 	\caption{The Parallel-instance GSMW Policy}
 	\begin{algorithmic}[1] \label{alg:p-gsmw}		
 		\REQUIRE{Network $\mathcal{G(V,E)}$, parameters $V,\delta,\alpha$, instance reservoir $\mathcal{I}$}
 		
 		\FOR{$t=1,2,\ldots,T$}
 		\STATE
 		${\bm{x}}(t):=\arg\max_{\bm{x}\in\mathcal{X}}\sum_{i\in V}\sum_{k=1}^K{A}_{ij}^k(\omega(t),\bm{x})[Q_i^k(t)-Q_j^k(t)] $ \label{alg:maxweight_p}
 		\IF{There exists a FRESH instance $I_t\in\mathcal{I}$}
 		\FOR{$k=1,\ldots,K$}
 		\STATE $\hat{r}^{I_t}_k(t)$\\$:=\mathcal{P}_{[\delta,B-\delta]}\left[ \hat{r}^{I_t}_k(t-1)+\frac{1}{\alpha} (V\cdot \hat{\nabla}f_k(\hat{r}^{I_t}_k(t-1))-Q^{k}_{s_{k}}(t))  \right]$\label{alg:update_p}
 		\STATE $s_{k}$ injects job of size $\hat{r}^{I_t}_k(t)+\delta$ and another job of size $\hat{r}^{I_t}_k(t)-\delta$. \label{alg:sample_p}\\
 		\STATE Change the status of $I_t$ to STALE. \label{alg:status}
 		\ENDFOR
 		\ELSE 
 		\STATE Create a new instance $I_t$ \COMMENT{No FRESH instance in $\mathcal{I}$.}
 		\STATE For each $k$, initialize $\hat{r}_k^{I_t}(t):=\delta$, and $s_{k}$ injects job of size $\hat{r}^{I_t}_k(t)+\delta$ and another job of size $\hat{r}^{I_t}_k(t)-\delta$
 		\ENDIF
 		\STATE
 		Update queue lengths according to $r_k(t),\bm{x}(t)$.
 		\STATE
 		$\{\hat{r}^J_k(t) \}:=\{\hat{r}^J_k(t) \}$ for $J\in \mathcal{I}, J\neq I_t$. \label{alg:same}
 		\STATE Collect utility observations from delivered jobs and form gradient estimates
 		$\hat{\nabla}f_k(\hat{r}^I_k(t)):=\frac{{f}_k(\hat{r}^I_k(t)+\delta)-{f}_k(\hat{r}^I_k(t)-\delta)}{2\delta}$
 		\FOR{STALE instance $I\in \mathcal{I}$}
 		\STATE Change the status of $I$ to FRESH if it has obtained all outstanding gradient estimates. \label{alg:collect}
 		\ENDFOR
 		\ENDFOR
 	\end{algorithmic}
 \end{algorithm}


 \subsection{Policy Analysis}
 In this section, we analyze the regret achieved by the P-GSMW policy $\pi_{P-GSMW}$. The main result is presented Theorem {\ref{thm:regret}}.
 
 \begin{theorem}\label{thm:regret}
 	 The Parallel-instance Gradient Sampling Max-Weight policy $\pi_{P-GSMW}$ achieves $\tilde{O}(T^{3/4})$ regret by setting $\alpha = 2K\sqrt{T}/\eta, V=T^{1/4}, \delta = T^{-1/2}$., i.e., 
 	\[
 	R(\pi_{P-GSMW},T) = \sup_{\pi^*\in \bar{\Pi}}\mathbb{E}[U(\pi^*,T)]-\mathbb{E}[U(\pi_{P-GSMW},T)] = \tilde{O}(T^{3/4}).
 	\]	
 \end{theorem}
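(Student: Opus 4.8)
The plan is to combine the upper bound of Theorem~\ref{thm:upperbound} with a matching lower bound on $\mathbb{E}[U(\pi_{P-GSMW},T)]$: since Theorem~\ref{thm:upperbound} gives $R(\pi_{P-GSMW},T)\le T\cdot OPT(\mathcal{P})-\mathbb{E}[U(\pi_{P-GSMW},T)]$, it suffices to show this gap is $\tilde{O}(T^{3/4})$. I would first analyze the GSMW algorithm in the no-delay setting as a stand-alone lemma, and then lift the guarantee to P-GSMW through the parallel-instance decomposition. The central device for the no-delay analysis is to replace the unknown $f_k$ by its moving-average smoothing $\tilde{f}_k(r):=\frac{1}{2\delta}\int_{r-\delta}^{r+\delta}f_k(u)\,du$. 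The key observation is that the two-point estimate in Line~\ref{alg:gradient} satisfies $\hat{\nabla}f_k(r)=\tilde{f}_k'(r)$ exactly, so $\hat{\nabla}f_k$ is a genuine supergradient of the concave function $\tilde{f}_k$, while $L$-Lipschitzness gives $|\tilde{f}_k-f_k|\le L\delta/2$ and $|\hat{\nabla}f_k|\le L$. This converts the biased-gradient difficulty into a controlled $O(TL\delta)$ approximation error, rather than the $O(1)$-per-step error one would fear from trying to treat the raw secant slope as a supergradient of $f_k$ itself.

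I would then run the standard drift-plus-penalty / back-pressure argument on the Lyapunov function $L(t)=\tfrac12\sum_{i,k}(Q_i^k(t))^2$. The Max-Weight rule (Line~\ref{alg:maxweight}) minimizes the routing part of the one-slot drift, while the projected step (Line~\ref{alg:update}) is precisely online gradient ascent on $V\tilde{f}_k(r)-Q^k_{s_k}(t)\,r$. Using non-expansiveness of the projection and the supergradient property of $\hat{\nabla}f_k$, for the optimal $r^\ast$ of $\mathcal{P}$ one obtains, per class, $\sum_t V[f_k(r^\ast_k)-f_k(\hat r_k(t))]\le \sum_t Q^k_{s_k}(t)(r^\ast_k-\hat r_k(t))+\tfrac{\alpha}{2}B^2+\tfrac{1}{2\alpha}\sum_t g_{k,t}^2+VTL\delta$, where $g_{k,t}=V\hat{\nabla}f_k(\hat r_k(t))-Q^k_{s_k}(t)$. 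Telescoping the queue term against the Lyapunov drift and invoking the interior-point slack $(\eta,\dots,\eta)\in Cap(\mathcal{G})$ (which supplies the negative drift calibrating $\alpha=2K\sqrt{T}/\eta$) simultaneously yields a bound $\mathbb{E}[L(t)]=O(V^2)$ on the queues, hence $\sum_t\mathbb{E}[(Q^k_{s_k}(t))^2]=\tilde{O}(T^{3/2})$, and, after dividing by $V$, that the accumulated virtual utility is within $\tilde{O}(T^{3/4})$ of $T\cdot OPT(\mathcal{P})$. With the prescribed parameters the dominant residual is the gradient-variance term $\tfrac{1}{2\alpha}\sum_t g_{k,t}^2$, contributing $\Theta(VL^2T/\alpha)=\Theta(T^{3/4})$ together with the coupled $\alpha^{-1}\sum_t Q^2$ term (also $\Theta(T^{3/4})$), whereas the perturbation and smoothing errors are only $O(T^{1/2})$.

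Next I would convert the virtual-utility bound into a bound on truly delivered utility, accounting for two losses. First, each injected pair has sizes $\hat r_k\pm\delta$, and Lipschitzness gives $f_k(\hat r_k+\delta)+f_k(\hat r_k-\delta)\ge 2f_k(\hat r_k)-O(L\delta)$, costing $O(KTL\delta)=O(T^{1/2})$. Second, jobs still in the queues at time $T$ are not counted, so the lost utility is at most $D$ times the number of undelivered jobs, which I would bound through the terminal backlog $\mathbb{E}[\|Q(T)\|_1]$ controlled by the same drift estimate. For P-GSMW I would then argue that each instance, restricted to the slots on which it is invoked, behaves exactly as a no-delay GSMW run: by construction an instance is marked FRESH and re-invoked only after all feedback from its previous injection has returned (Line~\ref{alg:collect}), so there is no stale gradient. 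Since the invocation sets partition $\{1,\dots,T\}$, the variance and perturbation totals aggregate to exactly the quantities bounded in the lemma, while the one-time $\tfrac{\alpha}{2}B^2$ initialization term is paid once per instance.

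The main obstacle is precisely the action-dependent, a priori unbounded feedback delay, which surfaces as the number of instances $|\mathcal{I}|$: this count equals the maximum round-trip delay, i.e.\ the time a job spends in the network, which is itself governed by the very queue lengths the policy is trying to control. The crux is therefore to close this loop --- to show that the back-pressure/drift bound keeps the queues, and hence the delay, the instance count $|\mathcal{I}|$, and the terminal backlog, poly-logarithmically controlled in expectation and with high probability --- so that the $|\mathcal{I}|\cdot\tfrac{\alpha}{2}B^2$ overhead and the terminal-backlog loss stay within $\tilde{O}(T^{3/4})$; the logarithmic factors hidden by $\tilde{O}(\cdot)$ enter here, through high-probability control of the maximum queue length. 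A secondary technical point is that the $\alpha^{-1}\sum_t(Q^k_{s_k}(t))^2$ term generated by the estimated gradients must be absorbed into the Lyapunov drift rather than bounded crudely. Collecting all contributions and substituting $\alpha=2K\sqrt{T}/\eta$, $V=T^{1/4}$, $\delta=T^{-1/2}$ then yields $R(\pi_{P-GSMW},T)=\tilde{O}(T^{3/4})$.
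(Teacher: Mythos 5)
Your proposal follows essentially the same route as the paper's proof: reduction to the static benchmark $T\cdot OPT(\mathcal{P})$ via Theorem~\ref{thm:upperbound}, the smoothed surrogate $\tilde{f}_k$ whose exact gradient the two-point estimate computes, a drift-plus-penalty/Max-Weight analysis using the Slater slack $(\eta,\ldots,\eta)\in Cap(\mathcal{G})$ to control queues (hence delay and terminal backlog), and the parallel-instance accounting in which the non-telescoping $\alpha$-terms cost $O(\alpha|\mathcal{I}|)$ with $|\mathcal{I}|$ bounded by the maximum queue length with high probability, exactly as in the paper (which additionally makes explicit the work-conserving/acyclic-route assumption you use implicitly to equate delay with backlog). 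The one bookkeeping divergence is that you carry the gradient-variance term $\frac{1}{2\alpha}\sum_t g_{k,t}^2$ explicitly, which forces your sharper and unproved claim $\mathbb{E}[L(t)]=O(V^2)$ (the paper only establishes $\tilde{O}(\sqrt{T})$ queues, under which your $\alpha^{-1}\sum_t Q^2$ term would balloon to $\tilde{O}(T^{5/4})$ after dividing by $V$), whereas the paper's Lemma~\ref{lemma:update} cancels the $Q^2/\alpha$ contribution exactly, using monotonicity of $f_k$ (so $\hat{\nabla}f_k\ge 0$) together with the projection structure of the update, and therefore never needs the refined Lyapunov bound.
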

\textbf{Remark:} It can be shown that without feedback delay, by setting $\alpha = O(T), V = O(\sqrt{T}), \delta = O(1/\sqrt{T})$ (rather than $\alpha = O(\sqrt{T}), V = O(T^{1/4}), \delta = O(1/\sqrt{T})$ in Theorem \ref{thm:regret}), the GSMW algorithm achieves a regret of order $\tilde{O}(\sqrt{T})$, which matches the established regret lower bound $\Omega(\sqrt{T})$ \cite{cite:convexbandit}. Under the queueing-style feedback delay, the P-GSMW policy achieves $\tilde{O}({T}^{3/4})$ which is higher than $\tilde{O}(\sqrt{T})$. This raises the question whether the delay of L-NUM fundamental increases the difficulty of the problem, i.e., a lower bound better than $\Omega(\sqrt{T})$ can be shown, or that there exists algorithm for L-NUM that has regret better than $\tilde{O}({T}^{3/4})$. We leave this as a future direction.
\begin{proof}(of Theorem \ref{thm:regret})
  As we focus on bounding the regret with respect to the time horizon $T$, we will use $C$ to represent a generic constant that does not depend on $T$. Note that $C$ may depend on parameters such as $A,B,D,L$, and the $C$'s that appear in different equations might not be equal. In this section, instead of directly analyzing the P-GSMW policy, we will analyze the GSMW algorithm (\textbf{Algorithm \ref{alg:policy}}) in a no-delay setting, and illustrate how to extend the analysis to the P-GSMW policy in the end. A complete analysis of the P-GSMW policy can be found in Appendix \ref{app:policyproof}.
  
  Recall that in the no-delay setting, we can observe the utility value immediately after the job-size decision. Thus, we can apply the GSMW algorithm $\pi_{GSMW}$ with the same parameter values as indicated in Theorem \ref{thm:regret}. We will show that in this case, the GSMW algorithm achieves $\tilde{O}(T^{3/4})$ regret.
  We first decompose the regret into two components: one incurred through incrementally solving $\mathcal{P}$ (\textit{Utility Regret}) and the other caused by the undelivered jobs in the queues at the end of the time horizon (\textit{Queueing Regret}). 
  \begin{lemma}\label{lemma:decompose}
  	Let $\{r^*\}_k$ be the optimal solution to $\mathcal{P}$,
  	\begin{align*}
  	&R(\pi_{GSMW},T) \\\le & 2\mathbb{E}\left[\sum_{t=1}^{T}\sum_{k=1}^Kf_k(r^*_k)-f_k(\hat{r}_k(t)) \right]+C\sum_{i\in V}\sum_{k=1}^K\mathbb{E}[Q_i^k(T)]+CT\delta. 
  	\end{align*}
  \end{lemma}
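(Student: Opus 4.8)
The plan is to compare the algorithm's realized utility against the static benchmark supplied by Theorem~\ref{thm:upperbound}, and to attribute the shortfall to two disjoint sources: virtual variables $\hat{r}_k(t)$ that have not yet converged to the optimum $r^*_k$, and jobs that are injected but never reach their destinations by time $T$. Since each epoch injects two jobs of each class, of sizes $\hat{r}_k(t)+\delta$ and $\hat{r}_k(t)-\delta$, Theorem~\ref{thm:upperbound} gives
\[
\sup_{\pi^*\in\bar{\Pi}}\mathbb{E}[U(\pi^*,T)] \le 2T\cdot OPT(\mathcal{P}) = 2T\sum_{k=1}^K f_k(r^*_k).
\]
On the algorithm side I would write the realized utility as the utility of all injected jobs minus the utility $U_{\mathrm{und}}$ of those still residing in the queues at time $T$, namely $U(\pi_{GSMW},T) = \sum_{t=1}^T\sum_{k=1}^K\big(f_k(\hat{r}_k(t)+\delta)+f_k(\hat{r}_k(t)-\delta)\big) - U_{\mathrm{und}}$. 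Subtracting this identity from the benchmark separates the regret into a \emph{utility} contribution and the \emph{queueing} contribution $\mathbb{E}[U_{\mathrm{und}}]$.

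For the utility contribution I would strip off the $\pm\delta$ perturbations using $L$-Lipschitz continuity: since $f_k(\hat{r}_k(t)\pm\delta)\ge f_k(\hat{r}_k(t))-L\delta$, each epoch and class contribute at most $2\big(f_k(r^*_k)-f_k(\hat{r}_k(t))\big)+2L\delta$. Summing over the $T$ epochs and $K$ classes produces exactly the first term $2\mathbb{E}\big[\sum_{t=1}^T\sum_{k=1}^K(f_k(r^*_k)-f_k(\hat{r}_k(t)))\big]$ of the claim, with an $O(TK\delta)$ residual that is absorbed into $CT\delta$.

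The crux is bounding $\mathbb{E}[U_{\mathrm{und}}]$ by $C\sum_{i\in V}\sum_k\mathbb{E}[Q_i^k(T)]$. Here I would exploit the FIFO discipline: for each class the undelivered jobs form a suffix of the injection sequence, so their total \emph{size} equals the class-$k$ backlog $\sum_i Q_i^k(T)$ up to the size (at most $B$) of the single job that is only partially served at time $T$. Converting size into utility through Lipschitz continuity (with $f_k(0)=0$, so $f_k(r)\le Lr$) then bounds the utility of the undelivered jobs by $L\sum_{i}\sum_k Q_i^k(T)$ plus a $T$-independent constant. Taking expectations and collecting the three pieces yields the stated inequality.

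The step I expect to be the main obstacle is precisely this size-to-utility conversion for $U_{\mathrm{und}}$: utility is accrued per fully delivered job, whereas $Q_i^k$ measures residual fluid, so a naive bound of $D$ times the number of undelivered jobs would introduce a $1/\delta$ factor (each epoch can leave a job as small as $\hat{r}_k(t)-\delta$) and destroy the rate. The FIFO ordering is what lets me charge the lost utility of each undelivered job against the fluid it leaves in the queues, up to a single boundary job per class, so that the queueing regret remains linear in the backlog $\sum_{i}\sum_k Q_i^k(T)$.
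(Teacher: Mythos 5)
Your proof follows the paper's own route step for step: the benchmark $2T\cdot OPT(\mathcal{P})$ from Theorem~\ref{thm:upperbound} (with the factor $2$ for the two jobs injected per slot), the identity ``realized utility equals injected utility minus the utility of jobs still queued at $T$,'' and Lipschitz continuity to replace $f_k(\hat{r}_k(t)\pm\delta)$ by $f_k(\hat{r}_k(t))$ at cost $CT\delta$. The only divergence is the queueing term, and there the comparison is instructive: the paper disposes of it in one line --- ``since the utility of a single job is bounded'' (by $D$) --- and directly writes the charge $C\sum_{i\in V}\sum_{k}\mathbb{E}[Q_i^k(T)]$; that is, it uses exactly the blunt per-job bound you flag as dangerous, with no size-to-utility refinement at all. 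Your FIFO argument is the sharper treatment, and you have correctly identified a subtlety the paper glosses over: counting undelivered jobs and multiplying by $D$ can cost a factor $1/\delta$, since a slot can leave behind a job as small as $\hat{r}_k(t)-\delta$.

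Two caveats on your patch, though. First, it rests on $f_k(0)=0$, which is \emph{not} among the paper's assumptions ($f_k$ is only assumed non-decreasing, concave, $L$-Lipschitz, and bounded by $D$ on $[0,B]$); with, say, $f_k\equiv D$ the inequality $f_k(r)\le Lr$ fails, and the per-job constant $f_k(0)$ cannot be charged to residual fluid, so your conversion genuinely needs that normalization (harmless in most NUM settings, but it should be stated as an added hypothesis). Second, with back-pressure routing, class-$k$ traffic may split over several links and paths, so the undelivered class-$k$ jobs need not form an exact suffix of the injection sequence with a single boundary job; the fluid accounting $\text{(undelivered size)}\le\sum_i Q_i^k(T)$ plus a bounded number of partially served jobs still goes through, but the ``one boundary job per class'' claim needs this weaker phrasing. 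In sum: same decomposition as the paper, with a more careful (and, modulo the $f_k(0)=0$ normalization, more defensible) handling of the one step the paper's proof asserts without argument.
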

 \textit{Proof Sketch:}
 	By definition, the utility achieved by $\pi_{GSMW}$ is equal to $\sum_{t=1}^T\sum_{k=1}^Kf_k(\hat{r}_k(t)+\delta)+f_k(\hat{r}_k(t)-\delta)$ minus the total utility of (undelivered) jobs in the queue at time $T$. By Theorem \ref{thm:upperbound} and that the utility of a single job is upper bounded by D, i.e., the upper bound of the utility function, we can bound the regret by the RHS of Lemma 1, where the first term (utility regret) accounts for the cumulative difference between the $\sum_{k}^Kf_k(\hat{r}_k(t))$ and $OPT(\mathcal{P})$, the second term (queueing regret) accounts for the jobs in the queue, and the third term comes from that the sizes of jobs injected by the sources are $\delta$-away from the virtual job size variables.

By taking $\delta = T^{-1/2}$, we have $CT\delta=O(\sqrt{T})$. To prove Theorem \ref{thm:regret}, we can thus proceed to bound the queueing regret $\sum_{i\in V,k}\mathbb{E}[Q_i^k(T)]$ and the utility regret $\mathbb{E}\left[\sum_{t=1}^{T}\sum_{k=1}^Kf_k(r^*_k)-f_k(\hat{r}_k(t))\right]$. In the following, we will first show that the total queue length of the network (and thus the queue length regret) under the GSMW algorithm is of order $\tilde{O}(\sqrt{T})$. Using this, we will then show that the utility regret is of order $\tilde{O}(T^{3/4})$.

The GSMW algorithm controls the utility regret and queueing regret through the updates of job-size variables (Line \ref{alg:update} of \textbf{Algorithm \ref{alg:policy}}). The term  $V\cdot \hat{\nabla}f_k(\hat{r}_k(t))$ moves the variables towards optimizing utilities, and the $-Q^{k}_{s_{k}}(t)$ part aims at controlling queue lengths while $\alpha$ controls the step size. For any $\{r \}_k$ with $r_k\in[\delta, B-\delta]$, by expanding the term $\sum_{k=1}^K(\hat{r}_k(t+1)-r_k)^2$ using Line \ref{alg:update}, we have the following lemma.
\begin{lemma}\label{lemma:update1}
		For each $t$, for any $\{r \}_k$ with $r_k\in[\delta, B-\delta]$
		\begin{align*}
		&\sum_{k=1}^K \left[V\hat{\nabla}f_k(\hat{r}_k(t))(r_k-\hat{r}_k(t))\right] +\sum_{k=1}^K\left[Q^{k}_{s_{k}}(t)\hat{r}_k(t)  \right]\\
		\le & \sum_{k=1}^K\left[Q^{k}_{s_{k}}(t)r_k+\alpha[(\hat{r}_k(t)-r_k)^2-(\hat{r}_k(t+1)-r_k)^2]+{C}.   \right]
		\end{align*}
\end{lemma}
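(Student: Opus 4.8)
The plan is to recognize the virtual-variable update in Line~\ref{alg:update} as a single step of projected gradient ascent and to prove the lemma by the textbook ``completing-the-square'' argument for such steps. Write $g_k(t) := V\hat{\nabla}f_k(\hat{r}_k(t)) - Q^{k}_{s_{k}}(t)$, so that Line~\ref{alg:update} reads $\hat{r}_k(t+1) = \mathcal{P}_{[\delta,B-\delta]}[\hat{r}_k(t) + \tfrac{1}{\alpha}g_k(t)]$. Since every $r_k$ appearing in the statement lies in $[\delta,B-\delta]$ and the Euclidean projection onto this interval is non-expansive (and fixes points already inside the interval), the key one-line fact is $(\hat{r}_k(t+1)-r_k)^2 \le (\hat{r}_k(t)+\tfrac{1}{\alpha}g_k(t)-r_k)^2$ for each $k$.

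First I would expand the right-hand side of this inequality and rearrange it to isolate the cross term, obtaining $g_k(t)\,(r_k-\hat{r}_k(t)) \le \tfrac{\alpha}{2}[(\hat{r}_k(t)-r_k)^2-(\hat{r}_k(t+1)-r_k)^2] + \tfrac{1}{2\alpha}g_k(t)^2$. Next I would substitute $g_k(t) = V\hat{\nabla}f_k(\hat{r}_k(t)) - Q^{k}_{s_{k}}(t)$ back in and split the left-hand side as $V\hat{\nabla}f_k(\hat{r}_k(t))(r_k-\hat{r}_k(t)) - Q^{k}_{s_{k}}(t)(r_k - \hat{r}_k(t))$; moving the queue piece to the other side reproduces exactly the $\sum_k Q^{k}_{s_{k}}(t)\hat{r}_k(t)$ summand on the left of the lemma and the $\sum_k Q^{k}_{s_{k}}(t) r_k$ summand on the right. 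Summing over $k$ then yields the stated inequality with residual $C = \tfrac{1}{2\alpha}\sum_{k=1}^K g_k(t)^2$; matching the precise constant in front of the telescoping differences is routine bookkeeping of the step-size normalization and is immaterial downstream, since that telescoping sum is handled separately.

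The only non-routine point -- and the step I expect to be the real obstacle -- is verifying that this residual is a genuine constant in $T$. Using $0\le \hat{\nabla}f_k \le L$ (from monotonicity and $L$-Lipschitzness), the gradient contribution is at most $\tfrac{1}{2\alpha}\sum_k V^2 L^2 = \tfrac{KV^2L^2}{2\alpha}$, which is $O(1)$ under the scaling $\alpha = \Theta(\sqrt{T})$, $V = \Theta(T^{1/4})$, since then $V^2/\alpha = \Theta(1)$. The queue contribution $\tfrac{1}{2\alpha}\sum_k (Q^{k}_{s_{k}}(t))^2$, however, is $O(1)$ only if the source queues obey $Q^{k}_{s_{k}}(t) = O(V)$, so that $(Q^{k}_{s_{k}})^2/\alpha = O(V^2/\alpha) = O(1)$. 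Establishing this $O(V)$ source-queue bound is the crux: it is stronger than the looser network-wide queue estimate and comes from the self-regulating nature of the update -- whenever $Q^{k}_{s_{k}}(t) \ge VL$ the increment $\tfrac{1}{\alpha}g_k(t)$ is non-positive, so $\hat{r}_k$, and hence the injected job sizes $\hat{r}_k\pm\delta$, is driven down, which together with the Max-Weight service rule keeps $Q^{k}_{s_{k}}$ at the $O(V)$ scale. I would isolate this as a separate inductive claim on the source queues and invoke it to conclude $C = O(1)$, which completes the proof of the lemma.
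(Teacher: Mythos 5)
Your completing-the-square setup coincides with the paper's first step, and you correctly identified the crux: after expanding the projected step, the residual contains $(Q^{k}_{s_{k}}(t))^2/(2\alpha)$, which is not a constant a priori. But the route you take from there does not work. Your plan rests on an inductive source-queue bound $Q^{k}_{s_{k}}(t)=O(V)$, and this is (i) circular within the paper's architecture: the queue-length estimate (Proposition \ref{prop:workload1} / Theorem \ref{thm:workload}) is itself \emph{derived from} this lemma via the drift argument, so no queue bound is available at this point; (ii) doubtful at the claimed scale: the self-regulation you invoke acts through increments of size $(Q^{k}_{s_{k}}(t)-VL)/\alpha$ with $\alpha=\Theta(\sqrt{T})$, so $\hat{r}_k$ adjusts slowly and the queue overshoots by $\Theta(\sqrt{\alpha})=\Theta(T^{1/4})$ even under continuous service; worse, Max-Weight gives no per-queue service guarantee, and once $\hat{r}_k$ is pinned at $\delta$ the mandatory injections $\hat{r}_k\pm\delta$ still add $2\delta$ per slot, which over a poorly served stretch can accumulate to $\Theta(\delta T)=\Theta(\sqrt{T})\gg V=T^{1/4}$ --- consistent with the fact that the paper only ever proves $\tilde{O}(\sqrt{T})$ for the queues; and (iii) even a correct stochastic queue bound would hold in expectation or with high probability, whereas the lemma is a deterministic, per-sample-path inequality that is later fed into a conditional drift argument, so the constant $C$ must hold almost surely.

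The paper closes the gap with an algebraic cancellation rather than a queue bound. The point, visible in the appendix version (Lemma \ref{lemma:update}), is that the queue length is paired with the \emph{post-update} iterate while the gradient is taken at the pre-update one. Writing $Q^{k}_{s_{k}}(t)\hat{r}_k^{\mathrm{new}}=Q^{k}_{s_{k}}(t)\hat{r}_k^{\mathrm{old}}+Q^{k}_{s_{k}}(t)(\hat{r}_k^{\mathrm{new}}-\hat{r}_k^{\mathrm{old}})$ and using $\hat{r}_k^{\mathrm{new}}-\hat{r}_k^{\mathrm{old}}\le \frac{1}{\alpha}\bigl(V\hat{\nabla}f_k-Q^{k}_{s_{k}}(t)\bigr)$ --- valid whenever the projection does not clamp at $\delta$, a case the paper disposes of separately --- produces a term $-Q^{k}_{s_{k}}(t)^2/\alpha$ that exactly absorbs the $+Q^{k}_{s_{k}}(t)^2/(2\alpha)$ from your expansion, while the leftover cross term $-V\hat{\nabla}f_k\,Q^{k}_{s_{k}}(t)/\alpha$ has a favorable sign because $\hat{\nabla}f_k=\nabla\tilde{f}_k\ge 0$ by monotonicity (Lemma \ref{lemma:meangrad}). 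The only scaling fact needed is $V^2/\alpha=O(1)$, which you already verified for the gradient part. So the lemma is self-contained and pathwise; your proposal converts it into a much harder, likely false at scale $O(V)$, and logically circular queue-control claim.
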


From the Max-Weight selection rule of network action (Line \ref{alg:maxweight_p} of \textbf{Algorithm \ref{alg:p-gsmw}}), we have the following lemma.
\begin{lemma}\label{lemma:update2}
		At every time slot $t$, for any $\bm{x}\in \mathcal{X}$, and for all $\{r\}_k$
\begin{align*}
&\sum_{k=1}^KQ^{k}_{s_{k}}(t)\left[{r}_k-\sum_{j\in\mathcal{N}_{s_k}}A_{s_kj}^k(\omega(t),{\bm{x}}(t))\right]\\&\quad+\sum_{k=1}^K\sum_{i\in V,i\neq s_k}Q_i^k(t)\left[\sum_{j:i\in \mathcal{N}_j}{A}_{ji}^k(\omega(t),{\bm{x}}(t))-\sum_{j\in \mathcal{N}_i}{A}_{ij}^k(\omega(t),{\bm{x}}(t))\right]
\\\le & \sum_{k=1}^KQ^{k}_{s_{k}}(t)\left[{r}_k-\sum_{j\in\mathcal{N}_{s_k}}A_{s_kj}^k(\omega(t),{\bm{x}})\right]\\&\quad+\sum_{k=1}^K\sum_{i\in V,i\neq s_k}Q_i^k(t)\left[\sum_{j:i\in \mathcal{N}_j}{A}_{ji}^k(\omega(t),{\bm{x}})-\sum_{j\in \mathcal{N}_i}{A}_{ij}^k(\omega(t),{\bm{x}})\right]
\end{align*}
\end{lemma}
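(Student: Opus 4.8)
The plan is to recognize that Lemma~\ref{lemma:update2} is nothing more than the optimality of the Max-Weight action rewritten in terms of per-node net flows. First I would observe that the terms $\sum_{k}Q^k_{s_k}(t)r_k$ appear identically on both sides of the claimed inequality — the job sizes $\{r_k\}$ and the current source backlogs $Q^k_{s_k}(t)$ do not depend on which feasible action is substituted — so they cancel. After this cancellation the inequality is equivalent to the statement that the action $\bm{x}(t)$ chosen by the Max-Weight rule in Line~\ref{alg:maxweight_p} minimizes the functional
\[
\Phi(\bm{y}) := -\sum_{k=1}^K Q^k_{s_k}(t)\sum_{j\in\mathcal{N}_{s_k}}A^k_{s_kj}(\omega(t),\bm{y}) + \sum_{k=1}^K\sum_{i\neq s_k}Q_i^k(t)\Big[\sum_{j:i\in\mathcal{N}_j}A^k_{ji}(\omega(t),\bm{y})-\sum_{j\in\mathcal{N}_i}A^k_{ij}(\omega(t),\bm{y})\Big]
\]
over $\bm{y}\in\mathcal{X}$; equivalently, that $\bm{x}(t)$ maximizes $-\Phi$.

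The heart of the argument is a discrete ``integration-by-parts'' (back-pressure) identity relating $-\Phi$ to the Max-Weight objective
\[
W(\bm{y}) := \sum_{i\in V}\sum_{j\in\mathcal{N}_i}\sum_{k=1}^K A^k_{ij}(\omega(t),\bm{y})\,[Q_i^k(t)-Q_j^k(t)].
\]
I would reindex the edge-indexed sum defining $W$ by grouping contributions according to the tail node and, separately, the head node: the term $A^k_{ij}Q_i^k$ accumulates the total class-$k$ offered outflow of each node weighted by its own backlog, while $A^k_{ij}Q_j^k$, reindexed over the head node, accumulates the total class-$k$ offered inflow weighted by the receiving node's backlog. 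Writing $\mathrm{out}_i^k(\bm{y})$ and $\mathrm{in}_i^k(\bm{y})$ for these total out- and in-flows at node $i$, this yields $W(\bm{y})=\sum_{k}\sum_{i}Q_i^k(t)[\mathrm{out}_i^k(\bm{y})-\mathrm{in}_i^k(\bm{y})]$. Splitting the node sum into $i=s_k$ and $i\neq s_k$ and comparing term-by-term with $\Phi$, I would obtain the exact relation $-\Phi(\bm{y}) = W(\bm{y}) + \sum_k Q^k_{s_k}(t)\,\mathrm{in}_{s_k}^k(\bm{y})$.

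To finish I would invoke two structural facts. First, $Q^k_{d_k}(t)=0$ by \eqref{eq:dynamics2}, so every term involving a destination node drops out automatically (both in $W$ and in $\Phi$); this is precisely why the node sum in $\Phi$ may be restricted to $i\neq s_k$ without loss. Second, consistent with the source dynamics~\eqref{eq:dynamics1}, which contain no incoming term for $Q^k_{s_k}$, class-$k$ traffic is never routed into its own source, i.e.\ $\mathrm{in}_{s_k}^k(\bm{y})=0$. Hence $-\Phi(\bm{y})=W(\bm{y})$ exactly, and since $\bm{x}(t)$ is by definition a maximizer of $W$ over $\mathcal{X}$, it is also a maximizer of $-\Phi$, i.e.\ a minimizer of $\Phi$. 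Substituting $\bm{x}(t)$ and an arbitrary $\bm{x}$ into $\Phi$ and re-attaching the cancelled $\sum_k Q^k_{s_k}(t)r_k$ terms then reproduces exactly the inequality of the lemma.

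I expect the only genuine difficulty to be careful bookkeeping rather than any conceptual hurdle: the reindexing of the edge sum into node-level net flows must track the tail/head roles correctly, and one must verify that the boundary contributions at the source and destination vanish (via $Q^k_{d_k}(t)=0$ and the absence of source inflow) so that $-\Phi$ and $W$ coincide exactly rather than merely agreeing up to an action-dependent remainder.
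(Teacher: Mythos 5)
Your proposal is correct and follows essentially the same route as the paper's proof, which simply notes that the $r_k$ terms do not affect the maximization and that rearranging the remaining terms recovers exactly the Max-Weight objective of Line \ref{alg:maxweight_p}. Your version merely carries out that rearrangement explicitly, including the (worthwhile) verification that the boundary terms at $d_k$ and $s_k$ vanish via $Q^k_{d_k}(t)=0$ and the absence of class-$k$ inflow at its own source.
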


Lemmas \ref{lemma:update1} and \ref{lemma:update2} lay the foundation of the analysis of queueing regret and utility regret.

\subsubsection{Queueing Regret}
We write the vector of queue lengths at time $t$ as $\bm{Q}(t)$.
The key to bounding the queueing regret is to bound the quadratic drift of $\bm{Q}(t)$. The drift consists of terms involving the queues at the source $\{Q_{s_k}^k\}$, and the queues in the network $\{Q_i^k\}$. We use Lemma \ref{lemma:update1} to handle the former, and use Lemma \ref{lemma:update2} for the latter.
\begin{lemma}\label{lemma:drift1}
	There exists $\epsilon>0$ such that under the GSMW algorithm, for all $t\le T$,
\begin{align*}
\mathbb{E}[||\bm{Q}(t+1)||^2-||\bm{Q}(t)||^2\mid \bm{Q}(t)] \le -\epsilon  \sum_{i\in V}\sum_{k=1}^KQ_i^k(t) + C\sqrt{T}.
\end{align*}
\end{lemma}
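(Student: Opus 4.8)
The plan is to control the one-step quadratic drift of $\bm{Q}(t)$ by the standard Lyapunov technique and then split the resulting weighted net-arrival term into a source part, which I dispatch with Lemma~\ref{lemma:update1}, and an internal-node part, which I dispatch with Lemma~\ref{lemma:update2} together with the interior-point assumption on $Cap(\mathcal{G})$. Applying $([q+a-b]^+)^2 \le q^2 + (a-b)^2 + 2q(a-b)$ to every queue (with $a$ the arrivals and $b$ the offered service), summing over all $(i,k)$, and using that each injected job has size at most $B$ and each $A_{ij}^k\le A$ with bounded node degrees, the squared terms aggregate into an $O(1)$ constant, leaving
\[
\mathbb{E}\big[\|\bm{Q}(t+1)\|^2-\|\bm{Q}(t)\|^2 \mid \bm{Q}(t)\big] \le C + 2\,\mathbb{E}\big[G(\bm{x}(t)) \mid \bm{Q}(t)\big],
\]
where $G(\bm{x})$ is exactly the weighted net-arrival expression appearing on the left-hand side of Lemma~\ref{lemma:update2}, evaluated at the current injections $r_k(t)$. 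Since $\bm{x}(t)$ is the Max-Weight action, Lemma~\ref{lemma:update2} lets me replace $\bm{x}(t)$ by an arbitrary comparison action; I will substitute a stationary randomized policy $\bm{x}^{\mathrm{RAND}}(\omega)$ and then take expectation over $\omega(t)$, which is i.i.d. and independent of the history up to $t$ (note that $r_k(t)=2\hat{r}_k(t)$ does not depend on $\omega(t)$, so it passes through the expectation).

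I build $\bm{x}^{\mathrm{RAND}}$ from the assumption that $(\eta,\dots,\eta)$ lies in the interior of $Cap(\mathcal{G})$. The supporting flow $\{\lambda(\omega)\}\in Conv(\Lambda(\omega))$ routes rate $\eta$ from each $s_k$ to $d_k$ with conservation at internal nodes; because the point is strictly interior, I can perturb this flow so that the offered service is strictly larger on links nearer the destination along each routing path. This yields a constant $\epsilon_0>0$ with $\mathbb{E}_\omega[\sum_{j}A_{s_kj}^k] \ge \eta$ at each source and $\mathbb{E}_\omega[\sum_{j}A_{ij}^k-\sum_{j:i\in\mathcal{N}_j}A_{ji}^k]\ge \epsilon_0$ at each internal node, the extra per-link service staying in $Conv(\Lambda(\omega))$ for $\epsilon_0$ small. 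Substituting and taking the conditional expectation gives
\[
\mathbb{E}\big[\|\bm{Q}(t+1)\|^2-\|\bm{Q}(t)\|^2 \mid \bm{Q}(t)\big] \le C + 2\sum_{k=1}^K Q_{s_k}^k(t)\big(r_k(t)-\eta\big) - 2\epsilon_0\sum_{k=1}^K\sum_{i\neq s_k,d_k} Q_i^k(t),
\]
so the internal queues already display the desired negative drift.

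It remains to absorb the source term, where $r_k(t)=2\hat{r}_k(t)$. Here I apply Lemma~\ref{lemma:update1} at the comparison point $r_k=\delta$; rearranging, it bounds $\sum_k Q_{s_k}^k(t)\hat{r}_k(t)$ by $\delta\sum_k Q_{s_k}^k(t)$, plus the gradient term $\sum_k V\hat{\nabla}f_k(\hat{r}_k(t))(\hat{r}_k(t)-\delta)$, plus the proximal term $\alpha\sum_k[(\hat{r}_k(t)-\delta)^2-(\hat{r}_k(t+1)-\delta)^2]$, plus a constant. Since each $f_k$ is non-decreasing and $L$-Lipschitz we have $0\le\hat{\nabla}f_k\le L$, so the gradient term is $O(V)=O(T^{1/4})$; and, crucially, the proximal term needs no telescoping across time and is bounded directly by $\alpha\sum_k(\hat{r}_k(t)-\delta)^2\le \alpha K B^2=O(\sqrt{T})$ because $\hat{r}_k(t+1)\ge\delta$ and $\alpha=\Theta(\sqrt{T})$. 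Hence $\sum_k Q_{s_k}^k(t)\hat{r}_k(t)\le \delta\sum_k Q_{s_k}^k(t)+C\sqrt{T}$, and the source term is at most $(4\delta-2\eta)\sum_k Q_{s_k}^k(t)+C\sqrt{T}\le -\eta\sum_k Q_{s_k}^k(t)+C\sqrt{T}$ once $\delta=T^{-1/2}<\eta/4$. Combining the two bounds with $\epsilon:=\min(\eta,2\epsilon_0)$ yields the claim.

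The main obstacle is the source queues. Unlike classical back-pressure, the exogenous arrival $r_k(t)$ is chosen by the algorithm and can be as large as $B\gg\eta$, so the source arrival rate generally lies outside $Cap(\mathcal{G})$ and no comparison policy produces negative drift there on its own. The resolution is the self-throttling in the gradient/queue update: Lemma~\ref{lemma:update1} couples $\hat{r}_k$ to $Q_{s_k}^k$ so that a large source backlog forces small injections, and the only prices are the $O(V)$ gradient bias and the $O(\alpha)$ proximal term, both $O(\sqrt{T})$ under the chosen scaling. A secondary point is manufacturing the strict internal slack $\epsilon_0$ from the bare interior-point hypothesis, which needs the path-monotone perturbation of the supporting flow rather than the conservative flow itself (which would only give non-positive internal drift).
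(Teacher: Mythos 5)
Your proposal is correct and follows essentially the same route as the paper's proof: a quadratic Lyapunov expansion of the queue dynamics, the Max-Weight comparison (Lemma \ref{lemma:update2}) against a stationary randomized action built from the Slater point $(\eta,\ldots,\eta)\in Cap(\mathcal{G})$ using the i.i.d.\ network state, and absorption of the algorithm-chosen source arrivals via Lemma \ref{lemma:update1} at the comparison point $(\delta,\ldots,\delta)$, with the gradient and proximal terms bounded crudely by $O(V+\alpha)=O(\sqrt{T})$ without telescoping --- exactly as in the paper's appendix. One remark on the step you rightly flag as needing care (the paper compresses it into ``Slater's condition and downward-closedness''): downward-closedness of $\Lambda(\omega)$ only permits \emph{decreasing} offered rates, so rather than adding extra service on links nearer the destination (which need not stay in $Conv(\Lambda(\omega))$), the monotone perturbation should shrink the supporting flow progressively toward the source; this yields the same strict internal slack $\epsilon_0$ at the cost of an $O(\epsilon_0)$ reduction in expected source service, which is harmless since the source margin $\eta-C\epsilon_0-\delta$ remains positive once $\delta=T^{-1/2}$ is small.
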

\textit{Proof Sketch:} The drift argument follows from Lemmas \ref{lemma:update1} and \ref{lemma:update2} by taking $\{r\}_k$ therein to be the vector $(\delta,\ldots,\delta)$ and utilizing the Slater's condition. 

Based on Lemma \ref{lemma:drift1}, we use a result on stochastic processes with negative drift from \cite{cite:neely1}, which leads to Proposition \ref{prop:workload1} that essentially concludes the analysis of the queueing regret.
\begin{proposition}\label{prop:workload1}
	Under GSMW, $\forall t\le T$, $\mathbb{E}[\sum_{i\in V}\sum_{k=1}^KQ_i^k(t)]\le \tilde{O}(\sqrt{T})$.
\end{proposition}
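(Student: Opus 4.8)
The plan is to turn the one-step drift inequality of Lemma~\ref{lemma:drift1} into a uniform (per-slot) bound on the queue backlog by invoking the negative-drift result of \cite{cite:neely1}. A plain telescoping of Lemma~\ref{lemma:drift1} would only control the \emph{time-average} backlog, whereas the proposition asserts a bound holding for every $t\le T$; this is exactly what a Foster--Lyapunov/Hajek-type negative-drift lemma delivers. The obstacle is that such lemmas require \emph{bounded} per-slot increments, and the increments of $\|\bm{Q}(t)\|^2$ are not bounded: since each queue changes by at most a constant per slot, $\big|\,\|\bm{Q}(t+1)\|^2-\|\bm{Q}(t)\|^2\,\big|$ scales like $\|\bm{Q}(t)\|_1$. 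I would therefore not apply the lemma to $\|\bm{Q}(t)\|^2$ directly, but to $Y(t):=\sqrt{\|\bm{Q}(t)\|^2}$, the Euclidean norm of the backlog vector, whose increments \emph{are} bounded: by the reverse triangle inequality $|Y(t+1)-Y(t)|\le\|\bm{Q}(t+1)-\bm{Q}(t)\|\le C$, because there are only constantly many queues and each changes by at most the maximal arrival/service rate per slot.

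The first real step is to transfer the negative drift from $\|\bm{Q}\|^2$ to $Y$ using convexity of $y\mapsto y^2$. Since $Y(t+1)^2\ge Y(t)^2+2Y(t)\big(Y(t+1)-Y(t)\big)$, taking conditional expectations and applying Lemma~\ref{lemma:drift1} gives
\begin{align*}
2\,Y(t)\,\mathbb{E}\big[Y(t+1)-Y(t)\mid\bm{Q}(t)\big]\le \mathbb{E}\big[\|\bm{Q}(t+1)\|^2-\|\bm{Q}(t)\|^2\mid\bm{Q}(t)\big]\le -\epsilon\,\|\bm{Q}(t)\|_1+C\sqrt{T}.
\end{align*}
Using $\|\bm{Q}(t)\|_1\ge\|\bm{Q}(t)\|=Y(t)$ and dividing by $2Y(t)$, I obtain $\mathbb{E}[Y(t+1)-Y(t)\mid\bm{Q}(t)]\le-\epsilon/2+C\sqrt{T}/(2Y(t))$. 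Consequently, with the threshold $\theta:=2C\sqrt{T}/\epsilon=\Theta(\sqrt{T})$, whenever $Y(t)>\theta$ the conditional drift of $Y$ is at most $-\epsilon/4$, a strictly negative constant independent of $t$ and $T$.

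Now $Y(t)$ meets all hypotheses of the negative-drift lemma of \cite{cite:neely1}: it is nonnegative, starts from an empty initial backlog, has constant-bounded increments, and has a constant negative drift once it exceeds $\theta$. The lemma then yields a bound holding for every $t\le T$ of the form $\mathbb{E}[Y(t)]\le\theta+O(1)=O(\sqrt{T})$ (the lemma produces an exponential tail above $\theta$, whose integral contributes only the $O(1)$ correction, leaving the $\Theta(\sqrt{T})$ threshold dominant). Converting back to the $\ell_1$ backlog by Cauchy--Schwarz, with $N=|\mathcal{V}|\,K$ the constant number of queues,
\begin{align*}
\mathbb{E}\Big[\sum_{i\in V}\sum_{k=1}^K Q_i^k(t)\Big]=\mathbb{E}\big[\|\bm{Q}(t)\|_1\big]\le\sqrt{N}\,\mathbb{E}\big[\|\bm{Q}(t)\|\big]=\sqrt{N}\,\mathbb{E}[Y(t)]=\tilde{O}(\sqrt{T}),
\end{align*}
which is the claim. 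The hard part is the reduction in the first two paragraphs: the drift we are handed controls $\|\bm{Q}\|^2$, whose unbounded increments rule out a direct appeal to the drift lemma, so the passage to the bounded-increment process $Y=\|\bm{Q}\|$ and the verification that the residual term $C\sqrt{T}/(2Y)$ is dominated by $\epsilon/2$ precisely above a $\Theta(\sqrt{T})$ threshold is the technical crux that makes everything go through.
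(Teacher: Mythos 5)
Your proof is correct and takes essentially the same approach as the paper: both pass from the quadratic drift of Lemma~\ref{lemma:drift1} to a constant negative conditional drift of the bounded-increment process $\|\bm{Q}(t)\|$ above a $\Theta(\sqrt{T})$ threshold, invoke the negative-drift lemma of \cite{cite:neely1} with $t_0=1$, and convert back to the $\ell_1$ backlog by norm equivalence. The only cosmetic difference is the transfer step: you use the gradient inequality $Y(t+1)^2\ge Y(t)^2+2Y(t)\big(Y(t+1)-Y(t)\big)$, whereas the paper completes the square and applies Jensen's inequality $\mathbb{E}\big[\|\bm{Q}(t+1)\|\mid\bm{Q}(t)\big]\le\sqrt{\mathbb{E}\big[\|\bm{Q}(t+1)\|^2\mid\bm{Q}(t)\big]}$.
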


\subsubsection{Utility Regret}
 It can be shown that the gradient estimate of the GSMW algorithm $\hat{\nabla}f_k$ is equal to the gradient of a smoothed version of $f_k$ defined as $\tilde{f}_k(r)=\frac{1}{2\delta}\int_{-\delta}^{\delta}f_k(r+z)\mathrm{d}z$. Note that by definition $\tilde{f}_k$ is also concave and Lipschitz-continuous. Moreover, by the concavity and Lipschitz continuity of ${f}_k$, for all $r\in[0,B]$, $f_k(r)-C\delta\le \tilde{f}_k(r) \le f_k(r)$. Hence, by using $\hat{\nabla}f_k$ as gradients, we are essentially optimizing with respect to objective function $\sum_{k=1}^K\tilde{f}_k(r) $, which is at most $C\delta$ away from the true objective function $\sum_{k=1}^K {f}_k(r) $. Accumulating over $T$ time slots, such approximation contributes to at most $O(\sqrt{T})$-regret.
 Next, from Lemma \ref{lemma:update1}, we take $\{r\}_k$ to be the optimal solution $\{r^*\}_k$ to the optimization problem $\mathcal{P}$ and use $V\hat{\nabla}f_k(\hat{r}_k(t))(r^*_k-\hat{r}_k(t))\ge V(\tilde{f}_k(r^*_k)-\tilde{f}_k(\hat{r}_k(t)))$. This will us a handle on the utility regret.
Finally, we bound the utility regret by summing over the inequality of Lemma \ref{lemma:update1}. Note that the term $\alpha[(\hat{r}_k(t)-r_k)^2-(\hat{r}_k(t+1)-r_k)^2]$ telescopes, and the terms involving $Q_{s_k}^k (t)$'s can be bounded by previous results on the queueing regret (Proposition \ref{prop:workload1}). Plugging in the parameter values, we can show that the utility regret is of order $\tilde{O}(T^{3/4})$. Combining the analysis of queueing regret, utility regret and Lemma \ref{lemma:decompose} concludes the proof of Theorem \ref{thm:regret}.

\subsubsection{Extension to P-GSMW} The analysis of the P-GSMW policy essentially follows the same vein. Lemmas \ref{lemma:decompose},\ref{lemma:update1} and \ref{lemma:update2} still hold by replacing $\hat{r}_k(t)$ by $\hat{r}_k^{I_t}(t)$, which is the job-size variables used by the P-GSMW policy at time $t$. Note that here $I_t$ denotes the instance used at $t$, and $I_t$ may be different at different $t$. Using \ref{lemma:update1} and \ref{lemma:update2}, the analysis of queueing regret can be carried out in a similar way for the P-GSMW policy, as it does not need  $\hat{r}_k^{I_t}(t)$'s to come from the same instance at each time but only relies on them being bounded. Therefore, we can establish that under the P-GSMW policy, the queueing regret is still of order $\tilde{O}(\sqrt{T})$. 

Proceeding to the utility regret, most of the previous reasoning still holds for the P-GSMW policy, except that the term that corresponds to $\alpha[(\hat{r}_k(t)-r_k)^2-(\hat{r}_k(t+1)-r_k)^2]$ becomes $\alpha[(\hat{r}^{I_t}_k(t-1)-r_k)^2-(\hat{r}^{I_t}_k(t)-r_k)^2]$, which no longer telescopes since $I_t$ may change with $t$. Instead, it only partially telescopes, leading to a term of $O(\alpha |\mathcal{I}|)$ where $|\mathcal{I}|$ is the total number of instances in the reservoir by the end of the time horizon. This also reflects the impact of having multiple instances in the P-GSMW: each instance get updated for fewer than $T$ times, which makes the job-size variables $\{r^I_k(t)\}$ of each instance converge slower to the optimal compared to the GSMW algorithm in the no-delay setting.
To bound the term $\alpha |\mathcal{I}|$, we use the previously obtained result on queueing regret, relate $|\mathcal{I}|$ to the maximum total queue lengths of the network, and show that the utility regret is of order $\tilde{O}(T^{3/4})$.
\end{proof}

\subsubsection{Challenges Posed By Feedback Delay}
As we have shown in the preceding analysis, the update of job-size variables in the GSMW algorithm (Line \ref{alg:update} of \textbf{Algorithm \ref{alg:policy}}) is composed of one term $V\cdot \hat{\nabla}f_k(\hat{r}_k(t))$ that approximates the gradient and moves the variables towards optimizing utilities, and another term $-Q^{k}_{s_{k}}(t)$ that aims at controlling queue lengths. 
With the presence of feedback delay, GSMW is not applicable as we can only observe the utility values used in the gradient approximation (Line \ref{alg:gradient}) after the jobs get delivered to the destination. Therefore, we may not have the first term available when we perform the updates of GSMW. While we have shown that such difficulty can be overcome by the parallel-instance paradigm, one natural question is \textbf{would other simpler adaptation of the GSMW algorithm work?}. We will now look at two other more straightforward modifications of the GSMW algorithm. By arguing at an intuitive level that they are unlikely to work, we further justify the necessity of having the parallel-instance paradigm.

One possible alternative is to use an ``episodic approach'', i.e., keep the job-size variables unchanged (for an episode of multiple slots) until the utility observations needed become available, and update job-size variables once every episode. Here, as the delay of jobs is essentially proportional to the queue lengths, the length of episodes need to be set to be large enough as the maximum queue length throughout the optimization process. However, this would cause the sizes of the jobs (traffic injected to the network) not be able to adjust timely with respect to the queue lengths (as Line \ref{alg:update} only gets executed once every episode), which will further lead to larger queue lengths (episode length), and thus create a ``feedback loop'' that makes the algorithm suffer from linear regret.

Another possible method is to use old gradients, e.g., we execute Line \ref{alg:update} every time slot, but use the most recently available gradient estimates. This makes the algorithm adjusts according to queue length every time slot, and thus can maintain the queue length bound of GSMW. However, the feedback delay (queue lengths) is still non-trivial and cannot be bounded by a constant independent of $T$, which will result in a large bias in the gradient estimates used in the updates and lead to linear regret.

\section{Applications} \label{sec:application}
In this section, we apply our L-NUM framework to example applications including database query, job scheduling and video streaming. 
\subsection{Database Query} \label{sec:database}
In this example, we consider a setting where there are $K$ users $\{u_1,\ldots,u_K\}$ querying a central database.\footnote{Note that in this example, we only consider the access to the database and not the problem of routing the queries trough the network.} At each time $t$, user $u_k$ issues a query of size $r_k$ to the central database, with $r_k$ representing the processing requirement of the query. The issued queries get buffered in the queue of the database and the database can process $c$ unit of requests in a first-come-first-serve order at each time slot.
Each use $u_k$ is associated with an underlying utility function $f_k$ that captures the relationship between the processing requirement and utility gained from the query. $f_k$ is Lipschitz continuous and concave, which reflects the diminishing return property of query processing. Over a time horizon of $T$, the goal is to maximize the total utility of the processed queries.

Applying our framework to the database query example, the network is a simple one with a single state, one source node, one destination node and a link between them (See Figure \ref{fig:database}). All the users are mapped to the source node and the database corresponds to the link with the transmission rate of the link at each time slot being equal to the processing capacity $c$ of the database. The network action component of the framework is not needed. The queue at the source node, corresponding to the buffer of the database, buffers the jobs (query requests) of all users. P-GSMW policy adjusts the size of the query according to the gradient estimates and the queue size at the source node, and achieves $\tilde{O}(T^{3/4})$-regret.

\begin{figure}
	\centering
	\includegraphics[width=0.8\linewidth]{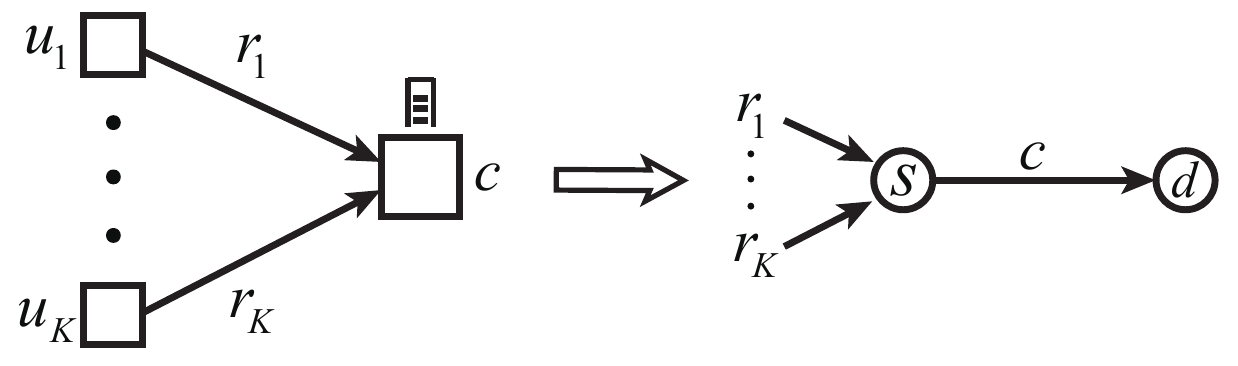}
	\vspace{-3mm}
	\caption{Correspondence between database query and the L-NUM framework.}
	\vspace{-3mm}
	\label{fig:database}
\end{figure}

\subsection{Job Scheduling} \label{sec:jobschedule}
Consider a discrete-time system with with a set of job schedulers (dispatchers) $\{u_1,\ldots, u_K\}$ and a set of parallel servers $\{s_1,\ldots,s_M\}$ that form a bipartite graph. We use $S_{u_k}$ to denote the set of servers that dispatcher $u_k$ is connected to. At each time, a class-$k$ job arrives at the dispatcher $u_k$ and the dispatcher sends the job to one of the servers in $S_{u_k}$ for execution. The job dispatcher also determines the resource (e.g. computation, memory) requirement of each job. Each server $s_m$ can provide $c_m(t)$ amount of resources at time $t$ with $c_m(t)$ being a sequence of i.i.d. discrete random variables. Class-$k$ jobs have underlying utility function $f_k$. A utility of $f_k(r_k)$ is obtained when a class-$k$ job of resource requirement $r_k$ is completed at a server. We seek a scheduling policy that determines the resource requirement and target server of each job. The goal is to maximize the total utility gained from jobs completed over the time horizon $T$. This example particularly mirrors applications where the jobs are flexible in terms of resource requirement (e.g., model training for machine learning tasks in cloud computing \cite{cite:MLtask,cite:NUM-Job1}).

\begin{figure}
	\centering
	\includegraphics[width=0.8\linewidth]{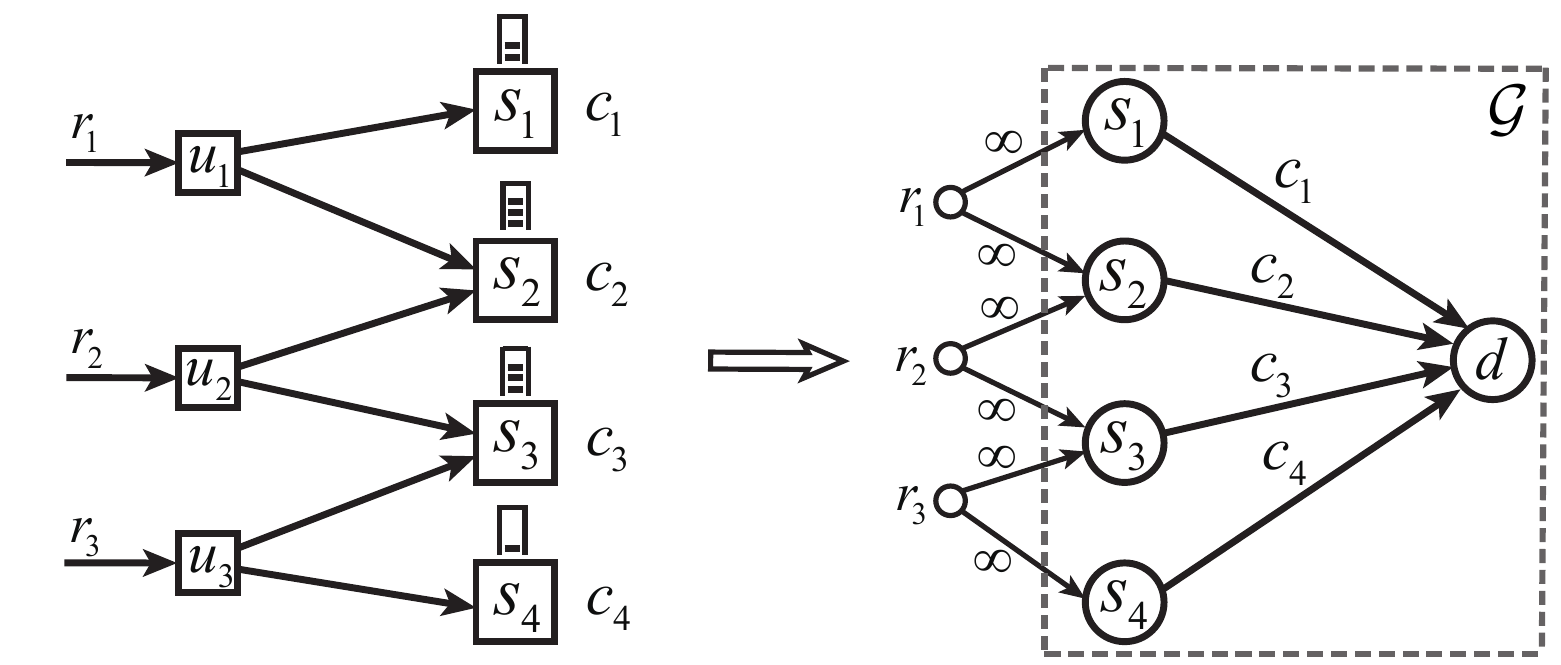}
	\vspace{-3mm}
	\caption{Correspondence between job scheduling and the L-NUM framework.}
	\vspace{-3mm}
	\label{fig:jobscheduling}
\end{figure}

We apply the L-NUM framework to the job scheduling application by creating a source node for each job classes, an intermediate node corresponding to each server and a virtual destination node (See Figure \ref{fig:jobscheduling}). The offered transmission rates of the links between server node and the virtual destination is equal to the time-varying capacity $c_m(t)$ of the servers, and the offered transmission rate between source nodes and intermediate nodes are infinity. 
 The job size $r_k(t)$ corresponds to the resource requirement of class $k$ jobs sent at $t$. 
Based on this correspondence, the P-GSMW policy achieves $\tilde{O}(T^{3/4})$-regret. Note that the max-weight scheduling component of the P-GSMW is equivalent to the Join-the-Shortest-Queue policy.

\subsection{Video Streaming}
In this example, we consider a network shared by $K$ users streaming video from $K$ corresponding servers. At each time slot, each server sends a chunk of the video file through the network to its corresponding user. The network operator determine the size of the chunks, which correspond to the rates of the video streams. It also controls the routing and scheduling in the network. User $k$ has a utility function $f_k$ that is unknown to the network operator, and obtains utility of value $f_k(r_k)$ after receiving a video chunk of size $r_k$. Here, we seek a policy that jointly adapts the video rates, i.e., determines the size of the video chunks and the routing and scheduling of the network such that the total utility obtained from the delivered video chunks is maximized. 

It is natural to map the L-NUM framework to the video streaming application. The network shared by the users plays the role of the network $\mathcal{G}$ in the L-NUM framework. Each user represents a traffic class. Each traffic class has the user's corresponding video server as the source node with the user node being the destination. The network states capture the possible time-variability in the network links (e.g. in wireless networks). The network action encapsulates the routing and scheduling actions of the network. The feasible action set $\mathcal{X}$ can captures constraints on network operations such as interference constraints and capacity constraints. Applying the P-GSMW policy, we obtain a joint rate-adaptation and network scheduling/routing policy with $\tilde{O}(T^{3/4})$-regret. The network action component here resembles the back-pressure algorithm.

\section{Simulations}\label{sec:simulations}

In this section, we evaluate the empirical performance of the P-GSMW policy under the L-NUM framework. We will also compare P-GSMW policy with GSMW algorithm (in an imaginary no-delay setting) to see the impact of feedback delay on the problem.

We instantiate the L-NUM framework on the job scheduling application. The example we construct for the simulation has 50 job schedulers (corresponding to 50 job classes) and 100 parallel servers. The links between job schedulers and servers are randomly generated with each scheduler having expected degree 6 (i.e., connected to 6 servers). The service rate of each server is generated by a uniform random variable with range $[0.5,1.5]$. We assign an underlying utility function to each class chosen from the four types: $f_k(r)=a_kr$ (linear function), $f_k(r)=a_k\sqrt{r+b_k}-a_k\sqrt{b_k}$ (square root function), $f_k(r)=-a_kr^2+b_kr$ (quadratic function), $f_k(r)=a_k\log(b_kr+1)$ (logarithmic function).

Applying the L-NUM framework to the example, we first form the corresponding optimization problem $\mathcal{P}$ and obtain that the optimal value $OPT(\mathcal{P})$ is equal to 84.4. We next run the P-GSMW policy and also the GSMW algorithm (\textbf{Algorithm \ref{alg:policy}}). Note that for the GSMW algorithm, we assume an imaginary no-delay setting where the utility values are immediately observable after decisions.

 We first investigate the effects of the parameter values $(\alpha,V,\delta)$ on the performance of the policy, then compare P-GSMW and GMSW, and finally study the impact of observation noise. 

\subsection{Choice of Parameter Values}
We vary the values of the parameters $(\alpha, V, \delta)$ in the GSMW policy and demonstrate their effects of the policy. The time horizon $T$ is set to 60000. When changing one parameter, the others are held fixed $(\alpha=5000, V=200, \delta = 0.005)$. We plot the queue length, defined as the sum of queue length at each server, and the instantaneous utility, defined as $\sum_kf_k(r_k(t))$ as the time evolves. The results on queue length are shown in {Figures  \ref{fig:parameter_value}, while the figures on instantaneous utility are much less readible nor informative, and is thus deferred to Appendix \ref{app:simulationfigure}. 

\textit{Parameter $\alpha$:}
 The parameter $\alpha$ essentially controls the step size of the P-GSMW policy with a larger $\alpha$ indicating a smaller step size. We vary $\alpha$ in $\{500, 1000, 5000, 10000 \}$. From the results, the average queue length decreases with the increase of $\alpha$. The queue length of a larger $\alpha$ tends to have larger and more persistent oscillation. Recalling the update of job sizes of the P-GSMW policy (Line \ref{alg:update_p}), such behavior can be attributed to that a larger $\alpha$ leads to a smaller ``negative feedback'' that the queue length has on job sizes.

\textit{Parameter $V$:} The parameter $V$ adjusts the relative weights of the P-GSMW policy on utility maximization and queue stability, with a larger $V$ indicating that the policy tries to increase the job sizes (and thus the instantaneous utility) more aggressively.  We vary $V$ in $\{ 50, 100, 200, 400\}$. Such behavior is clearly reflected in Figure \ref{fig:queue_V_1} as a larger $V$ leads to a larger steady-state queue size, but the difference is more obscure in the plot of instantaneous utility (figure omitted due to space constraint). We further calculate the time-average instantaneous utility of the P-GSMW policy under different values of $V$. Corresponding to $V=50, 100, 200, 400$, the time-average instantaneous utility are $84.8, 85.3, 86.1 ,87.1$, respectively. Note that the instantaneous utility can be larger than $OPT(\mathcal{P})$ since the virtual job size variables may not satisfy the capacity constraint of $\mathcal{P}$. The result further  supports that a larger $V$ leads to more aggressive increase in the job sizes.

\textit{Parameter $\delta:$} The parameter $\delta$ controls the approximation error of our estimate gradients with respect to the true gradients.  We vary $\delta$ in $\{0.005, 0.01, 0.05, 0.1 \}$. Due to that the underlying utility functions in our example do not have large curvature, the value of $\delta$ does not have significant effect on the policy.

%

\begin{figure*}[!tb]
	\centering
	\subfigure[]{
		\begin{minipage}[]{0.31\linewidth}
			\centering
			\vspace{-2mm}
			\includegraphics[width=1.0\linewidth]{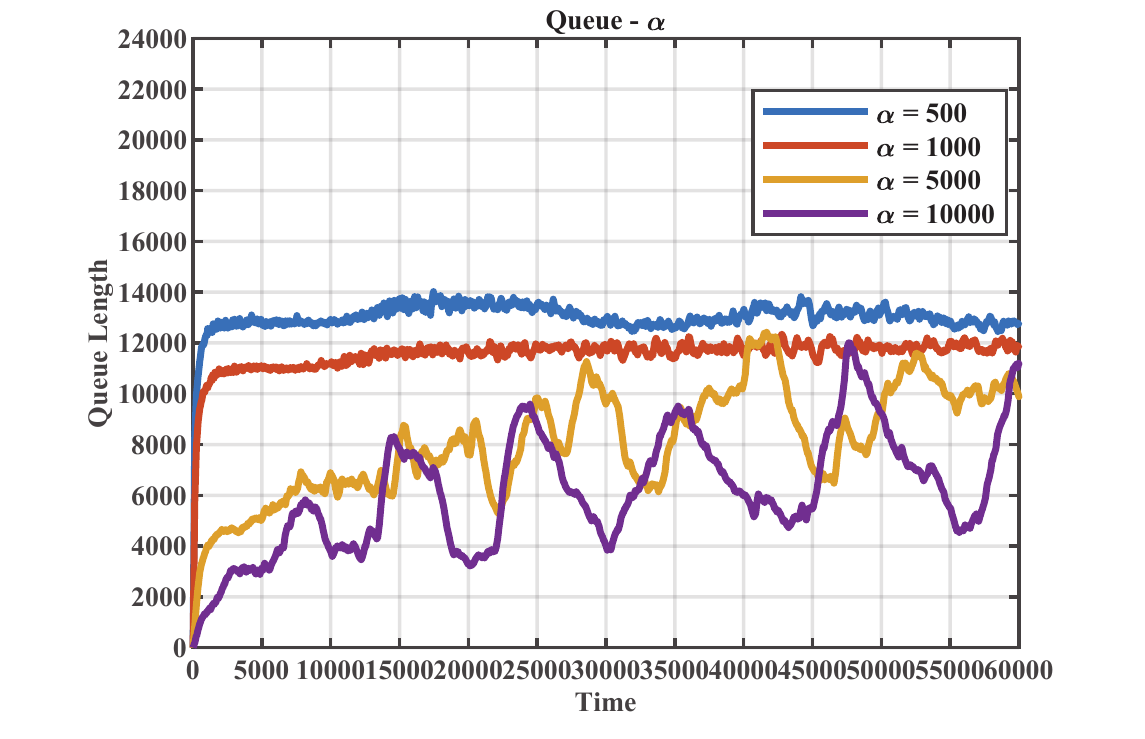}
			\vspace{-3mm}
			\label{fig:queue_alpha_1}
		\end{minipage}%
		
	}
	\subfigure[]{
		\begin{minipage}[]{0.31\linewidth}
			\centering
			\vspace{-2mm}
			\includegraphics[width=1.0\linewidth]{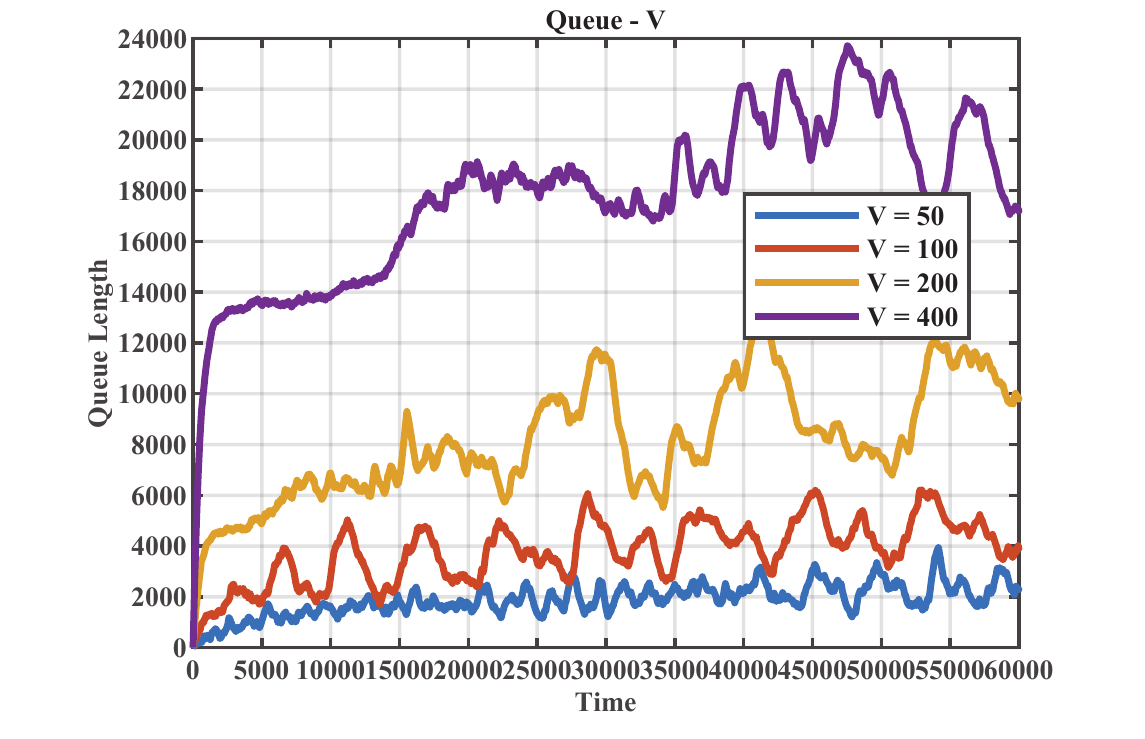}
			\vspace{-3mm}
			\label{fig:queue_V_1}
		\end{minipage}%
		
	}
	\subfigure[]{
		\begin{minipage}[]{0.31\linewidth}
			\centering
			\vspace{-2mm}
			\includegraphics[width=1.0\linewidth]{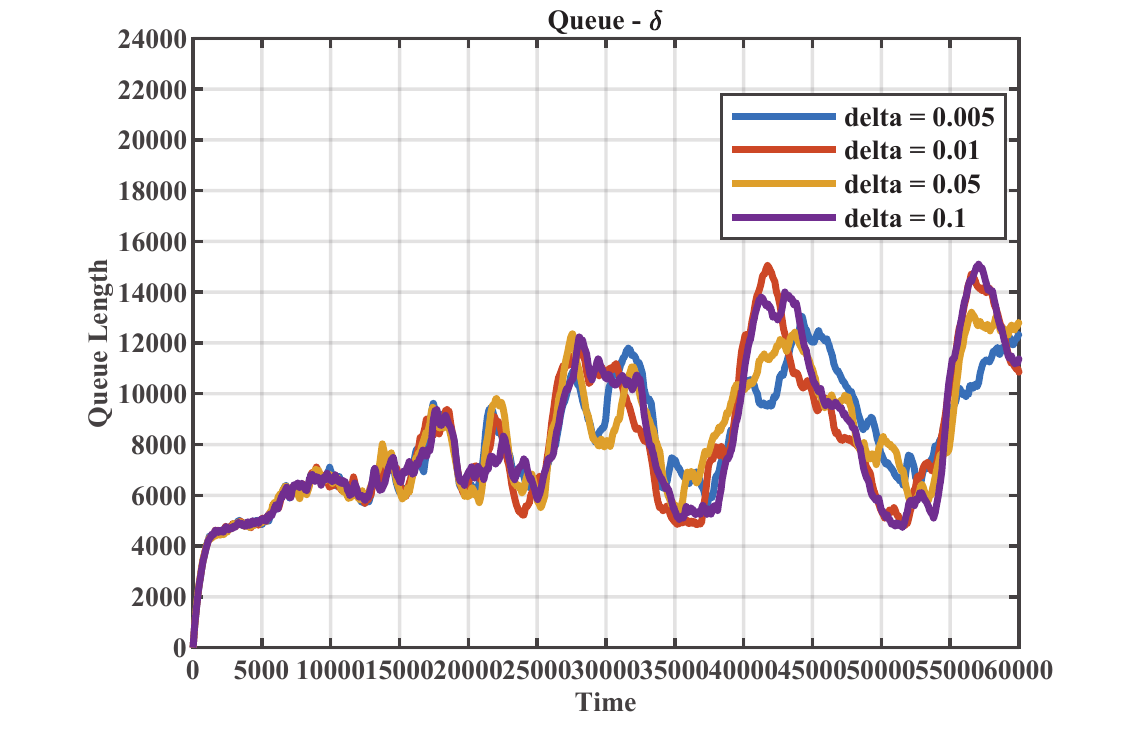}
			\vspace{-3mm}
			\label{fig:queue_delta_1}
		\end{minipage}%
		
	}
	\vspace{-3mm}
	\caption{The queue length under the P-GSMW policy with different parameter values.}
	\vspace{-3mm}
	\label{fig:parameter_value}		
\end{figure*}

\subsection{P-GSMW vs. GSMW}
We compare the behaviors of total queue length and instantaneous utility under P-GSMW and GSMW with the same parameter values of $(\alpha=5000, V=200, \delta = 0.005)$. Note that P-GSMW is run in our original setting (with queueing-style feedback delay) while GSMW is run in an imaginary no-delay setting.
It can be seen that, as in terms of job-size variables, P-GSMW switches between different instances of GSMW algorithms, both the queue length trajectory and the instantaneous utility trajectory under P-GSMW exhibits larger oscillation compared to those of GSMW.

\begin{figure}
	\centering
	\includegraphics[width=0.9\linewidth]{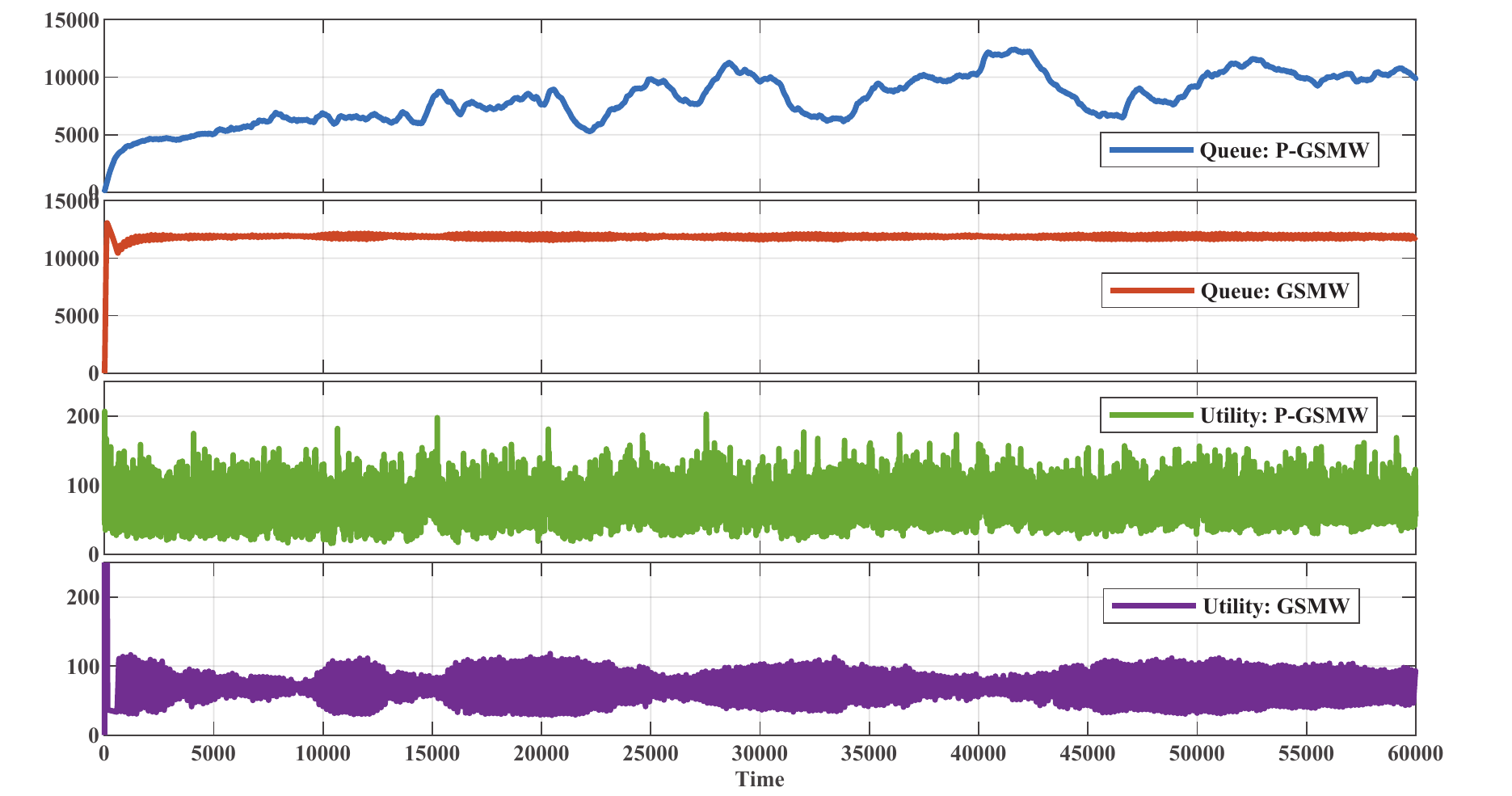}
	\vspace{-2mm}
	\caption{Queue length and instantaneous utility behavior under P-GSMW and GSMW.}
	\vspace{-3mm}
	\label{fig:comparison1}
\end{figure}

Furthermore, varying the time horizon $T$ in $\{10000, 20000, \ldots, 100000\}$ and setting $\alpha = 50\sqrt{T}, V=T^{1/4}, \delta = 1/\sqrt{T}$, we compare how the regret of P-GSMW and GSMW scales with the time horizon. Since it is computationally infeasible to compute the optimal strategy, we use $T$ times $OPT(\mathcal{P})$ as an upper bound of the expected utility achieved by the optimal strategy (see Theorem \ref{thm:upperbound}) and bound the regret by $T\cdot OPT(\mathcal{P})$ minus the utility achieved by the policies. We can see from Figure \ref{fig:compareison_regret1} that the regret of GSMW is lower by that of P-GSMW, which suggests that the feedback delay hurt the performance of the policy.

\begin{figure}
	\centering
	\includegraphics[width=0.8\linewidth]{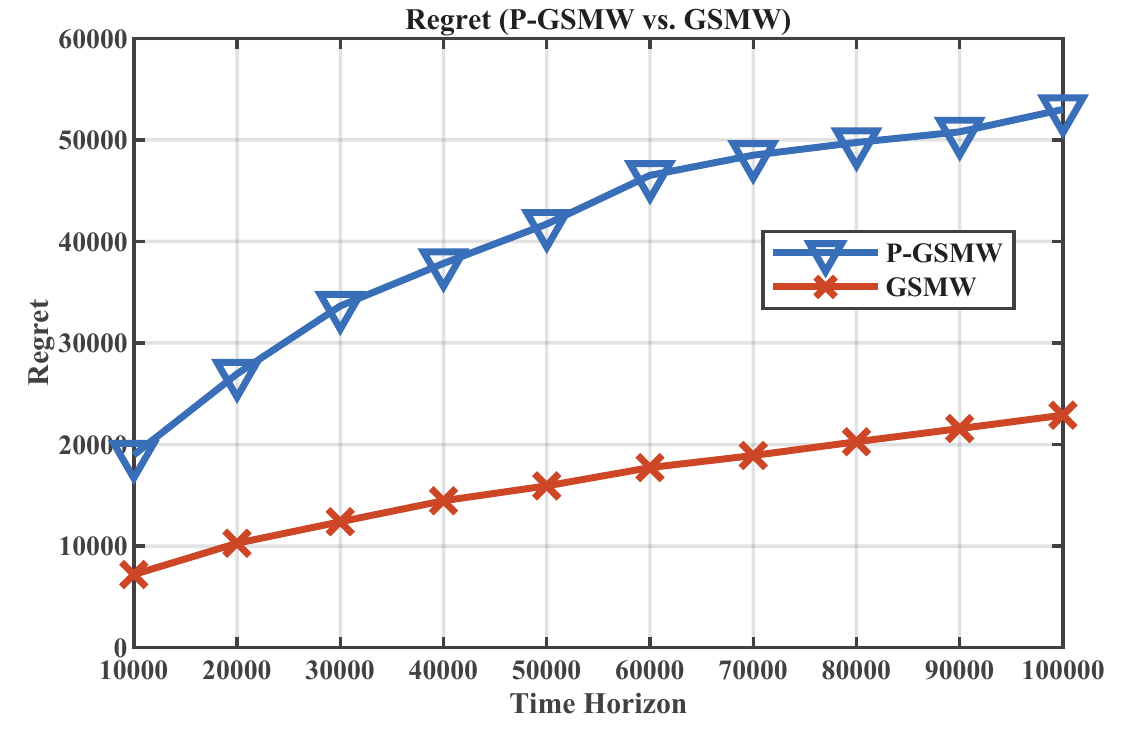}
	\vspace{-3mm}
	\caption{Regrets of P-GSMW and GSMW.}
	\vspace{-3mm}
	\label{fig:compareison_regret1}
\end{figure}
\subsection{Observation Noise}
We explore the situation where the utility observations are corrupted with noise and study the robustness of P-GSMW against such noise.
 We change the noise level from 0 (no noise) to 0.2 (each observation is corrupted with noise that is uniformly distributed in $[-0.2,0.2]$). 
 Varying the time horizon in $\{10000, 20000,\ldots, 100000 \}$ and setting $\alpha = 50\sqrt{T}, V=T^{1/4}, \delta = 1/\sqrt{T}$, we evaluate the scaling of regret under different noise levels.
 The results are plotted in Figures \ref{fig:regret1}.
 
 From Figures \ref{fig:regret1}, we see that the regrets of P-GSMW sublinear growth with time horizon even under a noise level of 0.2. Generally, the regret increases with noise level, but the difference is not significant for noise under 0.05.\footnote{To put this into perspective, there are 50 job classes and $OPT(\mathcal{P})$ is 84.4. The magnitude of the noise is about $0.05\times 50/84.4\simeq 3\%$ of the time-average utility.} The degradation of regret performance with noise can be attributed to that the variance of the gradient estimate.
 Recall that the gradient estimates of the P-GSMW policy at a time $t$ for class $k$ is equal to $\hat{\nabla}f_k(\hat{r}^I_k(t)):=\frac{{f}_k(\hat{r}^I_k(t)+\delta)-{f}_k(\hat{r}^I_k(t)-\delta)}{2\delta}$. If the two observations are corrupted by random noise $\epsilon_1, \epsilon_2$ respectively, then we have $\mathbb{E}[|\hat{\nabla}f_k(\hat{r}^I_k(t))|]\simeq \mathbb{E}[|\frac{{f}_k(\hat{r}^I_k(t)+\delta)-{f}_k(\hat{r}^I_k(t)-\delta)}{2\delta}|+\frac{|\epsilon_1-\epsilon_2|}{2\delta}]$. Due to the Lipschitz-continuity of $f_k$, the first term is of order $O(1)$ but the second is of $O(1/\delta)$. Thus, the second term dominates the magnitude of the gradient estimate and it increases with the magnitude of $\epsilon$ (i.e., the noise level). A gradient estimate of larger magnitude may lead to less stable updates, larger queue lengths, longer feedback delay, and ultimately, larger regret.

\begin{figure}
	\centering
	\includegraphics[width=0.8\linewidth]{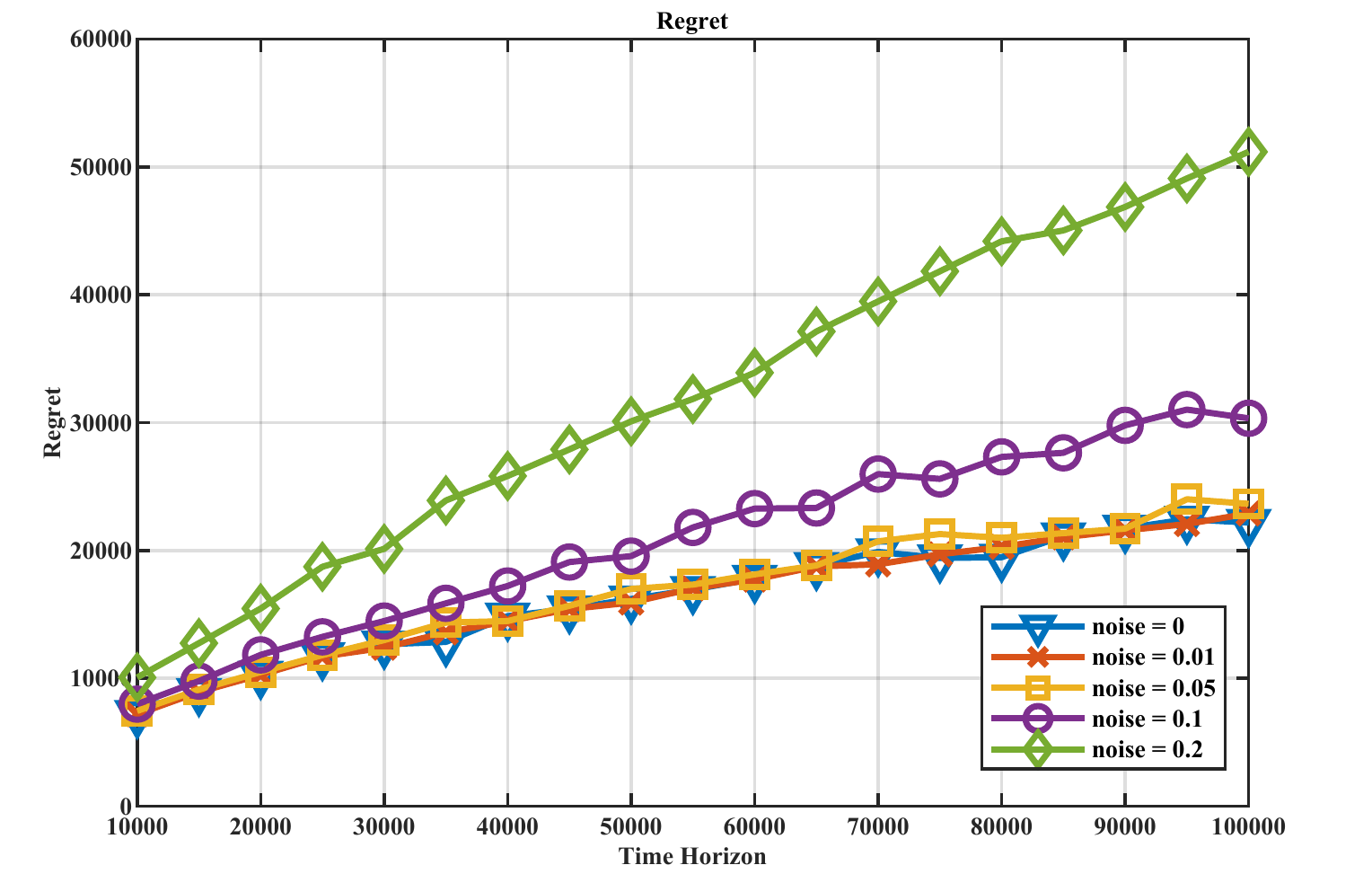}
	\vspace{-3mm}
	\caption{Regret of P-GSMW under different noise levels.}
	\vspace{-3mm}
	\label{fig:regret1}
\end{figure}

\section{Conclusion and Future Directions}\label{sec:conclusion}
 In this paper, we proposed a new NUM framework, Learning-NUM, where the utility functions are only accessible through zeroth-order feedback and the feedback experiences queueing-style delay. We upper bounded the expected utility achieved by any dynamic policy by the solution to a static optimization problem and designed an online scheduling policy (P-GSMW) that achieves sub-linear regret.

 Our scheduling policy achieves a regret of order $\tilde{O}(T^{3/4})$. This is worse than the existing lower bound of ${\Omega}(\sqrt{T})$, which was shown in the no-delay case. Hence, an important future direction is to determine whether the queueing-style delay of L-NUM fundamental increases the difficulty of the problem, i.e., a lower bound better than $\Omega(\sqrt{T})$ can be established, or that algorithm for L-NUM that has regret better than $\tilde{O}({T}^{3/4})$ exists. Finally, we have not theoretically studied the performance of P-GSMW policy under observation noise. Although it is expected that P-GSMW would have regret worse than $\tilde{O}(T^{3/4})$, how to minimize the adverse impact of the noise on the policy, and are there other methods that are more robust to noise are both questions of future interests.



\appendix
\section{Proofs of Theorems and Lemmas}\label{app:policyproof}
In this section, we present the proof of Theorem \ref{thm:regret}. For simplicity of notation, we will use $\hat{r}^I_k(t)$ or $\hat{r}_k(t)$ to denote the virtual job-size variable used at time $t$ (which suppresses the dependence of the invoked instance at $t$ on the time $t$).
Before doing so, we first lay out some preliminary results that will be useful in the subsequent analysis.

To begin with, we reiterate and prove the lemma that decomposes the regret into utility regret and queueing regret.
\begin{lemma}\label{lemma:decompose1}
	Let $\{r^*\}_k$ be the optimal solution to $\mathcal{P}$,
	\begin{align*}
	&R(\pi_{P-GSMW},T) \\\le & 2\mathbb{E}\left[\sum_{t=1}^{T}\sum_{k=1}^Kf_k(r^*_k)-f_k(\hat{r}_k(t)) \right]+C\sum_{i\in V}\sum_{k=1}^K\mathbb{E}[Q_i^k(T)]+CT\delta. 
	\end{align*}
\end{lemma}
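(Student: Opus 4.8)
The plan is to lower-bound $\mathbb{E}[U(\pi_{P-GSMW},T)]$ in terms of the virtual job-size variables and the terminal queue backlogs, and then invoke Theorem~\ref{thm:upperbound} to turn this into a regret bound. First I would write the utility collected by the policy as the utility of \emph{all injected jobs} minus the utility of those \emph{still undelivered} at time $T$. Since at each epoch $t$ and class $k$ the policy injects two jobs, of sizes $\hat{r}_k(t)+\delta$ and $\hat{r}_k(t)-\delta$, the total injected utility is $\sum_{t,k}\bigl(f_k(\hat{r}_k(t)+\delta)+f_k(\hat{r}_k(t)-\delta)\bigr)$, and hence
\[
U(\pi_{P-GSMW},T)=\sum_{t=1}^{T}\sum_{k=1}^{K}\bigl(f_k(\hat{r}_k(t)+\delta)+f_k(\hat{r}_k(t)-\delta)\bigr)-W(T),
\]
where $W(T)$ denotes the total utility of the jobs remaining in the queues at time $T$.

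Next I would control the two end terms. For the injected utility, $L$-Lipschitz continuity gives $f_k(\hat{r}_k(t)\pm\delta)\ge f_k(\hat{r}_k(t))-L\delta$, so summing over the $2TK$ injected jobs yields $\sum_{t,k}\bigl(f_k(\hat{r}_k(t)+\delta)+f_k(\hat{r}_k(t)-\delta)\bigr)\ge 2\sum_{t,k}f_k(\hat{r}_k(t))-CT\delta$, which is the source of the $CT\delta$ term. To bring in the optimum, I would apply Theorem~\ref{thm:upperbound}: since each epoch corresponds to two job-injection slots, the real horizon is $2T$, and the theorem gives $\sup_{\pi^*\in\bar{\Pi}}\mathbb{E}[U(\pi^*,T)]\le 2T\cdot OPT(\mathcal{P})=2\sum_{t=1}^{T}\sum_{k=1}^{K}f_k(r^*_k)$, where $\{r^*_k\}$ is optimal for $\mathcal{P}$. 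Subtracting the lower bound on $\mathbb{E}[U(\pi_{P-GSMW},T)]$ from $2T\cdot OPT(\mathcal{P})$ and collecting terms produces exactly the utility-regret term $2\mathbb{E}[\sum_{t,k}(f_k(r^*_k)-f_k(\hat{r}_k(t)))]$ together with the $CT\delta$ slack.

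The remaining step, which I expect to be the main obstacle, is to bound $\mathbb{E}[W(T)]$ by $C\sum_{i\in V}\sum_{k=1}^K\mathbb{E}[Q_i^k(T)]$. The difficulty is that $W(T)$ aggregates utility \emph{per job}, whereas the backlogs $Q_i^k(T)$ record only the total \emph{size} of the undelivered traffic, and injected jobs can have arbitrarily small size. I would resolve this with Lipschitz continuity together with the normalization $f_k(0)=0$ (standard in NUM): each undelivered class-$k$ job of size $r$ at node $i$ contributes $f_k(r)\le Lr$, so summing over all undelivered jobs gives $W(T)\le L\sum_{i,k}Q_i^k(T)$, and taking expectations with $C=L$ yields the queueing-regret term. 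If one does not assume $f_k(0)=0$, the same bound still holds after noting that every undelivered job carries strictly positive size so its count is controlled by the backlog, while the per-unit utility is governed by the Lipschitz constant. Combining the three estimates gives the stated decomposition; the identical argument, with $\hat{r}^{I_t}_k(t)$ read simply as $\hat{r}_k(t)$, applies verbatim to the no-delay GSMW algorithm and establishes Lemma~\ref{lemma:decompose}.
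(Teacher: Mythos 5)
Your proof follows essentially the same route as the paper's: write $U(\pi_{P-GSMW},T)$ as the utility of all injected jobs minus that of the jobs still queued at $T$, use $L$-Lipschitz continuity to replace $f_k(\hat{r}_k(t)\pm\delta)$ by $f_k(\hat{r}_k(t))$ at cost $CT\delta$, and invoke Theorem~\ref{thm:upperbound} with the factor $2$ (two injections per epoch) to get $\sup_{\pi^*}\mathbb{E}[U(\pi^*,T)]\le 2T\cdot OPT(\mathcal{P})=2\sum_{t,k}f_k(r^*_k)$. The one place you diverge is the step you correctly identify as the main obstacle: the paper disposes of the undelivered-utility term with the single remark that ``the utility of a single job is bounded,'' whereas you observe that the backlogs $Q_i^k(T)$ measure undelivered \emph{size}, not job \emph{count}, and close the gap via the normalization $f_k(0)=0$, which with Lipschitz continuity gives $f_k(r)\le Lr$ and hence $W(T)\le L\sum_{i,k}Q_i^k(T)$. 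This is more careful than the paper's own justification and is arguably what the paper implicitly needs, since the assumptions stated in Section~\ref{sec:model} do not force $f_k(0)=0$.

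One caveat: your fallback remark --- that without $f_k(0)=0$ the bound ``still holds'' because every undelivered job has strictly positive size so its count is controlled by the backlog --- does not work. Under the policy the paired injections have total size $2\hat{r}_k(t)\ge 2\delta$ with $\delta=T^{-1/2}$, so positive job sizes can be as small as order $T^{-1/2}$; the number of undelivered jobs is then only bounded by (backlog)$/\delta$, and bounding each job's utility by $D$ would yield $D\sum_{i,k}Q_i^k(T)/\delta=\tilde{O}(T)$, i.e., a constant $C$ depending on $T$ and a vacuous regret bound. So the normalization $f_k(0)=0$ (or some lower bound on job sizes independent of $T$) is genuinely load-bearing, not optional; with it, your argument is sound and matches the paper's.
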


\begin{proof}
First, observe that the utility obtained by $\pi_{P-GSMW}$ over the time horizon $T$ is equal to the total utility of the jobs sent from the sources minus the utility of the jobs that are not delivered at $T$. Recall that at time $t$, the two jobs sent from source $s_k$ have size $\hat{r}_k(t)+\delta$ and $\hat{r}_k(t)-\delta$, respectively. Since the utility of a single job is bounded, we have 
\begin{align*}
&\mathbb{E}[U(\pi_{P-GSMW},T)] \\\ge & \mathbb{E}\left[\sum_{t=1}^T\sum_{k=1}^K  {f}_k(\hat{r}_k(t)+\delta)+{f}_k(\hat{r}_k(t)-\delta)\right]-C\sum_{i\in V}\sum_{k=1}^K\mathbb{E}[Q_i^k(T)].
\end{align*}
By property (2) (Lipschitz continuity) of the underlying utility functions, we have 
\begin{align*}
&\mathbb{E}\left[\sum_{t=1}^T\sum_{k=1}^K  {f}_k(\hat{r}_k(t)+\delta)+{f}_k(\hat{r}_k(t)-\delta)\right]\\
\ge&2\mathbb{E}\left[\sum_{t=1}^{T}\sum_{k=1}^K {f}_k(\hat{r}_k(t))\right]-C\cdot T\delta,
\end{align*}
where the last inequality follows from
property (2) (Lipschitz continuity) of the underlying utility functions. 

By Theorem \ref{thm:upperbound} and that we are now assuming there are two injected jobs of each class at each time slot, $\sup_{\pi^*\in \bar{\Pi}}\mathbb{E}[U(\pi^*,T)]\le 2T\cdot OPT(\mathcal{P})=2\sum_{t=1}^T\sum_{k=1}^Kf_k(r^*_k)$. Putting the above analysis together, we obtain that
\begin{align*}
&R(\pi_{P-GSMW},T) \\\le & 2\mathbb{E}\left[\sum_{t=1}^{T}\sum_{k=1}^Kf_k(r^*_k)-f_k(\hat{r}_k(t)) \right]+C\sum_{i\in V}\sum_{k=1}^K\mathbb{E}[Q_i^k(T)]+CT\delta. 
\end{align*}
\end{proof}
 
Next, we show that the magnitude of the gradient estimates is bounded.
\begin{lemma}\label{lemma:gradbound}
	For all $k,t$, $\hat{\nabla}f_k(\hat{r}^I_k(t))\le L$ with probability 1.
\end{lemma}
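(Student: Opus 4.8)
The plan is to obtain the bound as a direct consequence of the $L$-Lipschitz continuity of the underlying utility function $f_k$ (property (3) in the utility model). The gradient estimate is defined in Line \ref{alg:update_p} as $\hat{\nabla}f_k(\hat{r}^I_k(t))=\frac{f_k(\hat{r}^I_k(t)+\delta)-f_k(\hat{r}^I_k(t)-\delta)}{2\delta}$, so bounding its magnitude amounts to bounding the numerator by $2\delta L$ and dividing through.

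First I would verify that the Lipschitz inequality is applicable here, i.e., that both evaluation points lie in $[0,B]$, the domain on which the Lipschitz property is assumed. This is exactly the purpose of the projection step $\mathcal{P}_{[\delta,B-\delta]}$: it enforces $\hat{r}^I_k(t)\in[\delta,B-\delta]$ deterministically, whence $\hat{r}^I_k(t)+\delta\in[2\delta,B]$ and $\hat{r}^I_k(t)-\delta\in[0,B-2\delta]$, both contained in $[0,B]$. Because this holds on every sample path (the projection is a deterministic operation applied after each update), the argument will yield a bound that holds with probability $1$, as claimed.

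Applying the Lipschitz bound with $r_1=\hat{r}^I_k(t)-\delta$ and $r_2=\hat{r}^I_k(t)+\delta$ gives $|f_k(\hat{r}^I_k(t)+\delta)-f_k(\hat{r}^I_k(t)-\delta)|\le L\cdot|(\hat{r}^I_k(t)+\delta)-(\hat{r}^I_k(t)-\delta)|=2\delta L$. Dividing by $2\delta$ yields $|\hat{\nabla}f_k(\hat{r}^I_k(t))|\le L$, which in particular establishes $\hat{\nabla}f_k(\hat{r}^I_k(t))\le L$.

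I do not anticipate any genuine obstacle: the statement is a one-line estimate once the domain condition is checked. The only point that warrants a sentence of care is justifying that the evaluation points stay in $[0,B]$, since the Lipschitz hypothesis is only posited on that interval; this is handled entirely by the projection and therefore requires no probabilistic or limiting argument.
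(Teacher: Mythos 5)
Your proof is correct and follows exactly the paper's (one-line) argument: the gradient estimate is a difference quotient over an interval of width $2\delta$, so $L$-Lipschitz continuity of $f_k$ bounds it by $L$, with the projection $\mathcal{P}_{[\delta,B-\delta]}$ guaranteeing deterministically that both evaluation points $\hat{r}^I_k(t)\pm\delta$ lie in $[0,B]$. Your write-up is in fact more careful than the paper's, which simply asserts the conclusion follows from Lipschitz continuity; the only trivial slip is citing the update line rather than the line defining $\hat{\nabla}f_k$, which does not affect the argument.
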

\begin{proof}
	The lemma follows straightforwardly from the Lipschitz continuity of the underlying utility functions.
\end{proof}

We next show that the gradient estimate is unbiased with respect to a smoothed version of $f_k$, which is defined as $\tilde{f}_k(r)=\frac{1}{2\delta}\int_{-\delta}^{\delta}f_k(r+z)\mathrm{d}z$. Note that by definition $\tilde{f}_k$ is also concave and Lipschitz-continuous. Moreover, by the concavity and Lipschitz continuity of ${f}_k$, for all $r\in[\delta,B-\delta]$, $f_k(r)-C\delta\le \tilde{f}_k(r) \le f_k(r)$ \cite{cite:flaxman}.
\begin{lemma}\label{lemma:meangrad}
	For all $k,t$, $\hat{\nabla}f_k(\hat{r}_k(t))=\nabla\tilde{f}_k(\hat{r}_k(t))$.
\end{lemma}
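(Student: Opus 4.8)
The plan is to recognize that the claimed identity is a direct consequence of the fundamental theorem of calculus, once the smoothing integral is rewritten with the smoothing offset absorbed into the variable of integration. Concretely, I would start from the definition $\tilde{f}_k(r)=\frac{1}{2\delta}\int_{-\delta}^{\delta}f_k(r+z)\,\mathrm{d}z$ and perform the change of variables $u=r+z$, which turns the fixed-interval integral of a shifted integrand into a moving-interval integral of $f_k$ itself:
\[
\tilde{f}_k(r)=\frac{1}{2\delta}\int_{r-\delta}^{r+\delta}f_k(u)\,\mathrm{d}u.
\]

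Next I would differentiate this expression in $r$. Since $f_k$ is $L$-Lipschitz continuous by property (3), it is in particular continuous on $[0,B]$, so the fundamental theorem of calculus applies to both endpoints and yields
\[
\nabla\tilde{f}_k(r)=\frac{1}{2\delta}\bigl(f_k(r+\delta)-f_k(r-\delta)\bigr).
\]
This is exactly the gradient estimate $\hat{\nabla}f_k(r)$ defined in Line \ref{alg:gradient} of Algorithm \ref{alg:policy}. Evaluating at $r=\hat{r}_k(t)$, which lies in $[\delta,B-\delta]$ by the projection step so that $r\pm\delta\in[0,B]$ and the integrand is well-defined, gives the claim $\hat{\nabla}f_k(\hat{r}_k(t))=\nabla\tilde{f}_k(\hat{r}_k(t))$.

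There is essentially no hard step here; the only point requiring a moment of care is that $f_k$ is assumed merely Lipschitz rather than differentiable, so one invokes continuity of $f_k$ (guaranteed by Lipschitz continuity) to justify the endpoint evaluations in the fundamental theorem and to ensure that $\tilde{f}_k$ is itself continuously differentiable. I would also note that although the statement is phrased as an unbiasedness result, in this scalar setting the two-point central-difference estimate is deterministic, so the relation is an \emph{exact} equality rather than an equality in expectation as in the higher-dimensional ball-smoothing argument of \cite{cite:flaxman}.
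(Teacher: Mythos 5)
Your proof is correct and follows exactly the paper's approach: the paper's own proof is the one-line remark that the lemma ``follows from the Fundamental Theorem of Calculus,'' and your change of variables $u=r+z$ followed by differentiation of $\frac{1}{2\delta}\int_{r-\delta}^{r+\delta}f_k(u)\,\mathrm{d}u$ simply spells out the details the paper leaves implicit, including the continuity of $f_k$ (from Lipschitz continuity) needed for the endpoint evaluations and the domain check $\hat{r}_k(t)\in[\delta,B-\delta]$. Your closing observation that the identity is an exact deterministic equality---rather than an unbiasedness-in-expectation statement as in the multidimensional sphere-sampling setting of the cited work---is accurate and a nice clarification.
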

\begin{proof}
	The lemma follows from the Fundamental Theorem of Calculus.
\end{proof}

Finally, we establish three basic properties of the updates of the P-GSMW policy. The first involves the update of virtual job size variables, the second considers the Max-Weight rule of choosing actions and the third deals with the queue dynamics. 
\begin{lemma}\label{lemma:update}
	For each $t$, let $I$ be the instance invoked at time $t$, we have for any $\{r \}_k$ with $r_k\in[\delta, B-\delta]$
	\begin{align*}
	&\sum_{k=1}^K \left[V\hat{\nabla}f_k(\hat{r}^I_k(t-1))(r_k-\hat{r}^I_k(t-1))\right] +\sum_{k=1}^K\left[Q^{k}_{s_{k}}(t)\hat{r}^I_k(t)  \right]\\
	\le & \sum_{k=1}^K\left[Q^{k}_{s_{k}}(t)r_k+\alpha[(\hat{r}^I_k(t-1)-r_k)^2-(\hat{r}^I_k(t)-r_k)^2]+{C}.   \right]
	\end{align*}
\end{lemma}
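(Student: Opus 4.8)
The plan is to treat Line \ref{alg:update_p} as a single projected gradient-ascent step and exploit the variational (first-order optimality) characterization of the Euclidean projection onto the convex interval $[\delta,B-\delta]$. To lighten notation, fix $t$ and the invoked instance $I$, and write $a_k=\hat{r}^I_k(t-1)$, $b_k=\hat{r}^I_k(t)$, and $g_k=V\hat{\nabla}f_k(a_k)-Q^{k}_{s_{k}}(t)$, so that Line \ref{alg:update_p} reads $b_k=\mathcal{P}_{[\delta,B-\delta]}\!\big[a_k+\tfrac{1}{\alpha}g_k\big]$. Since $r_k\in[\delta,B-\delta]$ lies in the feasible set, the projection inequality gives $\big(a_k+\tfrac1\alpha g_k-b_k\big)(r_k-b_k)\le 0$, equivalently $g_k(r_k-b_k)\le\alpha\,(b_k-a_k)(r_k-b_k)$.

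Next I would convert the right-hand side into a telescoping difference of squares using the three-point identity $2(b_k-a_k)(r_k-b_k)=(r_k-a_k)^2-(r_k-b_k)^2-(b_k-a_k)^2$, which yields
\begin{align*}
g_k(r_k-b_k)\le \tfrac{\alpha}{2}\big[(a_k-r_k)^2-(b_k-r_k)^2\big]-\tfrac{\alpha}{2}(b_k-a_k)^2 .
\end{align*}
Substituting $g_k=V\hat{\nabla}f_k(a_k)-Q^{k}_{s_{k}}(t)$ and moving the queue term across isolates $Q^{k}_{s_{k}}(t)(r_k-b_k)=Q^{k}_{s_{k}}(t)r_k-Q^{k}_{s_{k}}(t)\hat{r}^I_k(t)$, which is exactly the queue contribution appearing in the claimed bound, together with the telescoping difference-of-squares term of order $\alpha$.

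The one genuine subtlety is that this route produces the gradient inner product evaluated at $b_k$, namely $V\hat{\nabla}f_k(a_k)(r_k-b_k)$, whereas the statement requires it at $a_k$. I would bridge the gap by writing $V\hat{\nabla}f_k(a_k)(r_k-a_k)=V\hat{\nabla}f_k(a_k)(r_k-b_k)+V\hat{\nabla}f_k(a_k)(b_k-a_k)$ and absorbing the leftover first-order term against the $-\tfrac{\alpha}{2}(b_k-a_k)^2$ retained above. Completing the square gives $V\hat{\nabla}f_k(a_k)(b_k-a_k)-\tfrac{\alpha}{2}(b_k-a_k)^2\le \tfrac{V^2\hat{\nabla}f_k(a_k)^2}{2\alpha}$, and Lemma \ref{lemma:gradbound} bounds $\hat{\nabla}f_k(a_k)\le L$, so this residual is at most $V^2L^2/(2\alpha)$. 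Under the parameter choice of Theorem \ref{thm:regret} ($V=\Theta(T^{1/4})$, $\alpha=\Theta(\sqrt T)$) this is $\Theta(L^2)$, independent of $T$, and is therefore subsumed into the generic constant $C$.

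I expect this square-completion step --- taming the quadratic $(b_k-a_k)^2$ created by the mismatch between the point $a_k$ at which the gradient is sampled and the updated value $b_k$ that carries the queue weight --- to be the main technical point, since it is precisely what keeps the per-step error $O(1)$ rather than scaling with the (possibly large) step. Summing the resulting per-$k$ inequality over $k=1,\dots,K$ then gives the stated lemma.
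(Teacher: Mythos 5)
Your proposal is correct, but it takes a genuinely different route from the paper's. The paper starts from the \emph{non-expansiveness} of the projection, expanding $(\hat{r}^I_k(t)-r_k)^2 \le \bigl[\hat{r}^I_k(t-1)+\tfrac{1}{\alpha}(V\hat{\nabla}f_k-Q^{k}_{s_k}(t))-r_k\bigr]^2$, and then must clean up two awkward terms: a $Q^{k}_{s_k}(t)^2/\alpha$ term (which is \emph{not} $O(1)$, since queues can be $\Theta(\sqrt{T})$) and a cross term $V\hat{\nabla}f_k(\hat{r}^I_k(t-1))Q^{k}_{s_k}(t)/\alpha$. It removes the first by converting $Q^{k}_{s_k}(t)\hat{r}^I_k(t-1)$ into $Q^{k}_{s_k}(t)\hat{r}^I_k(t)$ via the explicit update direction---which forces a case analysis on whether the update is clipped at the lower boundary $\delta$---and drops the second using monotonicity of $f_k$ (so that $\hat{\nabla}f_k=\nabla\tilde{f}_k\ge 0$ by Lemma \ref{lemma:meangrad}). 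Your obtuse-angle (variational) route avoids both difficulties: the inequality $g_k(r_k-b_k)\le\alpha(b_k-a_k)(r_k-b_k)$ puts the queue weight on the updated point $b_k=\hat{r}^I_k(t)$ from the outset, so no $Q^2/\alpha$ term ever appears, no clipping case analysis is needed (the variational inequality holds uniformly over all projection outcomes), and monotonicity of $f_k$ is never invoked; the only residual is $V^2L^2/(2\alpha)=O(1)$ from Young's inequality, exactly as you identify. The one discrepancy is the constant: you obtain $\tfrac{\alpha}{2}\bigl[(\hat{r}^I_k(t-1)-r_k)^2-(\hat{r}^I_k(t)-r_k)^2\bigr]$ where the statement has coefficient $\alpha$, and since this difference of squares is not sign-definite, your bound does not formally imply the stated one (nor conversely). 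This is harmless---everywhere the lemma is used (the drift bound with $r_k=\delta$, and the telescoping sum in the utility-regret analysis) only the telescoping structure with some fixed multiple of $\alpha$ matters---and in fact $\alpha/2$ is the constant the projected-gradient argument naturally produces. Your version is arguably the safer statement: in the lower-clipped case (e.g., $\hat{\nabla}f_k\approx 0$, $Q^{k}_{s_k}(t)\approx\alpha(\hat{r}^I_k(t-1)-\delta)$ forcing $\hat{r}^I_k(t)=\delta$, with $r_k$ near $B-\delta$) the right-hand side with coefficient $\alpha$ can be driven to $-\Theta(\alpha)$ while the left-hand side stays nonnegative, so the paper's claim that the lemma ``trivially holds'' at $\hat{r}^I_k(t)=\delta$ is delicate at best, whereas your derivation covers that case without any special pleading.
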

\begin{proof}
	
	From Line \ref{alg:update_p} of \textbf{Algorithm \ref{alg:p-gsmw}}, since the projection operator is a contraction, we have for each $k$
	\begin{align*}
	&(\hat{r}^I_k(t)-r_k)^2\\
	\le & \left[ \hat{r}^I_k(t-1)+\frac{1}{\alpha} (V\cdot \hat{\nabla}f_k(\hat{r}_k(t-1))-Q^{k}_{s_{k}}(t)) -r_k \right]^2\\
	=&  (\hat{r}^I_k(t-1)-r_k)^2 + \frac{1}{\alpha}[\hat{\nabla}f_k(\hat{r}^I_k(t-1))(\hat{r}_k(t-1)-r_k)-Q^{k}_{s_{k}}(t)(\hat{r}^i_k(t-1)-r_k)]\\
	&\quad +\frac{V^2(\hat{\nabla}f_k(\hat{r}^I_k(t-1))^2}{\alpha^2}-\frac{2V\cdot \hat{\nabla}f_k(\hat{r}^I_k(t-1))Q^{k}_{s_{k}}(t)}{\alpha^2}+\frac{Q^{k}_{s_{k}}(t)^2}{\alpha^2}
	\end{align*}
	Since $\alpha = 2K\sqrt{T}/\eta,V=T^{1/4}$ for all $k,\tau$, we have $\frac{V^2(\hat{\nabla}f_k(\hat{r}^I_k(t-1))^2}{\alpha^2}\le \frac{C}{\sqrt{T}}$. Plugging these in and rearranging the term, we obtain
	\begin{align*}
	&V\hat{\nabla}f_k(\hat{r}^I_k(t-1))(r_k-\hat{r}^I_k(t-1)) +Q^{k}_{s_{k}}(t)\hat{r}^I_k(t) +Q^{k}_{s_{k}}(t)[\hat{r}^I_k(t-1)-\hat{r}^I_k(t)]  \\
=&V\hat{\nabla}f_k(\hat{r}^I_k(t-1))(r_k-\hat{r}^I_k(t-1)) +Q^{k}_{s_{k}}(t)\hat{r}^I_k(t-1)  \\
\le & Q^{k}_{s_{k}}(t)r_k+\alpha[(\hat{r}^I_k(t-1)-r_k)^2-(\hat{r}^I_k(t)-r_k)^2]+\frac{Q^k_{s_{k}}(t)^2}{\alpha}\\ &\ - \frac{2V\cdot \hat{\nabla}f_k(\hat{r}^I_k(t-1))Q^{k}_{s_{k}}(t)}{\alpha^2}+{C}.   \\
	\end{align*}
	As $\delta=T^{-1/2}$, the lemma trivially hold for $\hat{r}^I_k(t)=\delta$ for all $k$. Hence, we only need to consider the case where $\hat{r}^I_k(t)>\delta$, which implies that $\hat{r}^I_k(t)-\hat{r}^I_k(t-1)\le \frac{1}{\alpha} (V\cdot \hat{\nabla}f_k(\hat{r}_k(t-1))-Q^{k}_{s_{k}}(t))$. It follows that
	\begin{align*}
	&V\hat{\nabla}f_k(\hat{r}^I_k(t-1))(r_k-\hat{r}^I_k(t-1)) +Q^{k}_{s_{k}}(t)\hat{r}^I_k(t)  \\
	\le & Q^{k}_{s_{k}}(t)r_k+\alpha[(\hat{r}^I_k(t-1)-r_k)^2-(\hat{r}^I_k(t)-r_k)^2] - \frac{V\cdot \hat{\nabla}f_k(\hat{r}^I_k(t-1))Q^{k}_{s_{k}}(t)}{\alpha^2}+{C}.   \\
	\end{align*}
	 By Lemma \ref{lemma:meangrad}, since $\tilde{f}$ is non-decreasing, $\hat{\nabla}f_k(\hat{r}^I_k(t-1))=\nabla{\tilde{f}_k(\hat{r}^I_k(t-1))}\ge 0$, we have $\frac{2V\cdot \hat{\nabla}f_k(\hat{r}^I_k(t-1))Q^{k}_{s_{k}}(t)}{\alpha^2}\ge 0$. Therefore,
	 \begin{align*}
	 &V\hat{\nabla}f_k(\hat{r}^I_k(t-1))(r_k-\hat{r}^I_k(t-1)) +Q^{k}_{s_{k}}(t)\hat{r}^I_k(t)  \\
	 \le & Q^{k}_{s_{k}}(t)r_k+\alpha[(\hat{r}^I_k(t-1)-r_k)^2-(\hat{r}^I_k(t)-r_k)^2] +{C}.   \\
	 \end{align*}
	The lemma follows by summing over $k$.
\end{proof}

\begin{lemma}\label{lemma:maxweight}
	At every even time slot $t$, for any $\bm{x}\in \mathcal{X}$, and for all $\{r\}_k$,
	\begin{align*}
	&\sum_{k=1}^KQ^{k}_{s_{k}}(t)\left[{r}_k-\sum_{j\in\mathcal{N}_{s_k}}A_{s_kj}^k(\omega(t),{\bm{x}}(t))\right]\\&\quad+\sum_{k=1}^K\sum_{i\in V,i\neq s_k}Q_i^k(t)\left[\sum_{j:i\in \mathcal{N}_j}{A}_{ji}^k(\omega(t),{\bm{x}}(t))-\sum_{j\in \mathcal{N}_i}{A}_{ij}^k(\omega(t),{\bm{x}}(t))\right]
	\\\le & \sum_{k=1}^KQ^{k}_{s_{k}}(t)\left[{r}_k-\sum_{j\in\mathcal{N}_{s_k}}A_{s_kj}^k(\omega(t),{\bm{x}})\right]\\&\quad+\sum_{k=1}^K\sum_{i\in V,i\neq s_k}Q_i^k(t)\left[\sum_{j:i\in \mathcal{N}_j}{A}_{ji}^k(\omega(t),{\bm{x}})-\sum_{j\in \mathcal{N}_i}{A}_{ij}^k(\omega(t),{\bm{x}})\right]
	\end{align*}
\end{lemma}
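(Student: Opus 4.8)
The plan is to reduce the inequality to the defining optimality property of the Max-Weight action $\bm{x}(t)$ selected in Line \ref{alg:maxweight_p}. The first thing I would note is that the job-size terms are action-independent: on both sides of the claimed inequality, $r_k$ enters only through $\sum_{k} Q^k_{s_k}(t)\,r_k$, which involves neither $\bm{x}(t)$ nor the comparator $\bm{x}$. These terms therefore cancel, and it suffices to establish the inequality for the purely action-dependent quantity
\[
E(\bm{x}) := -\sum_{k=1}^K Q^k_{s_k}(t)\sum_{j\in\mathcal{N}_{s_k}} A_{s_kj}^k(\omega(t),\bm{x}) + \sum_{k=1}^K\sum_{i\in V,\, i\neq s_k} Q_i^k(t)\left[\sum_{j:i\in\mathcal{N}_j} A_{ji}^k(\omega(t),\bm{x}) - \sum_{j\in\mathcal{N}_i} A_{ij}^k(\omega(t),\bm{x})\right],
\]
namely $E(\bm{x}(t)) \le E(\bm{x})$ for every $\bm{x}\in\mathcal{X}$.

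The core step is to regroup this node-indexed sum into a link-indexed (back-pressure) sum. I would exchange the order of summation so that each offered rate $A_{ij}^k(\omega(t),\bm{x})$ is collected together with its total coefficient. Each such variable appears exactly twice: once as outgoing flow from node $i$, contributing $-Q_i^k(t)$, and once as incoming flow to node $j$, contributing $+Q_j^k(t)$. After this regrouping one obtains the identity $E(\bm{x}) = -\sum_{(i,j)\in\mathcal{E}}\sum_{k=1}^K A_{ij}^k(\omega(t),\bm{x})\,[Q_i^k(t)-Q_j^k(t)]$, which is precisely the negative of the Max-Weight objective maximized in Line \ref{alg:maxweight_p}. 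Since $\bm{x}(t)$ is chosen to maximize $\sum_{(i,j)}\sum_k A_{ij}^k(\omega(t),\bm{x})[Q_i^k(t)-Q_j^k(t)]$, it minimizes $E(\bm{x})$, which yields $E(\bm{x}(t))\le E(\bm{x})$ and hence the lemma.

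The main obstacle will be the careful bookkeeping of the boundary terms so that the regrouping matches the clean back-pressure form. The source contribution is written separately in the statement, with only the outgoing coefficient $-Q^k_{s_k}(t)$ present; for the destination I would invoke the convention $Q^k_{d_k}(t)=0$ from (\ref{eq:dynamics2}), so the incoming contribution to $d_k$ vanishes and is consistent with the back-pressure weight $Q^k_{d_k}(t)=0$. Any incoming-flow contribution to a source node can be dropped without affecting the argument, either because no class-$k$ traffic is routed back into $s_k$ or, more generally, by invoking the downward-closing property of $\Lambda(\omega)$ to set such rates to zero. Once these boundary terms are reconciled, the identity $E(\bm{x}) = -\sum_{(i,j)}\sum_k A_{ij}^k(\omega(t),\bm{x})[Q_i^k(t)-Q_j^k(t)]$ is immediate, and the lemma follows directly from the optimality of $\bm{x}(t)$.
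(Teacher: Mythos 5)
Your proposal is correct and takes essentially the same route as the paper's (one-line) proof: cancel the action-independent $\sum_k Q^k_{s_k}(t)\,r_k$ terms, regroup the node-indexed sums into the link-indexed back-pressure objective of Line \ref{alg:maxweight_p}, and conclude from the optimality of $\bm{x}(t)$. Your extra care with the boundary terms (the $Q^k_{d_k}(t)=0$ convention and the absence of class-$k$ flow into $s_k$, consistent with the dynamics (\ref{eq:dynamics1})--(\ref{eq:dynamics3})) is bookkeeping the paper's proof silently elides, and it is handled correctly.
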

\begin{proof}
	By rearranging the terms, we recover exactly the right-hand-side of Line \ref{alg:maxweight_p} of \textbf{Algorithm \ref{alg:p-gsmw}}.
	The lemma then follows from the construction of the Max-Weight update rule. Note that the inequality holds for all $\{r\}_k$ since the terms involving $\{r\}_k$ do not affect the maximization.
\end{proof}

\begin{lemma}\label{lemma:dynamics}
	For each $k,t$, recall that $\hat{r}_k(t)=\hat{r}_k^i(t)$ for the invoked instance $i$. 
	\begin{align}
	&Q_{s_k}^k(t+1)^2-Q_{s_k}^k(t)^2\le & 4Q_{s_k}^k(t)\left[\hat{r}_k(t)-\sum_{j\in\mathcal{N}_{s_k}}A_{s_kj}^k(\omega(t),{\bm{x}}(t))\right] + C, \label{ieq:driftd1}
	\end{align}
	and for each $i\in V,k, i\neq s_k,d_k$,
	\begin{align}
	&Q_{i}^k(t+1)^2-Q_{i}^k(t)^2\nonumber\\
	\le& 4Q_i^k(t)\left[\sum_{j:i\in \mathcal{N}_j}{A}_{ji}^k(\omega(t),{{\bm{x}}(t)})-\sum_{j\in \mathcal{N}_i}{A}_{ij}^k(\omega(t),{{\bm{x}}(t)})\right] + C. \label{ieq:driftd2}
	\end{align}
\end{lemma}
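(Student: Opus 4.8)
The plan is to derive both inequalities directly from the single-slot Lindley recursions (Equations \ref{eq:dynamics1}--\ref{eq:dynamics3}), keeping in mind that one index-$t$ ``slot'' is really an epoch consisting of two underlying slots that share the same state $\omega(t)$ and the same Max-Weight action $\bm{x}(t)$. The factor $4=2\times 2$ in (\ref{ieq:driftd1}) and (\ref{ieq:driftd2}) should be read as the product of the usual factor $2$ obtained by squaring a Lindley recursion and a factor $2$ coming from the two jobs injected per epoch.

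First I would record the elementary single-step bound: for any queue evolving as $Q^{+}=[Q+a-b]^{+}$ with nonnegative arrival $a$ and service $b$, dropping the positive part gives $(Q^{+})^{2}-Q^{2}\le 2Q(a-b)+(a-b)^{2}$. Since job sizes lie in $[0,B]$ (so $\hat r_k\pm\delta\in[0,B]$) and each offered rate $A_{ij}^k$ is at most $A$ with bounded node degree, the term $(a-b)^{2}$ is bounded by a $T$-independent constant, which I fold into $C$. For the source queue I then apply this bound to each of the two underlying slots of epoch $t$: the arrival is $\hat r_k(t)+\delta$ in the first slot and $\hat r_k(t)-\delta$ in the second, while the service is $\sum_{j\in\mathcal N_{s_k}}A_{s_kj}^k(\omega(t),\bm{x}(t))$ in each. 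Summing the two single-step inequalities yields a drift controlled by $2Q_{s_k}^k(t)(\hat r_k(t)+\delta-\sum_j A)+2Q'(\hat r_k(t)-\delta-\sum_j A)$ plus a constant, where $Q'$ is the intermediate queue value after the first slot. The one nonroutine point is replacing $Q'$ by $Q_{s_k}^k(t)$: since $|Q'-Q_{s_k}^k(t)|\le|\,\hat r_k(t)+\delta-\sum_j A\,|$ is bounded and $\hat r_k(t)-\delta-\sum_j A$ is bounded, this swap costs only an additive constant. After the swap the $\pm\delta$ terms cancel and the two service terms combine, leaving exactly $4Q_{s_k}^k(t)[\hat r_k(t)-\sum_{j}A_{s_kj}^k(\omega(t),\bm{x}(t))]+C$, which is (\ref{ieq:driftd1}).

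The internal-queue bound (\ref{ieq:driftd2}) follows the same two-slot template, now with per-slot arrival $\sum_{j:i\in\mathcal N_j}A_{ji}^k$ and service $\sum_{j\in\mathcal N_i}A_{ij}^k$. The only additional subtlety is that the true inflow into $Q_i^k$ is the \emph{realized} transmission from upstream, which may be strictly below the offered rate $\sum_{j:i\in\mathcal N_j}A_{ji}^k$; but because $Q_i^k(t)\ge 0$, replacing realized by offered arrivals can only increase the linear drift term $2Q_i^k(t)(\cdot)$, so the offered-rate inequality is still a valid upper bound. I would then combine the two slots and absorb the bounded leftovers into $C$ exactly as in the source case to obtain (\ref{ieq:driftd2}).

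The main obstacle I anticipate is bookkeeping rather than conceptual: carefully handling the intermediate queue value $Q'$ and the realized-versus-offered arrival gap so that every discarded quantity is provably $O(1)$ in $T$ (depending only on $A$, $B$, $\delta\le 1$, and the maximum degree of $\mathcal G$), ensuring the resulting constant $C$ is genuinely independent of $T$, as required by the convention fixed at the start of the proof of Theorem \ref{thm:regret}.
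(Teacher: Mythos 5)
Your proposal is correct and takes essentially the same route as the paper: both expand the quadratic Lindley drift over the two-slot epoch, extract the cross term $4Q_{s_k}^k(t)\bigl[\hat r_k(t)-\sum_{j\in\mathcal{N}_{s_k}}A_{s_kj}^k(\omega(t),\bm{x}(t))\bigr]$, and absorb every bounded leftover (job sizes in $[0,B]$, rates at most $A$, bounded degree) into the $T$-independent constant $C$, with the internal-queue case handled identically. The only difference is bookkeeping: the paper aggregates the two sub-slots into a single recursion step $Q_{s_k}^k(t+1)^2\le\bigl[Q_{s_k}^k(t)+2\hat r_k(t)-2\sum_{j\in\mathcal{N}_{s_k}}A_{s_kj}^k(\omega(t),\bm{x}(t))\bigr]^2$, whereas your per-slot application with the intermediate-queue swap (and the explicit realized-versus-offered remark) is a more careful rendering of the same computation that quietly patches the $O(1)$ slack the paper's one-line aggregation glosses over.
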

\begin{proof}
	Note that each time slot now correspond to two time slots in our original model of Section \ref{sec:model}. The lemma follows directly from the dynamics of the queue evolution. For (\ref{ieq:driftd1}), we have
	\begin{align}
	&Q_{s_k}^k(t+1)^2-Q_{s_k}^k(t)^2\nonumber\\\le& \left[Q_{s_k}^k(t)+2\hat{r}_k(t)-2\sum_{j\in\mathcal{N}_{s_k}}A_{s_kj}^k(\omega(t),{\bm{x}}(t))\right]^2 -Q_{s_k}^k(t)^2\nonumber\\
	=&Q_{s_k}^k(t)^2+4Q_{s_k}^k(t)\left[\sum_{k\in n}\hat{r}_k(t)-\sum_{j\in\mathcal{N}_{s_k}}A_{s_kj}^k(\omega(t),\hat{\bm{x}}(t))\right]\nonumber\\ &\quad + 4\left[\hat{r}_k(t)-\sum_{j\in\mathcal{N}_{s_k}}A_{s_kj}^k(\omega(t),{\bm{x}}(t))\right]^2-Q_{s_k}^n(t)^2\nonumber\\
	\le& 4Q_{s_k}^k(t)\left[\hat{r}_k(t)-\sum_{j\in\mathcal{N}_{s_k}}A_{s_kj}^k(\omega(t),\hat{\bm{x}}(t))\right] + C. \nonumber
	\end{align}
	Inequality (\ref{ieq:driftd2}) follows similarly.
\end{proof}

\subsubsection{Queueing Regret}
In this section, we bound the queueing regret by providing a bound on the expected queue size at $T$.
To do so, we will first use $\sum_{i\in V}\sum_{k=1}^KQ_i^k(t)^2$ as a Lyapunov function and prove that the Lyapunov function has expected conditional negative drift, which combined with a result on discrete stochastic process from \cite{cite:neely1}, leads to a bound on $\sum_{i\in V}\sum_{k=1}^KQ_i^k(t)$ both in expectation and with high probability.

Define $\bm{Q}(t)$ to be the vector that includes all the queues $\{Q(t)\}_i^k$ as coordinates  and $||\cdot||$ as the Euclidean norm. By Lemma \ref{lemma:dynamics}, we have
%
\begin{align}
&||\bm{Q}(t+1)||^2-||\bm{Q}(t)||^2 \nonumber\\ =& 
\sum_{k=1}^KQ_{s_k}^k(t+1)^2-Q_{s_k}^k(t)^2 + \sum_{i\in V}\sum_{k=1}^K Q_i^k(t+1)^2-Q_i^k(t)^2 \\
\le &
4\sum_{k=1}^KQ_{s_k}^k(t)\left[\hat{r}_k(t)-\sum_{j\in\mathcal{N}_{s_k}}A_{s_kj}^k(\omega(t),{\bm{x}}(t))\right]\\&\quad+ 4\sum_{i\in V}\sum_{k=1}^KQ_k^n(t)\left[\sum_{j:i\in \mathcal{N}_j}{A}_{ji}^k(\omega(t),\bm{x}(t))-\sum_{j\in \mathcal{N}_i}{A}_{ij}^k(\omega(t),\bm{x}(t))\right]+ C. \label{ieq:squaredrift}
\end{align}
Next, we prove a conditional drift argument on $||\bm{Q}(t)||^2$ under the P-GSMW policy.
\begin{lemma}\label{lemma:drift}
	There exists $\epsilon>0$ such that under the P-GSMW policy,
	\begin{align*}
	\mathbb{E}[||\bm{Q}(t+1)||^2-||\bm{Q}(t)||^2\mid \bm{Q}(t)] \le -\epsilon  \sum_{i\in V}\sum_{k=1}^KQ_i^k(t) + C\sqrt{T}.
	\end{align*}
\end{lemma}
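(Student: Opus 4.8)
The plan is to treat $\|\bm{Q}(t)\|^2$ as a Lyapunov function and bound its one-step conditional drift starting from the already-derived squared-drift inequality (\ref{ieq:squaredrift}). After conditioning on $\bm{Q}(t)$ and taking expectation over $\omega(t)$, the right-hand side of (\ref{ieq:squaredrift}) splits into an injection part $\sum_k Q^k_{s_k}(t)\hat{r}_k(t)$ and a transmission part involving the action $\bm{x}(t)$. I would control the transmission part with the Max-Weight Lemma \ref{lemma:maxweight} together with the interior point of $Cap(\mathcal{G})$, and the injection part with the update Lemma \ref{lemma:update}.

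First I would handle the transmission terms. Since Lemma \ref{lemma:maxweight} holds for all $\{r\}_k$ (the $r$-terms appear identically on both sides and cancel), it lets me replace the rates offered by the algorithm's action $\bm{x}(t)$ by those of any comparison flow, in particular a flow $\lambda^*$ realizing the interior point. Because $(\eta,\ldots,\eta)$ lies in the interior of $Cap(\mathcal{G})$, one can exhibit $\lambda^*(\omega)\in Conv(\Lambda(\omega))$ and a constant $\epsilon'>0$ such that, in expectation over $\omega$, the offered service at each source is at least $\eta$ while every intermediate node has a strict flow-conservation surplus of at least $\epsilon'$. Using the linearity of the Max-Weight objective and the identity $\sum_{(i,j)}\lambda_{ij}^k(Q^k_i-Q^k_j)=\sum_i Q^k_i(\text{out}^k_i-\text{in}^k_i)$, the transmission drift regroups into per-node net-flow terms bounded by $-\eta\sum_k Q^k_{s_k}(t)-\epsilon'\sum_k\sum_{i\neq s_k,d_k}Q^k_i(t)$, i.e., strictly negative drift on every queue.

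Next I would absorb the surviving injection term $\sum_k Q^k_{s_k}(t)\hat{r}_k(t)$ by invoking Lemma \ref{lemma:update} with $\{r\}_k=(\delta,\ldots,\delta)$, which gives $\sum_k Q^k_{s_k}(t)\hat{r}_k(t)\le \delta\sum_k Q^k_{s_k}(t)+\text{(error)}$. The gradient contribution is bounded by $O(VKLB)=O(T^{1/4})$ via Lemma \ref{lemma:gradbound} and the sign $\hat{\nabla}f_k\ge 0$ from Lemma \ref{lemma:meangrad}, and the quadratic step-size contribution is bounded (without telescoping, as this is a single step) by $O(\alpha KB^2)=O(\sqrt{T})$. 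Since $\delta=T^{-1/2}\le \eta/2$ for large $T$, the term $\delta\sum_k Q^k_{s_k}(t)$ is dominated by the $-\eta\sum_k Q^k_{s_k}(t)$ produced above. Collecting everything, multiplying by the factor $4$ from (\ref{ieq:squaredrift}), and folding the residual $O(\sqrt{T})$ and $O(1)$ pieces into $C\sqrt{T}$ yields the claim with $\epsilon=4\min(\eta/2,\epsilon')$.

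The hard part will be establishing the strict-slack flow $\lambda^*$: the stated interior assumption only asserts $(\eta,\ldots,\eta)\in Cap(\mathcal{G})$, whose defining conservation constraints are inequalities that a bare feasibility flow need not meet with a uniform positive margin $\epsilon'$ at every intermediate node (scaling a feasibility flow is margin-preserving, not margin-creating). Converting interiority into a per-node surplus — exploiting the downward-closing property of $\Lambda(\omega)$ and the slack afforded by a slightly larger feasible vector $(\eta_1,\ldots,\eta_1)$ with $\eta_1>\eta$ — is the delicate step; everything else is routine accounting of $O(\sqrt{T})$ error terms under the parameter choices $\alpha=2K\sqrt{T}/\eta$, $V=T^{1/4}$, $\delta=T^{-1/2}$.
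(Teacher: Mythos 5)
Your proposal follows essentially the same route as the paper's proof: it combines the squared-drift bound (\ref{ieq:squaredrift}) with Lemma \ref{lemma:update} evaluated at $(\delta,\ldots,\delta)$ to absorb the injection term $\sum_k Q^k_{s_k}(t)\hat{r}_k(t)$, uses Lemma \ref{lemma:maxweight} to compare the Max-Weight action against a stationary randomized action realizing a point of $Cap(\mathcal{G})$, and exploits the i.i.d.-ness of $\omega(t)$ to take conditional expectations, with the same $O(V)+O(\alpha)=O(\sqrt{T})$ bookkeeping under the stated parameter choices. The ``delicate step'' you flag---upgrading $(\eta,\ldots,\eta)\in Cap(\mathcal{G})$ to a uniform per-node service surplus at intermediate nodes---is precisely the step the paper compresses into its one-line appeal to ``the Slater's condition and that $\Lambda(\omega)$ is downward closing'' when combining (\ref{ieq:slater1}) and (\ref{ieq:slater2}), so your treatment is no less complete than the paper's at that point.
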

\begin{proof}
	Continuing from Lemma \ref{lemma:update}, rearranging terms, we have for any $\{r\}_k$ with $r_k\in[\delta,B-\delta]$,
	\begin{align}
	&\sum_{k=1}^KQ^k_{s_k}(t)\hat{r}_k(t)\nonumber\\
	\le &\sum_{k=1}^K\left[Q^{k}_{s_{k}}(t)r_k+\alpha[(\hat{r}^i_k(t-1)-r_k)^2-(\hat{r}^i_k(t)-r_k)^2]+{C}   \right]\nonumber\\
	&\quad+\sum_{k=1}^K V\hat{\nabla}f_k(\hat{r}_k(t-1))(r_k-\hat{r}_k(t-1))\nonumber\\
	\le& CV +C\alpha +\sum_{k=1}^K Q_{s_k}^k(t)r_k \le C\sqrt{T}+\sum_{k=1}^K Q_{s_k}^k(t)r_k,\label{ieq:lemmadrift}
	\end{align}
	where inequality (\ref{ieq:lemmadrift}) follows from that $\alpha =O(\sqrt{T}), V=O(T^{1/4})$. Adding same terms to both sides of (\ref{ieq:lemmadrift}), 
	\begin{align}
	&\sum_{k=1}^KQ_{s_k}^k(\tau)\left[\hat{r}_k(t)-\sum_{j\in\mathcal{N}_{s_k}}A_{s_kj}^k(\omega(t),{\bm{x}}(t))\right]\nonumber\\&\quad+ \sum_{i\in V}\sum_{k=1}^KQ_i^k(t)\left[\sum_{j:i\in \mathcal{N}_j}{A}_{ji}^k(\omega(t),\bm{x}(t))-\sum_{j\in \mathcal{N}_i}{A}_{ij}^k(\omega(t),\bm{x}(t))\right]
	\nonumber\\\le & \sum_{k=1}^KQ_{s_k}^k(t)\left[{r}_k-\sum_{j\in\mathcal{N}_{s_k}}A_{s_kj}^k(\omega(t),{\bm{x}}(t))\right]+C\sqrt{T}\nonumber\\&\quad+ \sum_{i\in V}\sum_{k=1}^KQ_i^k(t)\left[\sum_{j:i\in \mathcal{N}_j}{A}_{ji}^k(\omega(t),\bm{x}(t))-\sum_{j\in \mathcal{N}_i}{A}_{ij}^k(\omega(t),\bm{x}(t))\right]
	 \label{ieq:crossterm}
	\end{align}
	We take $\{r\}_k$ to be $r_k=\delta$. By Lemma \ref{lemma:maxweight}, we have for any $\bm{x}\in\mathcal{X}$,
	\begin{align}
	& \sum_{k=1}^KQ_{s_k}^k(t)\left[{r}_k-\sum_{j\in\mathcal{N}_{s_k}}A_{s_kj}^k(\omega(t),{\bm{x}}(t))\right]\nonumber\\&\quad+ \sum_{i\in V}\sum_{k=1}^KQ_i^k(t)\left[\sum_{j:i\in \mathcal{N}_j}{A}_{ji}^k(\omega(t),\bm{x}(t))-\sum_{j\in \mathcal{N}_i}{A}_{ij}^k(\omega(t),\bm{x}(t))\right]\nonumber\\
	\le & \sum_{k=1}^KQ_{s_k}^k(t)\left[\delta-\sum_{j\in\mathcal{N}_{s_k}}A_{s_kj}^k(\omega(t),{\bm{x}}(t))\right]\nonumber\\&\quad+ \sum_{i\in V}\sum_{k=1}^KQ_i^k(t)\left[\sum_{j:i\in \mathcal{N}_j}{A}_{ji}^k(\omega(t),\bm{x}(t))-\sum_{j\in \mathcal{N}_i}{A}_{ij}^k(\omega(t),\bm{x}(t))\right].\label{ieq:randomized}
	\end{align}
	
	Combining (\ref{ieq:squaredrift}), (\ref{ieq:crossterm}) and (\ref{ieq:randomized}), we obtain that for any $\bm{x}\in\mathcal{X}$
	\begin{align*}
	&||\bm{Q}(t+1)||^2-||\bm{Q}(t)||^2 \\ \le&
	4\sum_{k=1}^KQ_{s_k}^k(t)\left[\delta-\sum_{j\in\mathcal{N}_{s_k}}A_{s_kj}^k(\omega(t),{\bm{x}}(t))\right]+C\sqrt{T}\\&\quad+ 4\sum_{i\in V}\sum_{k=1}^KQ_i^k(t)\left[\sum_{j:i\in \mathcal{N}_j}{A}_{ji}^k(\omega(t),\bm{x}(t))-\sum_{j\in \mathcal{N}_i}{A}_{ij}^k(\omega(t),\bm{x}(t))\right]
	\end{align*}
	Since $\omega(t)$'s are i.i.d. $\omega(t)$ is independent of $\bm{Q}(t)$ which only depends on system information before $t$, we have for each fixed $\bm{x},$
	\begin{align}
	&	\mathbb{E}\left[  Q_{s_k}^k(t)\left[\delta-\sum_{j\in\mathcal{N}_{s_k}}A_{s_kj}^k(\omega(t),{\bm{x}})-{\eta}\right]\mid \bm{Q}(t)\right]\nonumber\\
	=& Q_{s_k}^k(t)\cdot \mathbb{E}\left[\delta-\sum_{j\in\mathcal{N}_{s_k}}A_{s_kj}^k(\omega(t),{\bm{x}})\right]\nonumber\\
	=& Q_{s_k}^k(t)\cdot \sum_{\omega\in\mathcal{W}}p(\omega) \left[\delta-\sum_{j\in\mathcal{N}_{s_k}}A_{s_kj}^k(\omega,{\bm{x}}) \right] \label{ieq:slater1}
	\end{align}
	Similarly,
	\begin{align}
	&\mathbb{E}\left[Q_i^k(t)\left[\sum_{j:i\in \mathcal{N}_j}{A}_{ji}^k(\omega(t),\bm{x}(t))-\sum_{j\in \mathcal{N}_i}{A}_{ij}^k(\omega(t),\bm{x}(t)) \right]\mid \bm{Q}(t)\right]\nonumber\\
	\le & Q_{i}^k(t)\cdot \sum_{\omega\in\mathcal{W}}p(\omega)\left[\sum_{j:i\in \mathcal{N}_j}{A}_{ji}^k(\omega,\bm{x})-\sum_{j\in \mathcal{N}_i}{A}_{ij}^k(\omega,\bm{x})\right] \label{ieq:slater2}
	\end{align}
	Let $\epsilon = \frac{\eta-\delta}{2}>0$. By the Slater's condition and that $\Lambda(\omega)$ is downward closing, combining (\ref{ieq:slater1}) and (\ref{ieq:slater2}), we have
	\begin{align}
	\mathbb{E}[||\bm{Q}(t+1)||^2-||\bm{Q}(t)||^2\mid \bm{Q}(t)] \le -\epsilon  \sum_{i\in V}\sum_{k=1}^KQ_i^k(t) + C\sqrt{T}\label{ieq:randomized1}
	\end{align}
\end{proof}

Lemma \ref{lemma:drift} establishes that $||\bm{Q}(t)||^2$ tends to decrease when the queue length is significantly larger than $O(\sqrt{T})$. To bound the queueing regret based on this, we use the following drift lemma for stochastic processes from \cite{cite:neely1}. We will not need the full generality of the lemma as it provides expectation and with-high-probability bound on stochastic processes that satisfy multi-slot drift condition, but we only need to deal with single-slot drift.
\begin{lemma}{\cite{cite:neely1}}\label{lemma:neely}
	Let $\{Z(t),t\ge 0\}$ be a discrete time stochastic process adapted to a filtration $\{\mathcal{F}(t),t\ge 0\}$ with $Z(0)=0$ and $\mathcal{F}(0)=\{\emptyset,\Omega\}$. Suppose there exists an integer $t_0>0$, real constants $\theta>0$, $\delta_{max}>0$ and $0<\xi\le \delta_{max}$ such that	
	\begin{align}
	|Z(t+1)-Z(t)|&\le \delta_{max}\\
	\mathbb{E}[Z(t+t_0)-Z(t)\mid\mathcal{F}(t)]&\le t_0\delta_{\max},\quad\mbox{if $Z(t)< \theta$}\\
	\mathbb{E}[Z(t+t_0)-Z(t)\mid\mathcal{F}(t)]&\le -t_0\zeta,\quad\mbox{if $Z(t)\ge \theta$}.
	\end{align}
	hold for all $t\in\{1,2,\ldots,\}$, then 
	\begin{align}
	\mathbb{E}[Z(t)]\le \theta + t_0\delta_{max}+t_0\frac{4\delta_{max}^2}{\xi}\log \frac{8\delta^2_{max}}{\xi^2}, \forall t\in\{1,2,\ldots,\}
	\end{align}
	and
	\begin{align}
	\forall\ 0<\mu <1, \mathbb{P}(Z(t)\ge z)\le \mu, \forall t\in \{1,2,\ldots,\},\label{ieq:highprobability}\end{align}
	where $z=\theta+t_0\delta_{max}+t_0\frac{4\delta_{max}^2}{\xi}\log\frac{8\delta^2_{max}}{\xi^2}+t_0\frac{4\delta^2_{max}}{\xi}\log\frac{1}{\mu}$. 
	
\end{lemma}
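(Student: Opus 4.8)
The plan is to prove this by an exponential (Lyapunov) drift argument in the spirit of Hajek's lemma: control the moment generating function $\mathbb{E}[e^{rZ(t)}]$ for a carefully chosen $r>0$, and then read off both the expectation bound and the tail bound from it. Guided by the form of the claimed constants (and noting $\xi=\zeta$), I would fix
\[
r=\frac{\xi}{4\,t_0\,\delta_{\max}^2},
\]
so that $1/r=t_0\cdot 4\delta_{\max}^2/\xi$ is exactly the prefactor appearing in the statement. The whole argument rests on showing that $e^{rZ(t)}$ behaves like a supermartingale, up to an additive constant, once $Z$ exceeds the threshold $\theta$.

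The key step is a one-window bound on the conditional MGF of the increment $Y:=Z(t+t_0)-Z(t)$. Since the single-step differences are bounded, $|Y|\le t_0\delta_{\max}$, and the choice of $r$ guarantees $r|Y|\le \xi/(4\delta_{\max})\le 1/4$, so the elementary inequality $e^{x}\le 1+x+x^2$ applies on this range. Combining it with $\mathbb{E}[Y\mid\mathcal F(t)]\le -t_0\xi$ in the regime $Z(t)\ge\theta$ and $\mathbb{E}[Y^2\mid\mathcal F(t)]\le t_0^2\delta_{\max}^2$ gives
\[
\mathbb{E}[e^{rY}\mid\mathcal F(t)]\le 1-r t_0\xi + r^2 t_0^2\delta_{\max}^2=:\rho<1,
\]
whereas in the regime $Z(t)<\theta$ one uses only $Y\le t_0\delta_{\max}$ to get $\mathbb{E}[e^{rY}\mid\mathcal F(t)]\le e^{r t_0\delta_{\max}}$. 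Multiplying by $e^{rZ(t)}$ and bounding $e^{rZ(t)}\le e^{r\theta}$ on the sub-threshold event yields the unified recursion
\[
\mathbb{E}\!\left[e^{rZ(t+t_0)}\mid\mathcal F(t)\right]\le \rho\, e^{rZ(t)} + e^{r(\theta+t_0\delta_{\max})}.
\]

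I would then iterate this recursion. Taking total expectations along the subsequence $t=mt_0$, writing $a_m:=\mathbb{E}[e^{rZ(mt_0)}]$ and using $Z(0)=0$ so $a_0=1$, the geometric recursion $a_{m+1}\le\rho a_m + e^{r(\theta+t_0\delta_{\max})}$ gives the uniform bound $a_m\le 1+\tfrac{e^{r(\theta+t_0\delta_{\max})}}{1-\rho}$. For a general time $t=mt_0+s$ with $0\le s<t_0$, the single-step bound gives $Z(t)\le Z(mt_0)+t_0\delta_{\max}$, hence $\mathbb{E}[e^{rZ(t)}]\le e^{r t_0\delta_{\max}}\sup_m a_m$, so $\mathbb{E}[e^{rZ(t)}]$ is bounded uniformly in $t$. \textbf{The main obstacle is the bookkeeping of constants here}: one must verify that with $r=\xi/(4t_0\delta_{\max}^2)$ the contraction factor obeys $1-\rho\asymp \xi^2/\delta_{\max}^2$, so that $\tfrac1r\log\frac{1}{1-\rho}$ reproduces the functional form $t_0\frac{4\delta_{\max}^2}{\xi}\log\frac{8\delta_{\max}^2}{\xi^2}$ of the claimed bound (the precise numerical constants following from the exact calibration of $r$ and of the quadratic MGF estimate, with the residual $e^{rt_0\delta_{\max}}$ factors absorbed into the additive $t_0\delta_{\max}$ term); the hypothesis $0<\xi\le\delta_{\max}$ is exactly what keeps $\rho\in(0,1)$ and $r|Y|$ inside the range where $e^x\le 1+x+x^2$ is valid.

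Finally, both conclusions follow from the uniform MGF bound, which after the constant bookkeeping can be written as $\mathbb{E}[e^{rZ(t)}]\le e^{r\bar z}$ with $\bar z:=\theta+t_0\delta_{\max}+t_0\frac{4\delta_{\max}^2}{\xi}\log\frac{8\delta_{\max}^2}{\xi^2}$. Jensen's inequality gives $e^{r\mathbb{E}[Z(t)]}\le\mathbb{E}[e^{rZ(t)}]\le e^{r\bar z}$, i.e. $\mathbb{E}[Z(t)]\le\bar z$, which is the first claim. For the tail bound, Markov's inequality applied to the nonnegative variable $e^{rZ(t)}$ yields $\mathbb{P}(Z(t)\ge z)\le e^{-rz}\mathbb{E}[e^{rZ(t)}]\le e^{-r(z-\bar z)}$; setting the right-hand side equal to $\mu$ gives $z=\bar z+\tfrac1r\log\tfrac1\mu=\bar z+t_0\frac{4\delta_{\max}^2}{\xi}\log\tfrac1\mu$, which is precisely the stated $z$. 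This completes the plan.
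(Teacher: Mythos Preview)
The paper does not prove this lemma at all: it is quoted verbatim as a known result from the reference~\cite{cite:neely1} (Yu, Neely, and Wei, NeurIPS 2017) and invoked as a black box in the queueing-regret analysis. So there is nothing to compare your argument against in the present paper.

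That said, your plan is exactly the Hajek-style exponential supermartingale argument that underlies the cited result: bound $\mathbb{E}[e^{rZ(t)}]$ via a geometric recursion driven by the negative-drift condition above the threshold and the bounded-increment condition below it, then extract the expectation bound by Jensen and the tail bound by Markov/Chernoff. The calibration $r=\xi/(4t_0\delta_{\max}^2)$ is the right one to make the claimed constants emerge, and the constraint $0<\xi\le\delta_{\max}$ is precisely what keeps $r|Y|$ small enough for the quadratic MGF approximation and keeps the contraction factor $\rho$ strictly below $1$. Your ``main obstacle'' paragraph correctly identifies where the work is; if you want the constants to match exactly you should track the two sources of slack (the extra $e^{rt_0\delta_{\max}}$ from shifting off the $t_0$-subsequence, and the $+1$ from $a_0=1$) and absorb them into the additive $t_0\delta_{\max}$ term, which the hypothesis $\xi\le\delta_{\max}$ again makes possible.
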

Continuing from Lemma \ref{lemma:drift}, since $\sum_{n\in V, k}Q_n^k(t)\ge ||\bm{Q}(t)||$ (as $l_1$ norm is no smaller than the Euclidean norm), we have
\begin{align*}
\mathbb{E}[||\bm{Q}(t+1)||^2-||\bm{Q}(t)||^2\mid \bm{Q}(t)] \le -\epsilon ||\bm{Q}(t)|| + C\sqrt{T}.
\end{align*}
It follows that 
\begin{align*}
\mathbb{E}[||\bm{Q}(t+1)||^2\mid \bm{Q}(t)] \le& ||\bm{Q}(t)||^2 -\epsilon  ||\bm{Q}(t)|| + C\sqrt{T}\\
\le & \left(||\bm{Q}(t)||-\epsilon\right)^2 \quad\mbox{when $||\bm{Q}(t)||>C\sqrt{T}$}.
\end{align*}
It follows that when $||\bm{Q}(t)||>C\sqrt{T}$,
\begin{align*}
\mathbb{E}[||\bm{Q}(t+1)||\mid \bm{Q}(t)] \le \sqrt{\mathbb{E}[||\bm{Q}(t+1)||^2\mid \bm{Q}(t)]} \le ||\bm{Q}(t)||-\epsilon.
\end{align*}
Further, since $||\bm{Q}(t+1)||-||\bm{Q}(t)||\le ||\bm{Q}(t+1)-\bm{Q}(t)||\le C$. Hence, invoking Lemma \ref{lemma:neely} with $t_0=1$, $\theta = C\sqrt{T}, \delta_{max} = C, \zeta = C$, we obtain that
$\mathbb{E}[||\bm{Q}(t)||]\le \tilde{O}(T^{1/2})$ for all $t$. By Cauchy-Schwarz inequality, $\mathbb{E}[\sum_{i\in V}\sum_{k=1}^KQ_i^k(t)]\le N|V|\cdot \mathbb{E}[||\bm{Q}(\tau)||]\le \tilde{O}(T^{1/2})$ for all $t$. Furthermore, by union bound, we also have that there exists a constant $C$ such that with probability at least $1-1/T$, $\sum_{i\in V}\sum_{k=1}^KQ_i^k(t)\le CT\log T$.

With the analysis above, we summarize the result on queueing regret  in the following theorem.
\begin{theorem}\label{thm:workload}
	Under P-GSMW, $\forall t=1,\ldots,T$, $\sum_{i\in V}\sum_{k=1}^KQ_i^k(t)\le \tilde{O}(\sqrt{T})$ in expectation and with high probability. In particular,\\ $\mathbb{E}[\sum_{i\in V}\sum_{k=1}^KQ_i^k(T)]\le \tilde{O}(\sqrt{T})$
\end{theorem}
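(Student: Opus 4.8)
The plan is to upgrade the quadratic Lyapunov drift of Lemma \ref{lemma:drift} into a linear (in the norm) negative drift on $Z(t):=\|\bm{Q}(t)\|$, and then feed $Z(t)$ into the stochastic-process drift lemma (Lemma \ref{lemma:neely}) to bound $\mathbb{E}[\|\bm{Q}(t)\|]$ both in expectation and with high probability; the $\ell_1$ statement then follows by norm equivalence.

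First I would start from Lemma \ref{lemma:drift} and use that the $\ell_1$ norm dominates the Euclidean norm, $\sum_{i\in V}\sum_{k=1}^K Q_i^k(t)\ge\|\bm{Q}(t)\|$, to obtain
\[
\mathbb{E}[\|\bm{Q}(t+1)\|^2-\|\bm{Q}(t)\|^2\mid\bm{Q}(t)]\le -\epsilon\,\|\bm{Q}(t)\|+C\sqrt{T}.
\]
The crux is then to turn this into a drift on $\|\bm{Q}(t)\|$ rather than on $\|\bm{Q}(t)\|^2$. For $\|\bm{Q}(t)\|$ above a threshold $\theta=\Theta(\sqrt{T})$, the additive $C\sqrt{T}$ term is dominated by a fraction of $-\epsilon\|\bm{Q}(t)\|$, so completing the square gives $\mathbb{E}[\|\bm{Q}(t+1)\|^2\mid\bm{Q}(t)]\le(\|\bm{Q}(t)\|-\epsilon')^2$ for some $\epsilon'>0$. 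Applying Jensen's inequality to the concave map $x\mapsto\sqrt{x}$ then yields $\mathbb{E}[\|\bm{Q}(t+1)\|\mid\bm{Q}(t)]\le\|\bm{Q}(t)\|-\epsilon'$, i.e.\ a genuine single-slot negative drift on $Z(t)=\|\bm{Q}(t)\|$ whenever $Z(t)\ge\theta$.

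Next I would verify the remaining hypotheses of Lemma \ref{lemma:neely}. The bounded-increment condition $|Z(t+1)-Z(t)|\le\|\bm{Q}(t+1)-\bm{Q}(t)\|\le C$ holds because in one slot each queue changes by at most a constant (arrivals bounded by $B+\delta$, service bounded by $A$). With $t_0=1$, $\theta=\Theta(\sqrt{T})$, $\delta_{\max}=C$ and $\zeta=\epsilon'$, Lemma \ref{lemma:neely} then gives $\mathbb{E}[\|\bm{Q}(t)\|]\le\theta+t_0\delta_{\max}+t_0\tfrac{4\delta_{\max}^2}{\zeta}\log\tfrac{8\delta_{\max}^2}{\zeta^2}=\tilde{O}(\sqrt{T})$ for every $t$, together with the tail bound (\ref{ieq:highprobability}).

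Finally, since the number of queue coordinates $N=|V|\cdot K$ is independent of $T$, Cauchy--Schwarz gives $\sum_{i\in V}\sum_{k=1}^K Q_i^k(t)\le\sqrt{N}\,\|\bm{Q}(t)\|$, so $\mathbb{E}[\sum_{i,k}Q_i^k(t)]\le\tilde{O}(\sqrt{T})$ for all $t$, and in particular at $t=T$; the high-probability bound follows from (\ref{ieq:highprobability}) combined with a union bound over $t\le T$. I expect the main obstacle to be the second step: the quadratic drift alone does not control $\mathbb{E}[\|\bm{Q}(t)\|]$, and it is precisely the completing-the-square argument together with Jensen's inequality that produces the linear negative drift required to invoke the drift lemma in the form it is stated.
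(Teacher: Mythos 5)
Your proposal is correct and follows essentially the same route as the paper's own proof: starting from the quadratic drift of Lemma \ref{lemma:drift}, passing to $\|\bm{Q}(t)\|$ via the $\ell_1$--$\ell_2$ comparison, completing the square and applying Jensen's inequality to get a single-slot negative drift on $Z(t)=\|\bm{Q}(t)\|$ above a $\Theta(\sqrt{T})$ threshold, and then invoking Lemma \ref{lemma:neely} with $t_0=1$, bounded increments, and norm equivalence plus a union bound to recover the $\ell_1$ statement in expectation and with high probability. The paper's argument is identical in every step, so there is nothing to add.
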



\subsubsection{Utility Regret}

In this section, we bound the utility regret term. We first decompose the utility regret into three components as follow
\begin{align}
&\mathbb{E}\left[\sum_{t=1}^{T}\sum_{k=1}^Kf_k(r^*_k)-f_k(\hat{r}_k(t)) \right]\nonumber\\ 
=&\mathbb{E}\left[\sum_{t=1}^{T}\sum_{k=1}^Kf_k(r^*_k)-{f}_k(\hat{r}^*_k) \right]+ \mathbb{E}\left[\sum_{t=1}^{T}\sum_{k=1}^K{f}_k(\hat{r}^*_k)-\tilde{f}_k(\hat{r}^*_k) \right]\nonumber \\&\quad +\mathbb{E}\left[\sum_{t=1}^{T}\sum_{k=1}^K\tilde{f}_k(\hat{r}^*_k)-\tilde{f}_k(\hat{r}_k(t)) \right] + \mathbb{E}\left[\sum_{t=1}^{T}\sum_{k=1}^K\tilde{f}_k(\hat{r}_k(t))-f_k(\hat{r}_k(t)) \right], \label{ieq:utilitydecompose}
\end{align}
where $(\hat{r}_1^*,\ldots,\hat{r}_K^*)$ is the vector that maximizes $\sum_{k=1}^Kf_k(r_k)$ subject to $(r_1,\ldots,r_K)\in Cap(\mathcal{G})$ and $\forall k,\ r_k\in [\delta, B-\delta]$, i.e., the optimal solution to $\mathcal{P}$ restricting to each $r_k\in [\delta,B-\delta]$.
As $f_k$ is Lipschitz continuous, by Lemma \ref{lemma:aux1}, we have $\sum_{k=1}^Kf_k(r_k^*)-f_k(\hat{r}_k^*)\le C\delta.$ Further, $\sum_{k=1}^K{f}_k(\hat{r}_k^*)-\tilde{f}_k(r_k^*)\le KL\delta$. Since $f_k$ is concave, $\sum_{k=1}^K\tilde{f}_k(\hat{r}_k(t))-f_k(\hat{r}_k(t))\le 0$. It follows that
\begin{align*}
&\mathbb{E}\left[\sum_{t=1}^{T}\sum_{k=1}^Kf_k(r^*_k)-f_k(\hat{r}_k(t)) \right]
\le  \mathbb{E}\left[\sum_{t=1}^{T}\sum_{k=1}^K\tilde{f}_k(\hat{r}^*_k)-\tilde{f}_k(\hat{r}_k(t)) \right] +CT\delta\\
\le&  \mathbb{E}\left[\sum_{t=1}^{T}\sum_{k=1}^K\tilde{f}_k(\hat{r}^*_k)-\tilde{f}_k(\hat{r}_k(t)) \right] +C\sqrt{T}
\end{align*}
Hence, we can focus on bounding $\mathbb{E}\left[\sum_{t=1}^{T}\sum_{k=1}^K\tilde{f}_k(\hat{r}^*_k)-\tilde{f}_k(\hat{r}_k(t)) \right]$.

Again, starting from Lemma \ref{lemma:update} and plugging in $\{\hat{r}^*\}_k$, we have 
\begin{align*}
&\sum_{k=1}^K \left[V\hat{\nabla}f_k(\hat{r}^I_k(t-1))(\hat{r}^*_k-\hat{r}^I_k(t-1))\right] +\sum_{k=1}^K\left[Q^{k}_{s_{k}}(t)\hat{r}^I_k(t)  \right]\\
\le & \sum_{k=1}^K\left[Q^{k}_{s_{k}}(t)r_k+\alpha[(\hat{r}^I_k(t-1)-\hat{r}^*_k)^2-(\hat{r}^I_k(t)-\hat{r}^*_k)^2]+{C}.   \right]
\end{align*}
Again, multiplying both sides by two and adding the same terms on both sides lead to
\begin{align}
&\sum_{k=1}^K \left[V\hat{\nabla}f_k(\hat{r}^I_k(t-1))(\hat{r}^*_k-\hat{r}^I_k(t-1))\right]\\&\quad +\sum_{k=1}^KQ_{s_k}^k(t)\left[\hat{r}_k(t)-\sum_{j\in\mathcal{N}_{s_k}}A_{s_kj}^k(\omega(t),{\bm{x}}(t))\right]\nonumber\\&\quad+ \sum_{i\in V}\sum_{k=1}^KQ_i^k(t)\left[\sum_{j:i\in \mathcal{N}_j}{A}_{ji}^k(\omega(t),\bm{x}(t))-\sum_{j\in \mathcal{N}_i}{A}_{ij}^k(\omega(t),\bm{x}(t))\right]
\nonumber\\\le & \sum_{k=1}^KQ_{s_k}^k(\tau)\left[\hat{r}^*_k-\sum_{j\in\mathcal{N}_{s_k}}A_{s_kj}^k(\omega(t),{\bm{x}}(t))\right]\nonumber\\&\quad+ \sum_{i\in V}\sum_{k=1}^KQ_i^k(t)\left[\sum_{j:i\in \mathcal{N}_j}{A}_{ji}^k(\omega(t),\bm{x}(t))-\sum_{j\in \mathcal{N}_i}{A}_{ij}^k(\omega(t),\bm{x}(t))\right]\nonumber\\
&\quad + \sum_{k=1}^K\alpha[(\hat{r}^I_k(t)-\hat{r}^*_k)^2-(\hat{r}^I_k(t+1)-\hat{r}^*_k)^2]+{C} \label{ieq:utilitycross}
\end{align}
By (\ref{ieq:squaredrift}), for the left-hand-side of (\ref{ieq:utilitycross}),
\begin{align}
&\sum_{k=1}^KQ_{s_k}^k(t)\left[\hat{r}_k(t)-\sum_{j\in\mathcal{N}_{s_k}}A_{s_kj}^k(\omega(t),{\bm{x}}(t))\right]\nonumber\\&\quad+ \sum_{i\in V}\sum_{k=1}^KQ_i^k(t)\left[\sum_{j:i\in \mathcal{N}_j}{A}_{ji}^k(\omega(t),\bm{x}(t))-\sum_{j\in \mathcal{N}_i}{A}_{ij}^k(\omega(t),\bm{x}(t))\right]\nonumber\\
\ge & \frac{||\bm{Q}(t+1)||^2-||\bm{Q}(t)||^2}{4} +C. \label{ieq:lhs}
\end{align}
As $(\hat{r}^*_1,\ldots,\hat{r}_K^*)\in Cap(G)$ by definition, for each $\omega$, there exists a set of real numbers $\{a(\omega, \bm{x}),\bm{x}\in\mathcal{X} \}$, $0\le a(\omega, \bm{x})\le 1$ and $\sum_{\bm{x}\in\mathcal{X}}a(\bm{x},\omega)=1$ such that\footnote{Here we assume $\mathcal{X}$ to be discrete. The continuous case follows similarly.}
\begin{align*}
\forall k,\ &\hat{r}_k^*\le \sum_{\omega\in\mathcal{W}}p(\omega)\sum_{j\in\mathcal{N}_{s_k}}\sum_{\bm{x}\in\mathcal{X}}a(\bm{x})A_{s_kj}^k(\omega,\bm{x}), \\
\forall i,k, i\neq s_k,\ &\sum_{\omega\in\mathcal{W}}p(\omega)\sum_{j:i\in \mathcal{N}_j}\sum_{\bm{x}\in\mathcal{X}}a(\omega,\bm{x})A_{ji}^k(\omega, \bm{x})\\&\le \sum_{\omega\in\mathcal{W}}p(\omega)\sum_{j\in \mathcal{N}_i}\sum_{\bm{x}\in\mathcal{X}}a(\omega,\bm{x})A_{ij}^k(\omega, \bm{x}).
\end{align*}
Hence, by Lemma \ref{lemma:maxweight} and follow a similar analysis as (\ref{ieq:slater1}) and (\ref{ieq:slater2}), for the right-hand-side of (\ref{ieq:utilitycross})
\begin{footnotesize}
\begin{align}
&\mathbb{E}\sum_{k=1}^KQ_{s_k}^k(t)\left[\hat{r}^*_k-\sum_{j\in\mathcal{N}_{s_k}}A_{s_kj}^k(\omega(t),{\bm{x}}(t))\right]\nonumber\\
&\ +\mathbb{E}\sum_{i\in V}\sum_{k=1}^KQ_i^k(t)\left[\sum_{j:i\in \mathcal{N}_j}{A}_{ji}^k(\omega(t),\bm{x}(t))-\sum_{j\in \mathcal{N}_i}{A}_{ij}^k(\omega(t),\bm{x}(t))\right]\nonumber\\
\le&\mathbb{E}\sum_{k=1}^KQ_{s_k}^k(t)\left[\hat{r}^*_k-\sum_{\bm{x}\in\mathcal{X}}\sum_{j\in\mathcal{N}_{s_k}}a(\omega(t),\bm{x})A_{s_kj}^k(\omega(t),{\bm{x}}(t))\right]\nonumber\\
&\quad +\mathbb{E}\sum_{i\in V}\sum_{k=1}^KQ_i^k(t)\sum_{\bm{x}\in\mathcal{X}}a(\omega(t),\bm{x})\left[\sum_{j:i\in \mathcal{N}_j}{A}_{ji}^k(\omega(t),\bm{x}(t))-\sum_{j\in \mathcal{N}_i}{A}_{ij}^k(\omega(t),\bm{x}(t))\right]\nonumber\\
=& \mathbb{E}\left[\sum_{k=1}^KQ_{s_k}^k(t)\right]\mathbb{E}\left[\hat{r}^*_k-\sum_{\bm{x}\in\mathcal{X}}\sum_{j\in\mathcal{N}_{s_k}}a(\omega(t),\bm{x})A_{s_kj}^k(\omega(t),{\bm{x}}(t))\right]\nonumber\\
&\ +\mathbb{E}\left[\sum_{i\in V}\sum_{k=1}^KQ_i^k(t)\right]\mathbb{E}\sum_{\bm{x}\in\mathcal{X}}a(\omega(t),\bm{x})\left[\sum_{j:i\in \mathcal{N}_j}{A}_{ji}^k(\omega(t),\bm{x}(t))-\sum_{j\in \mathcal{N}_i}{A}_{ij}^k(\omega(t),\bm{x}(t))\right]\nonumber\\
\le & 0 \label{ieq:rhs}
\end{align}
\end{footnotesize}
Therefore, taking expectation of both sides of (\ref{ieq:utilitycross}) and combining (\ref{ieq:lhs}) and (\ref{ieq:rhs}) yields
\begin{align}
&\mathbb{E}\left[\sum_{k=1}^K V\hat{\nabla}f_k(\hat{r}^I_k(t-1))(\hat{r}^*_k-\hat{r}_k(t-1))\right]   +\frac{\mathbb{E}[||\bm{Q}(t+1)||^2-||\bm{Q}(t)||^2 ]}{4}
\nonumber\\\le & \mathbb{E}\left[\alpha[(\hat{r}^I_k(t-1)-\hat{r}^*_k)^2-(\hat{r}^I_k(t)-\hat{r}^*_k)^2]\right]+C \label{ieq:utility}
\end{align}
By Lemma \ref{lemma:meangrad} and the concavity of $\tilde{f}_k$,
\begin{align*}
 \mathbb{E}\left[\sum_{k=1}^K V\hat{\nabla}f_k(\hat{r}^I_k(t-1))(\hat{r}^*_k-\hat{r}^I_k(t-1))\right] \ge \mathbb{E}[V\sum_{k=1}^K \tilde{f}(\hat{r}_k^*)-\tilde{f}(\hat{r}^I_k(t-1))].
 \end{align*}
  Plugging this in (\ref{ieq:utility}) and rearranging terms, we get
\begin{align}
&\mathbb{E}[V\sum_{k=1}^K \tilde{f}(\hat{r}_k^*)-\tilde{f}(\hat{r}^I_k(t-1))]\nonumber\\
\le & \mathbb{E}\left[\alpha[(\hat{r}^I_k(t-1)-\hat{r}^*_k)^2-(\hat{r}^I_k(t)-\hat{r}^*_k)^2]\right] + \frac{\mathbb{E}[||\bm{Q}(t)||^2-||\bm{Q}(t+1)||^2 ]}{4} +C \label{ieq:telescope}.
\end{align}
Ideally, we would want to sum (\ref{ieq:telescope}) over $t$ and obtain a bound on $\mathbb{E}\left[\sum_{t=1}^{T}\sum_{k=1}^K\tilde{f}_k(\hat{r}^*_k)-\tilde{f}_k(\hat{r}_k(t)) \right]$. However, the parallel instance paradigm brings intricacy to the argument. It stems from the fact that the job-size variables at different time slot may belong to different instances. To make the reasoning clearer, we will write the invoked instance at $t$ at $I_t$ (i.e., $\hat{r}^k(t)=\hat{r}^{I_t}_k(t)$), which makes the dependence explicit but may compromise readability.
First, note that at time $t$, our job-size decisions are $\{r^{I_t}_k(t)\}$, while the left-hand-side of (\ref{ieq:telescope}) is $\mathbb{E}[V\sum_{k=1}^K \tilde{f}(\hat{r}_k^*)-\tilde{f}(\hat{r}^{I_t}_k(t-1))]$. Since the job-size variables of an instance remain unchanged during the intervals when the instance is not invoked, summing the left-hand-side over time $t$, the resulting term differs from $\sum_{t=1}^T\mathbb{E}[V\sum_{k=1}^K \tilde{f}(\hat{r}_k^*)-\tilde{f}(\hat{r}^{I_t}_k(t))]=\sum_{t=1}^T\mathbb{E}[V\sum_{k=1}^K \tilde{f}(\hat{r}_k^*)-\tilde{f}(\hat{r}_k(t))]$ by at most $CV|\mathcal{I}|$, where $|\mathcal{I}|$ is the total number of instances in the reservoir at the end of the time horizon. Second, summing the right-hand-side of (\ref{ieq:telescope}) over time, the term $\frac{\mathbb{E}[||\bm{Q}(t)||^2-||\bm{Q}(t+1)||^2 ]}{4}$ telescopes, but the term $\alpha[(\hat{r}^{I_t}_k(t-1)-\hat{r}^*_k)^2-(\hat{r}^{I_t}_k(t)-\hat{r}^*_k)^2]$ only partially telescopes as the invoked instance $I_t$ may be different for different $t$. More specifically, again due to that the job-size variables of an instance do not change when un-invoked summing the right-hand-side of (\ref{ieq:telescope}) from $t=1$ to $T-1$, we obtain
\begin{align}
&\sum_{I\in\mathcal{I}}\alpha [(\hat{r}^{I}_k(t_I)-\hat{r}^*_k)^2-(\hat{r}^{I}_k(T)-\hat{r}^*_k)^2]+\frac{\mathbb{E}[||\bm{Q}(1)||^2-||\bm{Q}(T)||^2 ]}{4} +CT \nonumber\\
&\le C\alpha |\mathcal{I}| +CT \label{ieq:rhsbound},
\end{align}
where $t_I$ is the time that instance $I$ is created, and (\ref{ieq:rhsbound}) follows from that $||\bm{Q}(1)||$ is bounded by a constant while $||\bm{Q}(T)||^2$ is non-negative.

By the reasoning above, we can see that the key to bound the utility regret is to bound the total number of instance created $|\mathcal{I}|$. By the construction of the parallel-instance paradigm, $|\mathcal{I}|$ is bounded by the maximum delay experienced by the jobs.
We now state a natural assumption that can provide us a handle on the maximum delay through queue lengths.

\begin{assumption}
The network links are work conserving and each job travels through an acyclic route to the destination.
\end{assumption}
The assumption is satisfied by most networks. Under the assumption, using standard queueing-theoretic argument, we have $|\mathcal{I}|\le C\max_{t}\sum_{n\in V, k}Q^k_n(t)$. Using Theorem \ref{thm:workload}, it follows that $|\mathcal{I}|\le \tilde{O}(\sqrt{T})$ with probability at least 1-1/T, which implies that $\mathbb{E}[\mathcal{I}|]\le \tilde{O}(\sqrt{T})$
Therefore, summing (\ref{ieq:telescope}) over time, using (\ref{ieq:rhsbound}) and plugging in the value of $\alpha, V$, we have
\begin{align*}
&\sum_{t=1}^T\mathbb{E}[V\sum_{k=1}^K \tilde{f}(\hat{r}_k^*)-\tilde{f}(\hat{r}_k(t))]\\
\le & C\alpha\mathbb{E}[|\mathcal{I}|] + CT +CV\mathbb{E}[|\mathcal{I}|] \\
\le & \tilde{O}(T).
\end{align*}
Hence, we have 
\begin{align*}
&\sum_{t=1}^T\mathbb{E}[\sum_{k=1}^K \tilde{f}(\hat{r}_k^*)-\tilde{f}(\hat{r}_k(t))]\\
\le &  \tilde{O}(T^{3/4}),
\end{align*}
which demonstrates that the utility regret is of order $\tilde{O}(T^{3/4})$ and finishes the proof of Theorem \ref{thm:regret}.

\subsection{Proof of Auxiliary Results}
\begin{lemma}\label{lemma:aux1}
	Let $\bm{r}^*=(r_1^*,\ldots,r_k^*)$ be the optimal solution to $\mathcal{P}$. Let $\hat{\bm{r}}^*=(\hat{r}_1^*,\ldots,\hat{r}_k^*)$ be the optimal solution to $\mathcal{P}$ restricting to each $r_k\in [\delta,B-\delta]$. $\sum_{k=1}^Kf_k(r_k^*)-f_k(\hat{r}_k^*)\le C\delta$. 
\end{lemma}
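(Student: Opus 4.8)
The plan is to exhibit a single point $\tilde{\bm r}$ that is feasible for the \emph{restricted} problem (i.e.\ lies in $Cap(\mathcal{G})$ and has every coordinate in $[\delta, B-\delta]$) yet differs from the unrestricted optimizer $\bm r^*$ by only $O(\delta)$ in each coordinate. Once such a $\tilde{\bm r}$ is available, optimality of $\hat{\bm r}^*$ on the restricted problem gives $\sum_k f_k(\hat r_k^*) \ge \sum_k f_k(\tilde r_k)$, and the claim then drops out of the $L$-Lipschitz continuity of the $f_k$'s:
\[
\sum_{k=1}^K f_k(r_k^*) - f_k(\hat r_k^*) \;\le\; \sum_{k=1}^K f_k(r_k^*) - f_k(\tilde r_k) \;\le\; L\sum_{k=1}^K |r_k^* - \tilde r_k|.
\]

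The construction of $\tilde{\bm r}$ is where the only real content lies. A naive coordinate-wise clipping of $\bm r^*$ into $[\delta,B-\delta]$ does not suffice: clipping a coordinate \emph{down} to $B-\delta$ stays inside $Cap(\mathcal{G})$ by downward-closedness, but \emph{raising} a coordinate up to $\delta$ can push the vector out of the capacity region. To avoid this I would invoke the non-empty-interior (Slater) condition, which supplies a point $\bm\eta := (\eta,\ldots,\eta) \in Cap(\mathcal{G})$ with $\eta>0$; after shrinking $\eta$ if necessary via downward-closedness, I may assume $\eta \in (0,B)$. I would then take the convex combination
\[
\tilde{\bm r} := (1-\gamma)\,\bm r^* + \gamma\,\bm\eta, \qquad \gamma := \delta\cdot\max\{1/\eta,\ 1/(B-\eta)\},
\]
which lies in $Cap(\mathcal{G})$ by convexity of the capacity region, and is legitimate (i.e.\ $\gamma<1$) for $\delta$ small enough, which is exactly the regime $\delta=T^{-1/2}$ of interest.

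It then remains to verify the box constraint and bound the displacement. For the lower bound, $\tilde r_k \ge \gamma\eta \ge \delta$ by the choice of $\gamma$; for the upper bound, $\tilde r_k \le (1-\gamma)B + \gamma\eta = B - \gamma(B-\eta) \le B-\delta$, again by the choice of $\gamma$. Finally $|r_k^* - \tilde r_k| = \gamma\,|r_k^*-\eta| \le \gamma B$, so the Lipschitz estimate above gives
\[
\sum_{k=1}^K f_k(r_k^*) - f_k(\hat r_k^*) \;\le\; KLB\,\gamma \;=\; C\delta, \qquad C := KLB\max\{1/\eta,\ 1/(B-\eta)\},
\]
a constant independent of $T$ and $\delta$, as required.

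The main obstacle is feasibility rather than the final estimate: the entire argument hinges on having an interior point to ``pull'' every coordinate simultaneously away from both endpoints $0$ and $B$ while remaining inside the convex set $Cap(\mathcal{G})$. Everything after that is a one-line Lipschitz computation. The only mild care needed is ensuring $\eta\in(0,B)$ (free, by downward-closedness) and $\gamma\le 1$ (free, since $\delta\to 0$).
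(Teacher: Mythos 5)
Your proof is correct and takes essentially the same approach as the paper: both arguments pull $\bm{r}^*$ toward the Slater point $(\eta,\ldots,\eta)$ via a convex combination, verify the box constraints, and conclude from optimality of $\hat{\bm{r}}^*$ on the restricted problem together with $L$-Lipschitz continuity. The only difference is cosmetic: the paper uses coefficient $\delta/\eta$ (handling the lower endpoint) and then separately clips coordinates down to $B-\delta$, which preserves membership in $Cap(\mathcal{G})$ since decreasing a coordinate keeps the defining inequalities valid, whereas you enlarge the coefficient to $\gamma=\delta\max\{1/\eta,\,1/(B-\eta)\}$ so that one step enforces both endpoints.
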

\begin{proof}
	Since $\bm{\eta}=(\eta,\ldots,\eta)$ is feasible to $\mathcal{P}$ and $\mathcal{P}$ has convex feasibility region, we have $\tilde{\bm{r}}^*=\frac{\delta}{\eta}\bm{\eta}+(1-\frac{\delta}{\eta})\bm{r}^*$ is feasible to $\mathcal{P}$. Furthermore, observe that for each $k$, $\tilde{r}_k^*\ge \delta$, and by Lipschitz-continuity of $f_k$, $f_k(r^*_k)-f_k(\tilde{r}^*_k)\le C\delta$. Next, define $\bar{\bm{r}}^*$ as $\bar{r}^*_k=\tilde{r}^*_k$ if $\tilde{r}^*_k\le B-\delta$ and $\bar{r}^*_k=B-\delta$ otherwise. Note that for each $k$, $|\bar{r}^*_k - \tilde{r}^*_k|\le \delta$ and $\delta \le \bar{r}^*_k\le B-\delta$. Also, $\bar{\bm{r}}^*$ is feasible to $\mathcal{P}$. Hence, by Lipschitz-continuity of $f_k$, $f_k(\tilde{r}^*_k)-f_k(\bar{r}^*_k)\le C\delta$. Finally, from the definition of $\hat{\bm{r}}^*$, we have $\sum_{k=1}^Kf_k(\hat{r}^*_k)-f_k(\bar{r}^*_k)\ge 0$. Combine the analysis above and the lemma follows.
\end{proof}

\section{Additional Simulation Figures}\label{app:simulationfigure}
In this section, we show figures (Figure \ref{fig:parameter_value1}) of instantaneous utility under the P-GSMW policy with different parameter values.
\begin{figure}[H]
	\subfigure[]{
		\begin{minipage}[]{0.9\linewidth}
			\centering
			\vspace{-2mm}
			\includegraphics[width=1.0\linewidth]{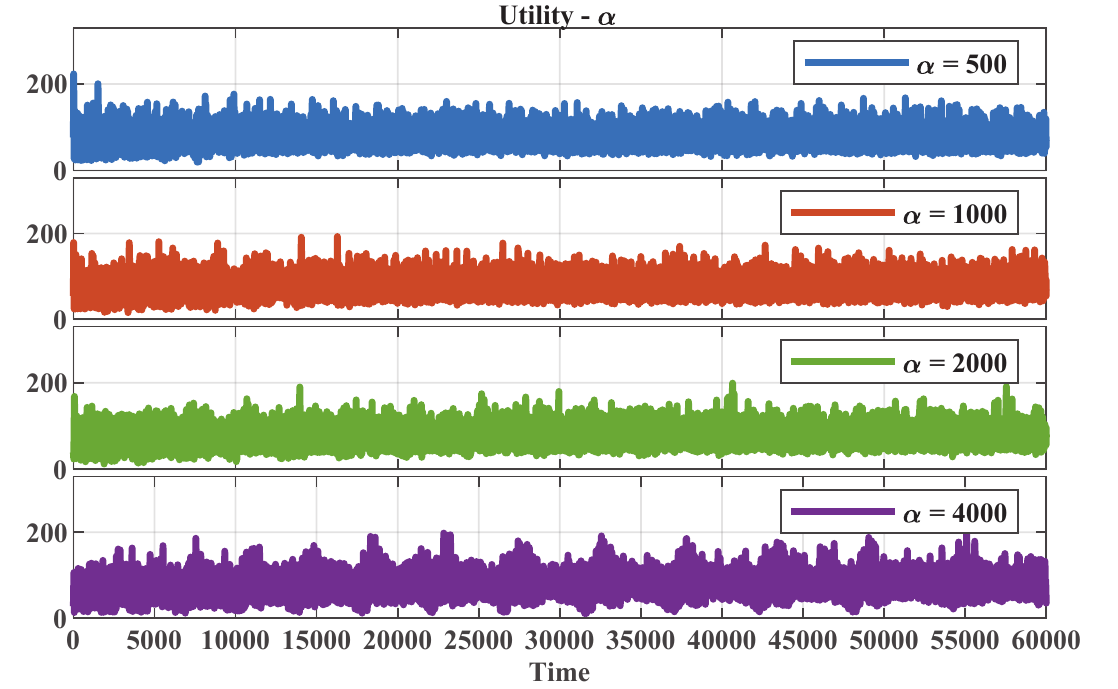}
			\vspace{-3mm}
			\label{fig:utility_alpha_1}
		\end{minipage}%
		
	}
	\subfigure[]{
		\begin{minipage}[]{0.9\linewidth}
			\centering
			\vspace{-2mm}
			\includegraphics[width=1.0\linewidth]{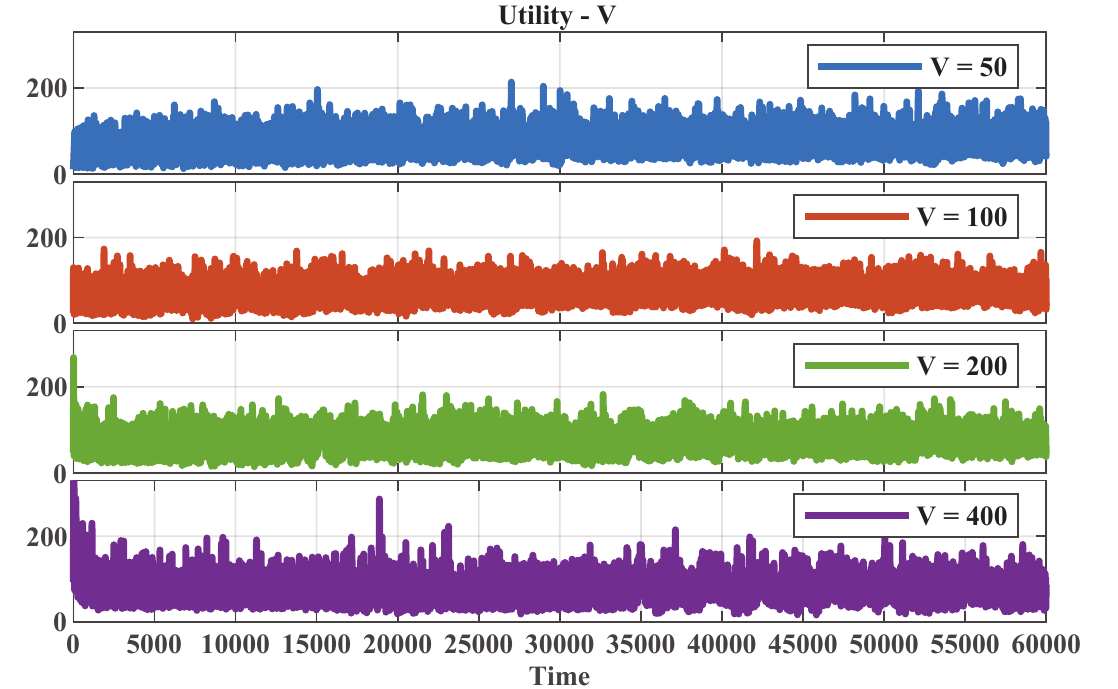}
			\vspace{-3mm}
			\label{fig:utility_V_1}
		\end{minipage}%
		
	}
	\subfigure[]{
		\begin{minipage}[]{0.9\linewidth}
			\centering
			\vspace{-2mm}
			\includegraphics[width=1.0\linewidth]{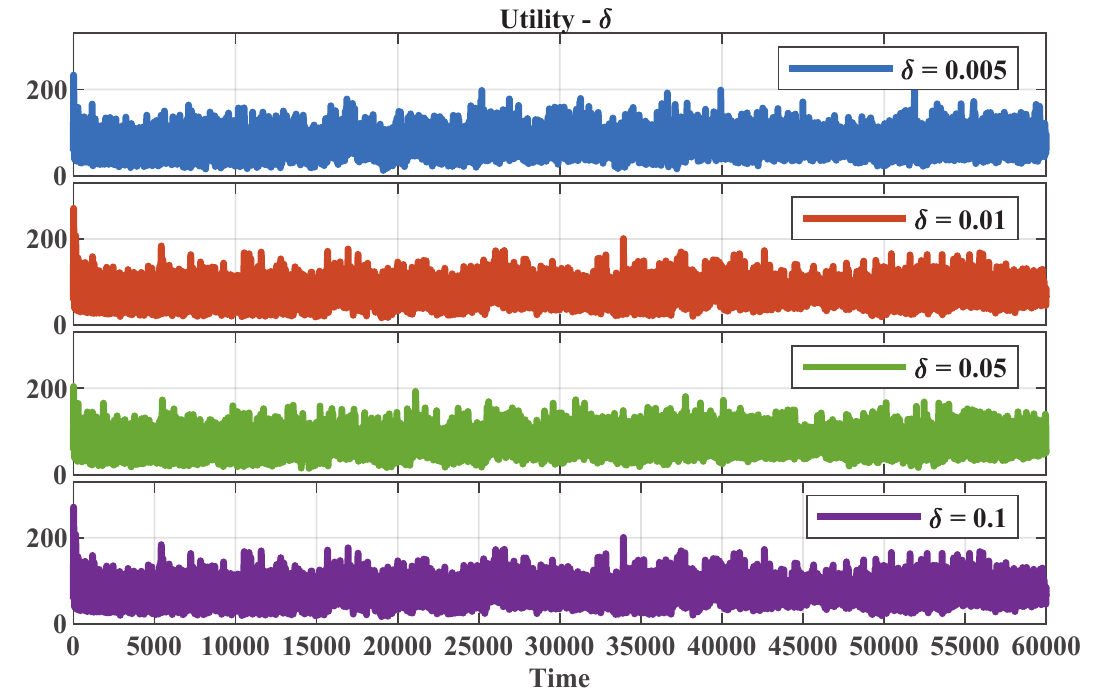}
			\vspace{-3mm}
			\label{fig:utility_delta_1}
		\end{minipage}%
		
	}
	\vspace{-3mm}
	\caption{The instantaneous utility under the P-GSMW policy with different parameter values.}
	\label{fig:parameter_value1}		
\end{figure}

\end{document}